\documentclass{article}
\usepackage{amsmath,amssymb, amsthm}
\usepackage[all]{xy}
\usepackage{graphicx}
\usepackage{tikz}
\usetikzlibrary{arrows}
\usetikzlibrary{calc}

\def\radek{16pt}
\newcommand{\horA}[3]
{
\draw (#1pt,0pt) rectangle (#2pt,\radek);
\node at ($(0.5*#2pt,0.5*\radek)+(0.5*#1pt,0)$){#3};
}
\newcommand{\dolA}[3]
{
\draw (#1pt,0pt) rectangle (#2pt,-\radek);
\node at ($(0.5*#2pt,-0.5*\radek)+(0.5*#1pt,0)$){#3};
}
\newcommand{\doldashA}[3]
{
\draw[dotted] (#1pt,0pt) rectangle (#2pt,-\radek);
\node at ($(0.5*#2pt,-0.5*\radek)+(0.5*#1pt,0)$){#3};
}
\newcommand{\hordashA}[3]
{
\draw[dotted] (#1pt,0pt) rectangle (#2pt,\radek);
\node at ($(0.5*#2pt,0.5*\radek)+(0.5*#1pt,0)$){#3};
}
\newcommand{\dolseda}[3]
{
\fill[black!15] (#1pt,0pt) rectangle (#2pt,-\radek);
\node at ($(0.5*#2pt,-0.5*\radek)+(0.5*#1pt,0)$){#3};
}
\newcommand{\horseda}[3]
{
\fill[black!15] (#1pt,0pt) rectangle (#2pt,\radek);
\node at ($(0.5*#2pt,0.5*\radek)+(0.5*#1pt,0)$){#3};
}
\newcommand{\doludolbra}[3]
{
\draw (#1pt,-\radek) .. controls ($(#1pt,-1.8*\radek)+0.25*(#2pt,0)-0.25*(#1pt,0)$) and ($(#2pt,-1.8*\radek)-0.25*(#2pt,0)+0.25*(#1pt,0)$) ..
node[fill=white] {\textcolor{black}{#3}}
(#2pt,-\radek);
}
\newcommand{\dolunulabra}[3]
{
\draw (#1pt,0) .. controls ($(#1pt,-0.4*\radek)+0.25*(#2pt,0)-0.25*(#1pt,0)$) and ($(#2pt,-0.4*\radek)-0.25*(#2pt,0)+0.25*(#1pt,0)$) ..
node[fill=white] {\textcolor{black}{#3}}
(#2pt,0);
}
\newcommand{\doludolpod}[3]
{
\draw (#1pt,-\radek) .. controls ($(#1pt,-1.8*\radek)+0.25*(#2pt,0)-0.25*(#1pt,0)$) and ($(#2pt,-1.8*\radek)-0.25*(#2pt,0)+0.25*(#1pt,0)$) ..
node[below] {\textcolor{black}{#3}}
(#2pt,-\radek);
}
\newcommand{\horuhorbra}[3]
{
\draw (#1pt,\radek) .. controls ($(#1pt,1.4*\radek)+0.25*(#2pt,0)-0.25*(#1pt,0)$) and ($(#2pt,1.4*\radek)-0.25*(#2pt,0)+0.25*(#1pt,0)$) ..
node[fill=white] {\textcolor{black}{#3}}
(#2pt,\radek);
}
\newcommand{\horuhornad}[3]
{
\draw (#1pt,\radek) .. controls ($(#1pt,1.4*\radek)+0.25*(#2pt,0)-0.25*(#1pt,0)$) and ($(#2pt,1.4*\radek)-0.25*(#2pt,0)+0.25*(#1pt,0)$) ..
node[above] {\textcolor{black}{#3}}
(#2pt,\radek);
}
\newcommand{\preghA}[1]
{
\node at (-#1pt,0.5*\radek) {$g:$};
\node at (-#1pt,-0.5*\radek) {$h:$};
}
\newcommand{\ramec}[2]
{
\draw (#1pt,-\radek) rectangle (#2pt,\radek);
}
\newcommand{\stredA}[2]%
{
\draw (#1pt,\radek)--(#1pt,0pt)--(#2pt,0pt)--(#2pt,-\radek);
}
\newcommand{\nadpisA}[3]
{
\path (#1pt,0pt)--(#2pt,0pt) node[above=-2pt, midway]{{\scriptsize #3}};
}
\newcommand{\podpisA}[3]
{
\path (#1pt,0pt)--(#2pt,0pt) node[below=-1pt, midway]{{\scriptsize #3}};
}
\newcommand{\uzel}[2]
{
\node at (#1pt,0pt) {#2};
}
\newcommand{\horuzel}[2]
{
\node at (#1pt,0.5*\radek) {#2};
}
\newcommand{\doluzel}[2]
{
\node at (#1pt,-0.5*\radek) {#2};
}
\newcommand{\jensigma}[2]
{
\draw (0,-\radek)--(0,\radek)--(#1pt,\radek)--(#1pt,0pt)--(#2pt,0pt)--(#2pt,-\radek) -- cycle;
}
\newcommand{\tecky}[2]
{
\draw[dotted] (#1pt,\radek)--(#2pt,\radek) (#1pt,0)--(#2pt,0) (#1pt,-\radek)--(#2pt,-\radek);
}

\newtheorem{thm}{Theorem}
\newtheorem{lem}[thm]{Lemma}
\newtheorem{cor}[thm]{Corollary}
\newtheorem{claim}{Claim}
\newtheorem*{lemnul}{Lemma}
\newtheorem*{thmnul}{Theorem}

\theoremstyle{definition}
\newtheorem{defn}[thm]{Definition}

\theoremstyle{remark}

\theoremstyle{remark}

\theoremstyle{remark}
\newtheorem{convention}[thm]{Convention}

\theoremstyle{remark}
\newtheorem{example}[thm]{Example}

\newcommand{\RP}{{\rm R}}
\newcommand{\normh}{h_{\M}}
\newcommand{\I}{\mathbb C}
\newcommand{\ii}{\mathbf c}
\newcommand{\iv}[1]{\overline{#1}\hspace{2pt}}

\newcommand{\abs}[1]{\vert#1\vert}

\newcommand{\p}{\mbox{\rm pref}}
\newcommand{\s}{\mbox{\rm suff}}
\newcommand{\kk}[1]{\underline{#1}} 
\newcommand{\cislo}[1]{\noindent {\marg #1}\hspace{3pt}}
\font\marg=cmbx8
\newcommand{\M}{\text{\tiny\bf m}}

\newcommand{\E}{{\rm Eq}}
\newcommand{\ee}{{\rm eq}}

\begin{document}

\title {Binary equality sets are generated by two words}

\author{\v St\v ep\'an Holub\\
\ \\
\small{Faculty of Mathematics and Physics, Charles University}\\
\small{186 75 Praha 8, Sokolovsk\'a 83, Czech Republic}\\
\small{\texttt{holub@karlin.mff.cuni.cz}}}
\date{}

\maketitle

\begin{abstract}
We show that the equality set $\E(g,h)$ of two non-periodic binary morphisms $g,h:A^*\to \Sigma^*$ is generated by at
most two words. If the rank of $\E(g,h)=\{\alpha,\beta\}^*$ is two, then $\alpha$ and $\beta$ start (and end) with
different letters.

This in particular implies that any binary language has a test set of cardinality at most two.
\end{abstract}

\newcounter{cond} 
\newcounter{cla} 
\newcounter{defi} 
\renewcommand{\thecond}{\roman{cond}} 
\renewcommand{\thecla}{\Alph{cla}} 
\renewcommand{\thedefi}{\alph{defi}} 

\section*{About this version}
This is a revised version of my paper published (with the same title) in Journal of Algebra 259 (2003), 1--42.

A nucleus of the paper was a part of my Ph.D. thesis supervised by Ale\v s Dr\'apal (\cite{myphd}\,). The proof was completed during the postdoctoral stay in Turku granted by Turku Centre for Computer Science (TUCS). I am grateful especially to Juhani Karhum\"aki for making that stay possible. When writing the paper I discussed the topic with Vesa Halava, Tero Harju, Juhani Karhum\"aki and Juha Kortelainen. 

After the publication, I received comments from Elena (Petre) Czeizler, Markku Laine and V\'aclav Fla\v ska. The present version was carefully read by Ji\v{r}\'{\i}  S\'ykora. I am indebted to all of them for their effort, their comments and suggestions. 

The most important difficulty discovered was Lemma \ref{hyper technicke lema}, which does not hold as it stays in the published text. The corrected formulation given in the present version is the one from an early draft of the paper. Before the publication, I decided to use a stronger claim, which is in fact never needed in the paper, and which, as it turned out, is fallacious. Elena pointed out some difficulties in the proof, and Markku found a counterexample, making it clear that the stronger claim cannot be rescued.

The present version was written in October 2007 and August 2012. The material is partly reorganized, terminology is revised and most proofs rewritten. I hope that the text is now substantially more readable than the journal version, which is in some places excessively complicated and discouraging. On the other hand, there are no new discoveries and the overall argument remains the same.  

\section{Introduction}
 Binary equality language, i.e., the set on which two binary morphisms agree, is the most simple non-trivial example of
an equality language, the notion of which was introduced in \cite{ArtoEquality}. Equality languages in general play an
important role in formal language theory. For a survey and bibliography see \cite[Section 5]{handbookmorphisms}.

In the binary case, the morphisms are defined on a monoid generated by two letters. It was for the first time
extensively studied by K. \v Cul\'{\i}k II and J. Karhum\"aki in \cite{CuKabinary}. There, the main claim of our work was conjectured,
viz. that a binary equality language is generated by at most two words as soon as at least one of the morphisms is
non-periodic (or, equivalently, injective). An important step towards the proof of the conjecture was made in
\cite{EKRequality} where the following partial characterization was obtained.

\begin{thm}\label{char} The equality set of two binary morphisms $g,h:A^*\to \Sigma^*$, where $A=\{a,b\}$, has the following structure:
\begin{list}{\rm(\Alph{cla})}{\usecounter{cla}}
\item \label{char1} If $g$ and $h$ are periodic, then either $\E(g,h)=\{\varepsilon\}$ or
$$\E(g,h)=\{\varepsilon\}\bigcup \{\alpha\in A^+ \  \vert\   \frac{\ \abs{\alpha}_a}{\ \abs{\alpha}_b}=k\}$$ for some $k\geq0$ or $k=\infty$.
\item \label{char2} If exactly one morphism is periodic, then $$\E(g,h)=\alpha^*$$ for some word $\alpha\in A^*$.
\item \label{char3} If both $g$ and $h$ are non-periodic, then either $$\E(g,h)=\{\alpha, \beta\}^*$$ for some words
$\alpha, \beta \in A^*$, or $$\E(g,h)=(\alpha\gamma^*\beta)^*$$ for some words $\alpha, \beta, \gamma \in A^+$.
\end{list}
\end{thm}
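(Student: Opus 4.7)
The theorem is a trichotomy based on how many of $g,h$ are periodic, so I would handle the three cases in order of increasing difficulty.

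\textbf{Case (A), both periodic.} Let $u,v$ be the periods, so that $g(x)=u^{p_x}$ and $h(x)=v^{q_x}$ for $x\in\{a,b\}$. For a nonempty $\alpha\in\E(g,h)$ the condition $g(\alpha)=h(\alpha)$ becomes
\[
u^{p_a|\alpha|_a+p_b|\alpha|_b}=v^{q_a|\alpha|_a+q_b|\alpha|_b}.
\]
Unless both sides are empty, this nontrivial equation between powers of $u$ and $v$ forces, by Lyndon--Sch\"utzenberger, $u$ and $v$ to be powers of a common primitive word, and the equation collapses to a single linear condition on $|\alpha|_a$ and $|\alpha|_b$. That condition is either unsatisfiable over $A^+$ (giving $\E(g,h)=\{\varepsilon\}$) or pins the ratio $|\alpha|_a/|\alpha|_b$ to a fixed value $k\in[0,\infty]$, matching the stated description.

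\textbf{Case (B), exactly one periodic.} Say $h(A^*)\subseteq v^*$ with $v$ primitive, and $g$ is non-periodic. Every $\alpha\in\E(g,h)$ satisfies $g(\alpha)=h(\alpha)\in v^*$. For binary morphisms, non-periodicity of $g$ is equivalent to $g(a)g(b)\ne g(b)g(a)$, which makes $g$ injective. Given $\alpha,\beta\in\E(g,h)$, the images $g(\alpha),g(\beta)$ lie in the cyclic monoid $v^*$ and hence commute; injectivity lifts this to $\alpha\beta=\beta\alpha$, so $\alpha,\beta$ are powers of a common primitive word $\gamma$. Thus $\E(g,h)\subseteq\gamma^*$. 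Moreover, if $\gamma^n\in\E(g,h)$ then $g(\gamma)^n=h(\gamma)^n$ inside $v^*$, which forces $g(\gamma)=h(\gamma)$, i.e.\ $\gamma\in\E(g,h)$ itself. Hence $\E(g,h)=\gamma^*$.

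\textbf{Case (C), both non-periodic.} This is the substantial case. Both $g,h$ are injective by the criterion just used, so $\E(g,h)$ has a well-defined minimal generating set $B$. If $|B|\le 2$, we land in the first subcase. Otherwise I would classify elements of $B$ by their first and last letters, an element of $\{a,b\}^2$: as soon as $|B|\ge 3$, some outer pair is repeated, so there exist generators $\mu_1,\mu_2\in B$ starting with the same letter and ending with the same letter. The equations $g(\mu_i)=h(\mu_i)$ then force long common prefixes and suffixes between the images of $g$ and $h$, and iterating Fine--Wilf-style periodicity arguments propagates this structure to every further generator of the same outer type, pinning the whole family into a single one-parameter form $\alpha\gamma^n\beta$. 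The main obstacle is preventing hybrid configurations where generators of different outer types would coexist with infinitely many of a single outer type; ruling this out is the combinatorial heart of the argument and is the reason the theorem is nontrivial.
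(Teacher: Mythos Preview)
Your treatments of cases (A) and (B) are essentially correct and standard; these are the easy parts and the paper simply cites them from the literature.

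Case (C) is where your proposal has a genuine gap. You yourself acknowledge that your ``outer type'' classification plus Fine--Wilf does not finish the argument, and indeed it does not: knowing that two generators share first and last letters gives you common prefixes and suffixes of their \emph{images}, but nothing forces those to align with the morphism structure in a way that produces the $\alpha\gamma^n\beta$ form. The obstacle you name---ruling out hybrid configurations---is exactly the whole theorem, and your sketch gives no mechanism for it.

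The paper's route is entirely different and rests on one structural observation you are missing. After passing to a principal pair one may assume $g$ is marked. The key lemma is then that along any solution $g(w)=h(w)$ the \emph{overflow} $h(v)^{-1}g(u)$ at an intermediate position determines the next letter uniquely \emph{unless} that overflow equals the distinguished word $z_h=g(ab)\wedge g(ba)$ (equivalently, $g(u)=h(v)z_h$). Hence any two incomparable elements of $\E(g,h)$ must first diverge at a prefix $\sigma$ with $g(\sigma)=h(\sigma)z_h$. One then shows (Lemma~\ref{ef revisited}) that a minimal word $e$ satisfying $z_hg(e)=h(e)z_h$ is determined by its first letter, so at each occurrence of the critical overflow there are at most two continuations. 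Analysing how these critical points can be arranged inside the two shortest generators (Lemma~\ref{presna struktura}) yields exactly the dichotomy: either there is a single critical point and $\ee(g,h)=\{\sigma\mu_a\tau,\sigma\mu_b\tau\}$, or the critical points recur and $\ee(g,h)=\zeta(\rho\mu)^*\rho\eta$, which is the $(\alpha\gamma^*\beta)^*$ alternative.

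So the missing idea is the critical overflow $z_h$ and its uniqueness property; without it there is no finite handle on the branching of $\E(g,h)$, and periodicity lemmas alone will not produce one.
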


The question remained open whether the second possibility of case (\ref{char3}), contradicting the conjecture, can
actually occur. In the present paper we show that the answer is negative and, moreover, if $\alpha$ and $\beta$ are
both nonempty, they start (and end) with different letters. This is formulated in the following main theorem.

\begin{thm}\label{hlavni veta}
Let $g,h:A^*\to \Sigma^*$ be non-periodic binary morphisms.
Let $\alpha$ and $\beta$, with $\alpha\neq\beta$, be nonempty minimal elements of $\E(g,h)$. Then
$$\p_1(\alpha)\neq\p_1(\beta)\quad\text{and}\quad\s_1(\alpha)\neq\s_1(\beta)\,.$$
\end{thm}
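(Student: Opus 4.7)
The plan is to argue by contradiction. Assume that $\alpha$ and $\beta$ are distinct nonempty minimal elements of $\E(g,h)$ but that, say, $\p_1(\alpha)=\p_1(\beta)=a$. The claim about suffixes will then follow by running the same argument on the reversed morphisms $\tilde g,\tilde h$ defined by $\tilde g(x)=g(x)^R$, so I focus on the prefix statement.

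First I would classify $\E(g,h)$ using Theorem \ref{char}. Case (\ref{char1}) is excluded since both morphisms are non-periodic, and case (\ref{char2}) gives $\E(g,h)=\gamma^*$, which has a unique nonempty minimal element, contradicting the hypothesis. Hence I am in case (\ref{char3}). In subcase $\E(g,h)=\{\alpha,\beta\}^*$ the minimal elements are exactly $\alpha$ and $\beta$, and I must derive a direct contradiction from $\p_1(\alpha)=\p_1(\beta)$. In the remaining subcase $\E(g,h)=(\alpha\gamma^*\beta)^*$, \emph{every} minimal nonempty element has the form $\alpha\gamma^i\beta$ and hence begins with $\p_1(\alpha)$; ruling out this shape is exactly the content of the main conjecture of the paper, so it is here that the heavy lifting must be done.

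The core combinatorial observation I would exploit is that $g(\alpha)=h(\alpha)$ begins with both $g(a)$ and $h(a)$; hence one is a prefix of the other, and assuming $|g(a)|\le|h(a)|$ I write $h(a)=g(a)u$. Stripping the leading $a$ from $\alpha$ and $\beta$ and the leading $g(a)$ from their $g$- and $h$-images yields shifted equations $g(\alpha')=u\cdot h(\alpha')$ and $g(\beta')=u\cdot h(\beta')$ on the truncations $\alpha',\beta'$. Iterating this peeling while tracking which next letter is being read in $\alpha$ and in $\beta$ gives a cascade of constraints relating the words $g(a),g(b),h(a),h(b),u$; the expected outcome is that either a common prefix of $\alpha$ and $\beta$ already lies in $\E(g,h)$, violating minimality, or the accumulated compatibility relations force $g(a)g(b)=g(b)g(a)$ (or the analogous $h$-relation), contradicting non-periodicity.

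The main obstacle is organising this bookkeeping into a finite, conclusive argument. The difficulty is concentrated in case (C)-2: when $\alpha$ and $\beta$ share a very long common prefix, the shifted equations can be propagated for many steps before collapsing, and one must rely on tools from combinatorics on words (the Fine--Wilf theorem, primitivity, and careful synchronization between the $g$- and $h$-parses) to certify that the process terminates with one of the two contradictions above. Hiding the entire combinatorial machinery of the paper inside this step is what makes the statement hard; the reduction to it, by contrast, is short.
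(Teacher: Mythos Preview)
Your high-level plan is right: argue by contradiction, reduce the suffix statement to the prefix one via reversal, and use Theorem~\ref{char} to land in case~(\ref{char3}). But the entire content of the theorem lies in what you call ``organising this bookkeeping into a finite, conclusive argument,'' and your peeling scheme does not supply that organisation; as written it is a restatement of the problem, not a proof.

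The paper's route is structurally different from iterated letter-stripping. First it passes to a \emph{principal} pair, which forces one morphism, say $g$, to be marked (Lemmas~\ref{principalni abeceda}--\ref{principalni markovanost}). The key object is then not the overflow $u=g(a)^{-1}h(a)$ you introduce, but $z_h=h(ab)\wedge h(ba)$: Lemma~\ref{critical} shows that $z_h$ is the \emph{only} overflow after which two distinct continuations are possible, and this pins any counterexample to the shape $\sigma\nu_a,\sigma\nu_b$ with $g(\sigma)=h(\sigma)z_h$ (Lemma~\ref{counterexample structure}). From there the argument is a finite case analysis on whether $\iv g,\iv h$ are marked, on $\p_1(\sigma),\s_1(\sigma)$, and on whether $z_h$ commutes with $g(b)$ or $h(a)$; the stubborn cases are handled by decomposing into blocks of $\ii(g,\normh)$, passing to \emph{successor morphisms}, and invoking minimality of a shortest counterexample (Lemmas~\ref{hyper technicke lema}, \ref{shortest1}, \ref{shortest2}).

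Your naive peeling never isolates $z_h$, so it cannot locate the unique branching point, and without that there is no reason the cascade of ``shifted equations'' should collapse into finitely many cases rather than proliferate. The concrete ingredients you are missing are: the reduction to a marked morphism, the identification of $z_h$ as the unique critical overflow, the block decomposition of $\I(g,\normh)$, and the successor-morphism descent on a shortest counterexample. These are not bookkeeping conveniences; they are precisely the ideas that turn an unbounded iteration into a terminating proof.
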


As a trivial consequence we have a solution of the original question.

\begin{thm}\label{hlavni veta B}
Equality language of two nonperiodic binary morphisms is generated by at most two words. 
\end{thm}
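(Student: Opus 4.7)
The plan is to deduce Theorem \ref{hlavni veta B} as an almost immediate pigeonhole consequence of Theorem \ref{hlavni veta}. The key fact I would invoke is standard: any submonoid of the free monoid $A^*$ has a unique minimal generating set, namely its collection of nonempty minimal elements, i.e.\ those nonempty words in the submonoid that cannot be written as the product of two nonempty elements of the submonoid. Applied to $M=\E(g,h)$, this means that $\E(g,h)$ is generated by its set of nonempty minimal elements, and the size of that set is precisely the minimum number of generators required.

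The main step is then pure pigeonhole. Suppose $\E(g,h)$ had three pairwise distinct nonempty minimal elements $\alpha_1,\alpha_2,\alpha_3$. Since $A=\{a,b\}$ has only two letters, two of them, say $\alpha_i$ and $\alpha_j$ with $i\neq j$, would necessarily satisfy $\p_1(\alpha_i)=\p_1(\alpha_j)$. This directly contradicts Theorem \ref{hlavni veta}, which guarantees that distinct nonempty minimal elements of $\E(g,h)$ start with different letters (the symmetric statement about $\s_1$ is not even needed here). Hence there are at most two nonempty minimal elements, and so $\E(g,h)$ is generated by at most two words.

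There is essentially no obstacle in this deduction: the entire difficulty is concentrated in Theorem \ref{hlavni veta} itself, the proof of which constitutes the bulk of the paper. Theorem \ref{hlavni veta B} is in effect a repackaging of that structural first/last-letter statement into the original cardinality bound conjectured in \cite{CuKabinary}; in particular, in combination with Theorem \ref{char}(\ref{char3}) it rules out the pathological family $(\alpha\gamma^*\beta)^*$ with $\gamma\neq\varepsilon$, whose nonempty minimal elements $\alpha\gamma^k\beta$ all share the same first and last letter and are infinitely many for large $k$.
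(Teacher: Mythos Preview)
Your proposal is correct and matches the paper's approach exactly: the paper does not even spell out a proof of Theorem~\ref{hlavni veta B}, introducing it with ``As a trivial consequence we have a solution of the original question,'' and the trivial consequence is precisely the pigeonhole argument you give, relying on the fact (stated in Section~\ref{preliminaries}) that $\E(g,h)$ is freely generated by $\ee(g,h)$.
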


I am not aware of any way how to prove
Theorem \ref{hlavni veta B} not using Theorem \ref{hlavni veta}.
\medskip

\noindent {\it Remark.} Later, in \cite{unique}, it has been shown that the equality sets generated by two words have a precise form. Namely, the following theorem holds true. 
\begin{thmnul}
Let $g$ and $h$ be distinct nonperiodic binary morphisms such that $\E(g,h)$ is generated by two words. Then there is a positive integer $i$ such that
$$ \E(g,h)=\{a^ib,\,ba^i\}^*, $$
up to renaming of the letters.
\end{thmnul}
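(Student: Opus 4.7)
The plan is to apply Theorem \ref{hlavni veta} to obtain a structural dichotomy on the two generators $\alpha,\beta$, and then to eliminate all remaining configurations except $\{\alpha,\beta\}=\{a^ib,\,ba^i\}$.

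By Theorem \ref{hlavni veta}, $\alpha$ and $\beta$ begin with different letters and end with different letters. After renaming the alphabet, assume $\p_1(\alpha)=a$ and $\p_1(\beta)=b$. Two cases remain: either (A) $\s_1(\alpha)=a$ and $\s_1(\beta)=b$, or (B) $\s_1(\alpha)=b$ and $\s_1(\beta)=a$. I would first eliminate case (A). In this case, elements of $\{\alpha,\beta\}^*$ decompose into maximal $\alpha$-runs and $\beta$-runs, and the letter-block boundaries inside any word of $\E(g,h)$ are predictable. Comparing $g$ and $h$ on words of the form $\alpha^k\beta^\ell$, $\alpha\beta\alpha$, and $\alpha\beta^2$ (all in $\E(g,h)$), and using elementary cancellation, one derives enough identities to force $g(a)=h(a)$ and $g(b)=h(b)$, i.e.\ $g=h$, contradicting the hypothesis that $g$ and $h$ are distinct.

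In case (B), write $\alpha=aub$ and $\beta=bva$ with $u,v\in A^*$; the target is $u=v=a^{i-1}$ for some $i\geq 1$. I would next show that $u,v\in\{a\}^*$. Suppose for contradiction that $u$ contains the letter $b$; write $\alpha=au_1bu_2b$. The word $\alpha\beta\in\E(g,h)$ then exhibits a double occurrence of $b$ internally. By analyzing longer elements such as $\alpha\beta^2$ or $\alpha^2\beta$ and tracking how the unique factorization of their images into $g(\alpha)$- and $g(\beta)$-blocks must be compatible with that into $h(\alpha)$- and $h(\beta)$-blocks, one can exhibit either an alternative admissible factorization (contradicting the hypothesis that $\E(g,h)=\{\alpha,\beta\}^*$ has rank exactly two with $\alpha,\beta$ minimal), or a common period of $g(a)$ and $g(b)$, contradicting non-periodicity of $g$. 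A symmetric argument rules out a $b$ inside $v$.

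Finally, with $u=a^p$ and $v=a^q$ for non-negative integers $p,q$, I would show $p=q$. The identity $g(\beta\alpha)=h(\beta\alpha)$ reads
\[
g(b)\,g(a)^{p+q+2}\,g(b)=h(b)\,h(a)^{p+q+2}\,h(b),
\]
while $g(\alpha\beta)=h(\alpha\beta)$ reads
\[
g(a)^{p+1}\,g(b)^{2}\,g(a)^{q+1}=h(a)^{p+1}\,h(b)^{2}\,h(a)^{q+1},
\]
and from these, combined with $g(\alpha)=h(\alpha)$, $g(\beta)=h(\beta)$ and non-periodicity of $g$ and $h$, a period/length comparison forces $p=q$. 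Setting $i=p+1$ yields $\alpha=a^ib$ and $\beta=ba^i$ as claimed. The main obstacle is the middle step: ruling out internal $b$'s in $u$ and $v$ requires a careful combinatorial analysis of where images under $g$ and $h$ can legitimately be cut into the elementary blocks $g(\alpha),g(\beta),h(\alpha),h(\beta)$, leveraging both the fact that $\alpha$ and $\beta$ are \emph{minimal} in $\E(g,h)$ and the non-periodicity hypothesis.
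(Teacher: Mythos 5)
First, a point of reference: this paper does \emph{not} prove the quoted theorem; it is cited from \cite{unique}, with only the remark that the proof rests on Theorem \ref{hlavni veta}. So your proposal has to stand on its own. Its skeleton is reasonable: invoking Theorem \ref{hlavni veta}, reducing (up to renaming) to the two start/end configurations, and your final step is actually correct and easy --- from $g(\alpha)=h(\alpha)$ and $g(\beta)=h(\beta)$ with $\alpha=a^{p+1}b$, $\beta=ba^{q+1}$ one gets $(p-q)\bigl(\abs{g(a)}-\abs{h(a)}\bigr)=0$ by counting lengths, and $\abs{g(a)}=\abs{h(a)}$ forces $g(a)=h(a)$, $g(b)=h(b)$, i.e.\ $g=h$; hence $p=q$.

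The genuine gap is in both elimination steps, and it is the same gap twice: every word you propose to analyze lies in $\{\alpha,\beta\}^*$, and for such $w$ the equality $g(w)=h(w)$ is an \emph{automatic} consequence of $g(\alpha)=h(\alpha)$ and $g(\beta)=h(\beta)$, since both sides factor identically into the equal blocks $g(\alpha)=h(\alpha)$ and $g(\beta)=h(\beta)$. No cancellation performed on these identities can produce new constraints, in particular it cannot yield $g=h$ in your case (A). Concretely, take $g(a)=aba$, $g(b)=b$, $h(a)=a$, $h(b)=bab$: these are distinct nonperiodic morphisms agreeing on $\alpha=abba$ (starts and ends with $a$) and $\beta=baab$ (starts and ends with $b$), hence on all of $\{\alpha,\beta\}^*$. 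So the case (A) hypotheses you actually use have a model with $g\neq h$, and your argument cannot close. What fails in this example is the hypothesis you never use: here $\E(g,h)\supseteq\{ab,ba\}^*\supsetneq\{abba,baab\}^*$, so $\alpha,\beta$ are not the minimal generators. Any correct elimination of case (A) must therefore \emph{exhibit an element of $\E(g,h)$ outside $\{\alpha,\beta\}^*$}, contradicting minimality. The same criticism applies to your case (B) middle step: ``comparing the factorization of the images into $g(\alpha),g(\beta)$-blocks with that into $h(\alpha),h(\beta)$-blocks'' is vacuous, because these two factorizations are literally the same; you do name the right contradiction targets (minimality, non-periodicity), but the proposed mechanism cannot reach them. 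Reaching them requires letter-level analysis of the overflows between the $g$-cuts and the $h$-cuts inside $g(w)=h(w)$ --- markedness, $z_h$, the blocks of $\ii(g,\normh)$ and successor morphisms, i.e.\ the machinery of Sections 3--8 of this paper --- which is exactly how \cite{unique} proceeds and which your sketch never engages.
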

The proof is based on Theorem \ref{hlavni veta}.
\medskip

A closely related problem is the size of a test set for binary languages. Indeed, if two morphisms agree on a language, it must be a subset of their equality language. In \cite{EKRequality}, it
is shown that all binary languages have a three element test set. Our result allows to cut down this bound to two. Let
us remark that this improvement is not a simple consequence of the fact that the equality language is generated by two
words --- the difference in the first (or last) letter is a necessary ingredient.

\section{Preliminaries}
\label{preliminaries}
In this section, we fix our notation and recall some basic facts. For a reference and unproved claims see  \cite{handbookcombinatorics} or \cite{oldlot}.
If $\Sigma$ is an alphabet, then let $\Sigma^*$ be the free monoid,
 and $\Sigma^+$ the free semigroup generated by $\Sigma$. The empty word is denoted by $\varepsilon$. 
 Any subset of $\Sigma^*$ is called a {\em language}. Let $A$ denote the two-letter alphabet $\{a,b\}$.

The length of the word is denoted by $\abs{u}$, and $\abs u_x$ denotes the number of occurrences of the letter $x$ in
$u$. 
 A {\em
prefix} of $u$ is any word $v\in \Sigma^*$ such that there exists a word $v'\in\Sigma^*$ with $u=vv'$. The set of all
prefixes of $u$ is denoted by $\p(u)$. A prefix $v$ of $u$ is {\em proper} if $v\neq\varepsilon$ and $v\neq u$.
Similarly, {\em suffix} and {\em proper suffix} are defined. The set of all suffixes of $u$ is denoted by $\s(u)$. The
first (the last resp.) letter of a nonempty word $u$ is denoted by $\p_1(u)$ ($\s_1(u)$ resp.). A word $v$ is
called a {\em factor} of $u$ if there exist words $w,w'\in \Sigma^*$ such that $u=w\,v\,w'$.

If $v\in \p(u)$
or $u\in \p(v)$, then we say that $u$ and $v$ are \emph{prefix-comparable} (or simply \emph{comparable}). The maximal common prefix of words $u$ and
$v$ is denoted by $u \wedge v$. 
If $u$ and $v$ are words, then the maximal {\em $u$\/-prefix} of $v$ is the maximal prefix of $v$ that is also a prefix of $u^i$ for some $i$.
Analogously, we define the maximal $u$-suffix of $v$.
We say that two words are \emph{suffix-comparable} if one is a suffix of the other.

Positive powers $u^n$ of a word are defined as usual, with $u^0=\varepsilon$.
We shall sometimes use also negative powers and work with elements of the free group, in order to simplify notation. This should not cause any confusion.  For example, if $u$ and $v$ are comparable, then we shall write $u^{-1}v$ even if $v$ is a proper prefix of $u$. In such a case, when $u=vw$, we have $u^{-1}v=w^{-1}$.

We shall define regular languages by regular expressions in a standard way. In particular, the language $\{u^i\ |\ i\geq 1\}$ is denoted by $u^+$ and $u^*=u^+\cup \,\{\varepsilon\}$. We say that $v$ is a prefix (suffix, factor resp.) of $u^+$ if $v$ is a prefix (suffix, factor resp.) of $u^i$ for some $i\geq 1$.

A nonempty word $u$ is called {\em primitive} if and only if $u=v^n$ implies $u=v$. The {\em primitive root} of a nonempty word $u$ is the
(uniquely given) primitive word $r$ such that $u\in r^+$. Words $u$ and $v$ are called  {\em conjugate} if $u=ww'$
and $v=w'w$ for some words $w$ and $w'$.

If we speak about minimality or maximality of some element, the implicit ordering is the prefix one, i.e., $v\leq u$ if
and only if $v\in\p(u)$, and $v< u$ if moreover $v\neq u$. (While by the {\em shortest} word we mean the word with the smallest length.)

Let $u\in \Sigma^+$ be a word $u=l_1l_2\dots l_d$, with $d=\abs{u}$ and $l_i\in \Sigma$. Then the {\em reversal} of
the word $u$, denoted by $\iv u$, is obtained by inverting the order of the letters, viz. $$\iv u=l_dl_{d-1}\dots
l_1\,.$$ Let $g$ be an arbitrary morphism. The {\em reversal} of $g$ is the morphism denoted by $\iv g$, which has the same range and domain as $g$, and is defined
by
$$\iv g(x)=\iv{g(x)},$$ for each $x\in \Sigma$.
Note that in general $\iv g(u)$ does not equal to $g(\iv u)$ nor to $\iv{g(u)}$. Instead
$$\iv g(\iv u)=\iv{g(u)}. $$
All concepts and reasonings regarding prefixes are valid analogously for suffixes, reversals considered. We shall often
use the fact.

A morphism $g$ defined on $\Sigma$ is called {\em erasing} if $g(x)$ is empty for some $x\in \Sigma$. A morphism $g$ is {\em periodic} if there is a word $t$ such that $g(x)\in t^*$, for all words $x$ (or, equivalently, all letters $x$). Note that a binary morphism is periodic as soon as it is erasing. 

Let $S=T^+$ be a subsemigroup of $\Sigma^+$ generated by a set $T$.  The {\em rank} of $T$ is the cardinality of the
minimal set generating $S$. We can write
$$\mbox{\rm rank}(T)=\mbox{\rm rank}(S)=\mbox{\rm Card}(S \setminus S\cdot S)\,.$$
By the rank of a monoid $M$ we mean the rank of the semigroup $M\setminus \{\epsilon\}$.

 It is a well known fact that for
each set $M\subset \Sigma^+$ there exists the smallest free subsemigroup of $\Sigma^+$ containing $M$ and called its
{\em free hull}. A set generating a free semigroup is called a \emph{code}. If any two distinct elements of a code are neither prefix nor suffix comparable, the set is called a \emph{bifix code}. 

The {\em equality set} of two morphisms $g,h:\Delta^*\to \Sigma^*$ is defined by
$$\E(g,h)=\{u\in \Delta^*\ |\ g(u)=h(u)\}\,.$$
It is easy to verify that the set $\E(g,h)$ is a free submonoid of $\Delta^*$ generated by the set of its minimal
elements
$$\ee(g,h)=\E(g,h)\setminus (\E(g,h)\setminus \{\varepsilon\})^2\setminus\{\varepsilon\}\,.$$
Note that $\ee(g,h)$ is a bifix code.

 Let $g:A^*\to \Sigma^*$ be a nonperiodic binary morphism. By $z_g$ we denote the maximal common prefix of $g(ab)$ and
$g(ba)$, i.e.
$$ z_g=g(ab)\wedge g(ba)\,.$$
Since $g$ is nonperiodic, we have $\abs{z_g}<\abs{g(a)}+\abs{g(b)}$ by Lemma \ref{prim} below. If
$\p_1(g(a))\neq\p_1(g(b))$, i.e. $z_g=\varepsilon$, we say that $g$ is {\em marked}.

 Similarly we define $\kk z_g$ as the maximal common suffix of $g(ab)$ and $g(ba)$.
Note that
$$\kk z_g=\iv{\iv{g(ab)}\wedge \iv{g(ba)}}=\iv{z_{\iv g}} $$
and $\kk z_g=\varepsilon$ is equivalent to $\iv g$ being marked.

Cartesian product $\Delta^*\times \Delta^*$ is the set of ordered pairs $(u,v)$ of words. It can be seen as a monoid with
operation of catenation defined by $(u,v)(u',v')=(uu',vv')$, with the unit $(\varepsilon, \varepsilon)$. Such a monoid
is obviously not free, it is even not isomorphic to a submonoid of a free monoid.

 Let $g,h :\Delta^*\to \Sigma^*$ be two morphisms. The subset of $\Delta^*\times \Delta^*$ denoted by $\I(g,h)$ and defined by
 $$\I(g,h)=\{(u,v)\ |\ g(u)=h(v)\}$$
will be called the {\em coincidence set} of morphisms $g$ and $h$. It is generated by the set
$$\ii(g,h)=\I(g,h)\setminus (\I(g,h)\setminus \{(\varepsilon,\varepsilon)\})^2\setminus \{(\varepsilon,\varepsilon)\}\,.$$
Any pair $(u,v)\in \I(g,h)$ can be uniquely factorized into minimal pairs $(u_i,v_i)$ satisfying $g(u_i)=h(v_i)$. 
This is formulated in the following lemma.
 \begin{lem}\label{freely} Let $g$ and $h$ be non-erasing morphisms. Then $\I(g,h)$ is, as a submonoid of $\Delta^*\times \Delta^*$, freely generated by $\ii(g,h)$. Moreover, the set $\ii(g,h)$ is a bifix code.
\end{lem}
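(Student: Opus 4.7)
The plan is built around the obvious cancellation property of $\I(g,h)$: whenever $(u,v),(u',v')\in\I(g,h)$ and $(u,v)$ is a coordinatewise prefix of $(u',v')$, meaning $u\in\p(u')$ and $v\in\p(v')$, then the quotient pair $(u^{-1}u',\,v^{-1}v')$ again lies in $\I(g,h)$. Indeed, from $g(u)g(u^{-1}u')=g(u')=h(v')=h(v)h(v^{-1}v')$ together with $g(u)=h(v)$, left cancellation in $\Sigma^*$ yields $g(u^{-1}u')=h(v^{-1}v')$. The analogous statement for coordinatewise suffixes follows by right cancellation (or by passing to reversals of $g$ and $h$).

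From this the bifix code property of $\ii(g,h)$ is immediate: if two distinct minimal pairs were coordinatewise prefix-comparable, the above cancellation would express the longer one as a product of two nontrivial elements of $\I(g,h)$, contradicting its minimality; the suffix case is symmetric. To obtain unique factorization I compare first factors in a would-be double factorization
$$(u_1,v_1)\cdots(u_m,v_m)=(u_1',v_1')\cdots(u_n',v_n'),$$
with all factors in $\ii(g,h)$. The words $u_1,u_1'$ are both prefixes of $u_1\cdots u_m=u_1'\cdots u_n'$, hence one is a prefix of the other; say $u_1\in\p(u_1')$. Applying $g$ gives $h(v_1)\in\p(h(v_1'))$, while $v_1,v_1'$ are themselves prefix-comparable as prefixes of $v_1\cdots v_m=v_1'\cdots v_n'$. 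Non-erasingness of $h$ excludes $v_1'<v_1$, hence $v_1\in\p(v_1')$. The bifix property just established then forces $(u_1,v_1)=(u_1',v_1')$, and an easy induction on $m$ closes the argument.

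The whole proof is essentially bookkeeping; the one nontrivial step, and the only place where non-erasingness is invoked, is the propagation of prefix-comparability from one coordinate to the other in the \emph{same direction}. If either morphism could erase a letter, a nonempty factor $v_1^{-1}v_1'$ could map to $\varepsilon$ and the cancellation would cease to witness a nontrivial factorization, so the hypothesis is exactly what is needed. Beyond this point there is no real obstacle.
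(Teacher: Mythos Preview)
Your argument is correct. The paper itself states this lemma without proof, treating it as a basic fact, so there is no paper proof to compare against; your approach is exactly the standard one. The only step you leave implicit is that $\ii(g,h)$ actually generates $\I(g,h)$, but this follows by the obvious induction on $|u|+|v|$ using the same cancellation property together with the observation (from non-erasingness) that a nontrivial element of $\I(g,h)$ has both coordinates nonempty.
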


 Note that $(u,u)$ is an element of $\I(g,h)$ for each $u\in \E(g,h)$, and $\E(g,h)$ is given uniquely by $\I(g,h)$ as
 $$\E(g,h)=\{u \ |\ (u,u)\in \I(g,h)\}\,.$$

We present several combinatorial lemmas for future (often implicit) reference. Following three lemmas are part of the
folklore.

\begin{lem}\label{prim}
The words $u$ and $v$ commute if and only if they have the same pri\-mi\-tive root.
\end{lem}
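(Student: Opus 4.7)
The plan is to handle the two directions separately. The ``if'' direction is immediate: if $u = r^m$ and $v = r^n$ share a primitive root $r$, then $uv = r^{m+n} = vu$, which I would dispatch in a single sentence.

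For the converse I would induct on $|u|+|v|$, assuming both words nonempty (the primitive root being defined only in that case). Up to symmetry take $|u|\le|v|$. Comparing the prefixes of length $|v|$ on the two sides of the identity $uv=vu$, I conclude that $u$ is a prefix of $v$; write $v = uw$. Substituting and cancelling the leftmost $u$ rewrites $uv = vu$ as $uw = wu$, with strictly smaller total length $|u|+|w|<|u|+|v|$. If $w=\varepsilon$ then $v=u$ and we are done; otherwise the induction hypothesis yields a common primitive root $r$ of $u$ and $w$, whence $v = uw$ is also a power of $r$, so $r$ is a common primitive root of $u$ and $v$.

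The only point requiring a little care is the implicit uniqueness of the primitive root: a nonempty word is a power of a unique primitive word. I would either quote this as a standard fact or derive it inside the same induction (if $u = s^k = t^\ell$ with $s,t$ primitive, then $s$ and $t$ commute at a smaller scale, hence are equal by induction). I do not foresee any substantive obstacle; the whole argument is a single self-reducing step on $|u|+|v|$, with the prefix-comparison observation doing all the combinatorial work.
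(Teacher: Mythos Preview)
Your argument is correct; the induction on $|u|+|v|$ via the factorisation $v=uw$ and the reduction to $uw=wu$ is the standard textbook proof (essentially the Euclidean algorithm on words), and your handling of the uniqueness of the primitive root is appropriate. Note, however, that the paper does not give a proof of this lemma at all: it is introduced as ``part of the folklore'' in the preliminaries and left unproved, so there is no approach in the paper to compare yours against.
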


\begin{lem}[Periodicity Lemma]
Let $u^+$ and $v^+$ have a common prefix of length $\abs{u}+\abs{v}$. Then the words $u$ and $v$ commute.
\end{lem}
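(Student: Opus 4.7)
The plan is to reduce the statement to a single cancellation identity. First I would assume without loss of generality that $\abs{u}\geq\abs{v}$ and dispose of the trivial case $v=\varepsilon$. Let $w$ denote the common prefix of $u^+$ and $v^+$ of length $\abs{u}+\abs{v}$. Since $\abs{v}\leq\abs{u}\leq\abs{w}$, both $u$ and $v$ are prefixes of $w$, and hence $v$ is a prefix of $u$.

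Next, I would identify $w$ in two different ways. From the $u^+$ side, because $v$ is a prefix of $u$, the prefix of $u^+$ of length $\abs{u}+\abs{v}$ is simply $uv$. From the $v^+$ side, write $w = v^{k}r$, where $k\geq 1$ and $r$ is a (possibly empty) proper prefix of $v$; comparing lengths gives $u = v^{k-1}r$. Substituting into $w=uv$ yields
\[
v^{k}r \;=\; uv \;=\; v^{k-1}r\,v,
\]
and cancelling $v^{k-1}$ on the left produces the key identity $vr = rv$.

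From this, commutativity of $u$ and $v$ is immediate: $uv = v^{k-1}rv = v^{k-1}vr = v\cdot v^{k-1}r = vu$. (Alternatively, Lemma~\ref{prim} applied to $vr = rv$ gives a common primitive root that absorbs $u = v^{k-1}r$ as well.) The only subtle point is the identification $w = uv$, which relies on $v$ being a prefix of $u$; this is precisely why the reduction $\abs{u}\geq\abs{v}$ is essential. Beyond that, no real obstacle arises — the whole argument boils down to one cancellation.
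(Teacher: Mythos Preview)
Your argument is correct. The paper does not actually prove this lemma; it lists the Periodicity Lemma among three results that are ``part of the folklore'' and gives no proof, so there is nothing to compare against beyond noting that your elementary cancellation argument is a perfectly standard way to establish it.
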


We shall often use the following lemma. It is based on the well known fact that a primitive word $t$ cannot satisfy
equality $tt=utv$, with $u$ and $v$ nonempty.

\begin{lem}\label{factor}
\begin{list}{\rm(\Alph{cla})}{\usecounter{cla}}
\item \label{factor0} Let $ww=uwv$. Then $u$, $v$ and $w$ commute.
\item \label{factor3} Let $uw$ be a prefix of $w^+$. Then $u$ and $w$ commute.
\item \label{factor1} Let $sw$ be a factor of  $w^+$. Then $s$ is a suffix of $w^+$.
\item \label{factor3a} Let $uw$ be a suffix of $w^+$ and let $w$ be a prefix of $uw$. Then $u$ and $w$ commute.
\item \label{factor4} Let $u_1$, $u_2$, $w$, $w' \in\Sigma^+$  be words such that $w'$ and $w$ are conjugate, $\abs{u_1}\leq\abs{u_2}$, and
the words $u_1w'$, $u_2w'$ are prefixes of $w^{+}$. Then $u_1$ is a suffix of $u_2$ and $u_2u_1^{-1}$ commutes with
$w$.
\end{list}
\end{lem}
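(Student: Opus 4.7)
All five parts exploit the primitive word property cited just before the lemma: if $r$ is primitive then $rr = urv$ is impossible for nonempty $u,v$. Throughout let $r$ be the primitive root of $w$, so $w = r^k$ for some $k \geq 1$.

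For (A), the equation $ww = uwv$ is $r^{2k} = u r^k v$. If $\abs{u}$ were not a multiple of $\abs{r}$, the $r$ appearing at position $\abs{u}$ inside $r^{2k}$ would sit across a boundary of the canonical block factorization of $r^{2k}$, producing a decomposition $rr = u'rv'$ with both $u',v'$ nonempty and contradicting primitivity. Hence $\abs{u}$ is a multiple of $\abs{r}$, so $u \in r^*$ and $u, v, w$ pairwise commute. For (B), if $\abs{u} \leq \abs{w}$ then $uw$ sits inside $ww$, giving $ww = uwv$ for some $v$ and reducing to (A); if $\abs{u} > \abs{w}$, write $u = wu'$ and observe that $u'w$ is still a prefix of $w^+$, so induction on $\abs{u}$ closes the case.

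For (C), writing $w^n = x\,sw\,y$, the word $xsw$ is a prefix of $w^+$, so (B) gives $xs \in r^*$, and therefore $s$ is a suffix of $r^+$, which is the same as being a suffix of $w^+$. For (D), the hypothesis that $w$ is a prefix of $uw$ means $uw = w\beta$ with $\abs{\beta} = \abs{u}$; combined with $uw$ being a suffix of $w^n$, this yields a decomposition $w^n = \alpha w \beta$. Inducting on $\abs{\alpha}$ --- the base case $\abs{\alpha} \leq \abs{w}$ embeds $\alpha w$ inside $w^2$ and reduces to (A), while the step writes $\alpha = w\alpha'$ --- we obtain $\alpha \in r^*$, and then the equation forces $\beta \in r^*$ as well. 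Since $\beta$ commutes with $w$, the identity $uw = w\beta = \beta w$ yields $u = \beta$, completing (D).

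For (E), the essential observation is a positional constraint on where a conjugate $w'$ of $w$ can occur as a factor of $r^+$. Writing $r = xy$ for the primitive-root factorization matched to the conjugacy $w' = (yx)^k$, primitivity of $r$ forces the cyclic shift $yx$ to occur in $r^+$ only at positions congruent to $\abs{x}$ modulo $\abs{r}$ (any other phase would exhibit $r$ as a nontrivial conjugate of itself), and hence the same holds for $w' = (yx)^k$. Applied to the two prefixes $u_1 w'$ and $u_2 w'$ of $w^+$, this gives $\abs{u_1} \equiv \abs{u_2} \equiv \abs{x} \pmod{\abs{r}}$; since a prefix of $r^+$ of a given length is uniquely determined, we get $u_i = r^{q_i} x$ with $q_1 \leq q_2$. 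Hence $u_2 = r^{q_2-q_1} u_1$, so $u_1$ is a suffix of $u_2$ and $u_2 u_1^{-1} = r^{q_2-q_1}$ commutes with $w$.

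\textbf{Main obstacle.} The most delicate step is the positional analysis in (E): proving that every occurrence of the conjugate $w'$ inside $r^+$ must have the same phase modulo $\abs{r}$ is where primitivity of $r$ is used essentially, and this is what ultimately forces $u_1$ to be a suffix of $u_2$ with quotient a power of $r$. In (D) the minor complication is that a direct reversal does not place the hypothesis into the form of (B), so the overlap argument behind (A) must be re-run to control the factorization $w^n = \alpha w \beta$ from both sides.
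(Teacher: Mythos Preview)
Your proof is correct. The paper does not actually prove this lemma: it states the five claims and notes only that they are ``based on the well known fact that a primitive word $t$ cannot satisfy equality $tt=utv$, with $u$ and $v$ nonempty.'' Your argument follows exactly this line --- reducing each part to the non-self-overlap property of the primitive root $r$ of $w$ --- so your approach is precisely what the paper's hint points to, just carried out in full.
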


One more lemma, which is easy to prove:

\begin{lem}\label{gprefix}
Let $g:A^*\to A^*$ be a marked morphism and let $u,v\in A^*$. Then 
$g(u\wedge v)=g(u)\wedge g(v)$.
\end{lem}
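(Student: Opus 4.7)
The plan is to reduce the problem to the definition of a marked morphism by factoring $u$ and $v$ along their maximal common prefix. Set $w = u\wedge v$, so that $u = wu'$ and $v = wv'$ for some $u',v'\in A^*$. By the maximality of $w$, either one of $u',v'$ is empty, or both are nonempty and their first letters differ; since $A = \{a,b\}$ has only two letters, in the latter case we may assume without loss of generality that $\p_1(u')=a$ and $\p_1(v')=b$.

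First I would handle the trivial case: if $u'=\varepsilon$ then $u = w$, so $g(u) = g(w)$ is itself a prefix of $g(v)$, giving $g(u)\wedge g(v) = g(w) = g(u\wedge v)$. The case $v'=\varepsilon$ is symmetric.

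In the nontrivial case, $g(u) = g(w)g(u')$ and $g(v) = g(w)g(v')$ share at least $g(w)$ as a common prefix, so it suffices to show that $g(u')$ and $g(v')$ share no letter at the front. But $\p_1(g(u')) = \p_1(g(a))$ and $\p_1(g(v')) = \p_1(g(b))$, and by the hypothesis that $g$ is marked these two letters differ. Hence $g(u')\wedge g(v') = \varepsilon$ and therefore $g(u)\wedge g(v) = g(w) = g(u\wedge v)$.

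There is no real obstacle here; the statement is essentially a restatement of the definition of marked once the common prefix is factored out. The only thing to keep track of is the case when one of $u, v$ is a prefix of the other, which is handled by the trivial case above.
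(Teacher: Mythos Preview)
Your proof is correct and is the natural argument. The paper does not actually give a proof of this lemma; it simply introduces it with the phrase ``One more lemma, which is easy to prove,'' so there is nothing to compare against beyond noting that your argument is exactly the kind of direct verification the author had in mind.
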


The following nice lemma is a key fact about binary morphisms.

\begin{lem}\label{prefix}
Let $X=\{x,y\}\subseteq \Sigma^+$ be a nonperiodic set (i.e. $xy\neq yx$). Let $u\in xX^*$, $v\in yX^*$ be words such
that $\abs{u}, \abs{v} \geq \abs{xy\wedge yx}$. Then $u\wedge v = xy\wedge yx$.
\end{lem}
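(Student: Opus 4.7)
Let $z = xy \wedge yx$. Since $xy \neq yx$, $z$ is a proper prefix of both; denote by $c$ and $d$ the letters immediately following $z$ in $xy$ and $yx$ respectively, so that $c \neq d$. The plan is to prove $u \wedge v = z$ in two parts: (i) $z \leq u$ and $z \leq v$; and (ii) when $|u|, |v| > |z|$, the $(|z|+1)$-th letters of $u$ and $v$ differ.

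The key ingredient I would establish first is the periodicity claim: $z$ is a prefix of $x^n$ for every $n$ with $n|x| \geq |z|$, and symmetrically of $y^n$ for $n|y| \geq |z|$. To prove this I would compare $xy$ and $yx$ letter by letter on their common prefix $z$. For positions $i \leq \min(|x|, |y|)$, the comparison forces $x$ and $y$ to coincide on their common initial segment, while in the range $|x| < i \leq \min(|y|, |z|)$ it yields $y$ having period $|x|$ on its initial segment of length $\min(|y|, |z|)$. Together these show that the relevant prefix of $y$ coincides with the corresponding prefix of an infinite power of $x$, and hence $z$ itself is a prefix of $x^n$ for $n$ large enough. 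The symmetric case $|y| < |x|$, and the remaining range $i > \max(|x|,|y|)$ (when $|z| > \max(|x|,|y|)$), are handled analogously by further letter comparisons; the statement for $y^n$ follows by swapping $x$ and $y$ throughout.

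Granted the key claim, the inclusion $z \leq u$ follows routinely. I would factor $u = u_1 u_2 \cdots u_k$ in the code $X^*$ (note that $\{x,y\}$ is a code by Lemma \ref{prim}) with $u_1 = x$, and observe that each factor $u_i \in \{x,y\}$ agrees with an infinite repetition of $x$ at its offset within $u$---if $u_i = x$ this is immediate from the period $|x|$, and if $u_i = y$ it is exactly the key claim. The prefix of $u$ of length $|z|$ therefore equals $z$, giving $z \leq u$; symmetrically $z \leq v$. The same analysis extended one letter further shows that, when $|u| > |z|$, the $(|z|+1)$-th letter of $u$ is $c$, and similarly the $(|z|+1)$-th letter of $v$ is $d$; since $c \neq d$, we obtain $|u \wedge v| \leq |z|$, and combined with the first inclusion this yields $u \wedge v = z$.

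The main obstacle will be the key claim of the second paragraph---a Fine--Wilf style periodicity argument whose precise execution must cover the various orderings of $|x|$, $|y|$, and $|z|$. Once that claim is in place, the rest of the lemma reduces to straightforward bookkeeping along the $X$-factorizations of $u$ and $v$.
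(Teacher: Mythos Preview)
The paper does not prove this lemma; it simply says the proof is not difficult and refers to \cite{handbookcombinatorics}, p.~348. So there is no in-paper argument to compare against.

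Your plan and your key claim are correct: the letter-by-letter comparison of $xy$ and $yx$ does show that the prefix of $y$ of length $\min(|y|,|z|)$ coincides with a prefix of $x^\omega$ (and symmetrically), hence $z$ is a prefix of both $x^+$ and $y^+$. The gap is in your third paragraph. You assert that ``each factor $u_i\in\{x,y\}$ agrees with an infinite repetition of $x$ at its offset within $u$,'' justified by ``the period $|x|$'' when $u_i=x$ and ``the key claim'' when $u_i=y$. Neither justification works at an arbitrary offset: if the starting position of $u_i$ is not a multiple of $|x|$, then $x$ itself does not match $x^\omega$ there, and your key claim only tells you how $y$ looks at offset~$0$. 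For example, with $x=ab$, $y=aba$ one has $z=aba$, and in $u=xyy\in xX^*$ the third factor $y$ sits at offset $5$, where $x^\omega$ reads $bab\neq aba$.

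What rescues the argument---and what you never invoke---is the Periodicity Lemma: $|z|<|x|+|y|$. Consequently, in $u\in xX^*$ the first factor equal to $y$ (if any) already carries you beyond position $|z|+1$; up to that point every factor is $x$, so every relevant offset \emph{is} a multiple of $|x|$, and at such an offset both of your justifications become valid (the single relevant $y$ starts at a multiple of $|x|$, where its initial segment of length $\min(|y|,|z|)$ matches $x^\omega$ by your second-paragraph analysis). Once you make this length bound explicit, the rest of your outline goes through.
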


The proof is not difficult (see \cite{handbookcombinatorics}, p. 348). The lemma immediately implies that for a nonperiodic binary morphism $h$ and an arbitrary word $u\in A^+$ long enough, the word $z_h$ is a prefix of $h(u)$ and the $(\abs{z_h}+1)$\/-th letter of $h(u)$ indicates the first letter of $u$.
For any $u,v\in A^*$ we have
\begin{align}
z_h=h(au)z_h \wedge h(bv)z_h. \label{vsudypritomnyprefix}
\end{align}
It is now easy to see that the morphism $\normh$ such that
\begin{align}
\normh(u)=z_h^{-1}h(u)z_h, \label{marked definition}
\end{align} $u\in A$, is well defined. Moreover, it is marked, and the equality (\ref{marked definition}) holds for any
$u\in A^*$. We shall call it the {\em marked version} of $h$.

\noindent{\bf N.B.} The case $g=h$ is trivial. Throughout the paper we shall implicitly suppose $g\neq h$.

\section{Principal morphisms}
\label{principal}

In this section we show that at least one of the morphisms $g$ and $h$ can be supposed to be marked. As we shall see, this will make our research more convenient. The goal is achieved by choosing a suitable target alphabet.

\begin{defn}
We say that an (unordered) pair of binary morphisms $g,h: A^*\to \Sigma^*$ is {\em principal} if the target alphabet
$\Sigma$ generates the free hull of the set $\{g(a),g(b),h(a),h(b)\}$.
\end{defn}

The previous definition reflects the use of the term ``principal morphism'' in literature (see for example \cite{oldlot}, p. 170).
The advantages of principal morphisms stem from the following important property.

 \begin{lem}
Let $X$ be a finite subset of $\Sigma^*$ and let $Y$ be the minimal generating set of the free hull of $X$. Then for each element $y\in
Y$ there is a word $x\in X$ such that $y$ is a prefix (suffix resp.) of $x$.
\end{lem}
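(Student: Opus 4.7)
My plan is to argue by contradiction, using the fact recalled in the preliminaries that the free hull $F=Y^*$ is the smallest free submonoid of $\Sigma^*$ containing $X$. Fix $y\in Y$ and suppose, towards contradiction, that $y$ is not a prefix of any $x\in X$. I shall construct a free submonoid $S$ of $\Sigma^*$ satisfying $X\subseteq S\subsetneq F$; this contradicts the minimality of $F$ and forces $y$ to be a prefix of some $x\in X$. The suffix statement is handled symmetrically, either by replacing everything with its reversal or by running the same argument with first $Y$-factor replaced by last $Y$-factor.

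The submonoid I take is $S=F\setminus yF$, that is, those $w\in F$ whose unique $Y$-factorization $w=y_1y_2\cdots y_n$ satisfies $y_1\neq y$ (the empty word being included vacuously, as $y\neq\varepsilon$ prevents $\varepsilon\in yF$). Uniqueness of factorization in the free monoid $F$ yields three of the desired properties cheaply. First, for a non-empty $w_1\in S$ the first $Y$-factor of $w_1w_2$ coincides with that of $w_1$, so $S$ is closed under concatenation and is a submonoid of $\Sigma^*$. Second, for every non-empty $x\in X$ the first $Y$-factor of $x$ is a $\Sigma^*$-prefix of $x$ and therefore, by hypothesis, cannot equal $y$; hence $X\subseteq S$. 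Third, $y$ itself lies in $F\cap yF$, so $y\notin S$, witnessing $S\subsetneq F$.

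The crux is to show that $S$ is free. I would exhibit a basis explicitly as the set
\[
Z=\{\,y'y^j\ :\ y'\in Y\setminus\{y\},\ j\geq 0\,\}.
\]
Given $w\in S$ with $Y$-factorization $y_1y_2\cdots y_n$ (so $y_1\neq y$), grouping each non-$y$ generator $y_i$ with the maximal following block of consecutive $y$'s writes $w$ as a product of elements of $Z$, so $S=Z^*$. To see that $Z$ is a code, note that any equation $z_1\cdots z_p=z_1'\cdots z_q'$ in $Z^*$ expands, via the definition of $Z$, to an identity between two $Y$-factorizations inside $F$; the uniqueness of $Y$-factorizations in the free monoid $F$ forces $p=q$ and $z_i=z_i'$ for every $i$. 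Hence $S=Z^*$ is free.

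The main obstacle is conceptual: locating the right submonoid $S$. Once $S=F\setminus yF$ has been written down, every verification is routine book-keeping about the unique $Y$-factorization inside the free monoid $F$, and the contradiction with the minimality of the free hull closes the proof.
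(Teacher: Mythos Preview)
Your argument is correct. The construction $S=F\setminus yF$ with explicit basis $Z=\{y'y^j:y'\in Y\setminus\{y\},\ j\geq 0\}$ works exactly as you describe: closure, the inclusion $X\subseteq S$, the strict inclusion $S\subsetneq F$, and the freeness of $Z$ all follow from the uniqueness of the $Y$-factorization in the free monoid $F$, and this contradicts the minimality of the free hull.

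Note, however, that the paper does not give its own proof of this lemma; it simply cites Lemma~3.1 of Berstel, Perrin, Perrot, and Restivo (the reference \cite{defaut}). So there is no in-paper argument to compare your approach to. Your proof is a standard and self-contained route to the result, and it is perfectly adequate here.
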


For the proof see \cite{defaut}, Lemma 3.1. For our purpose, note the following immediate corollary.

\begin{cor}\label{free hull}
Let $X$ be a finite subset of \,$\Sigma^*$ such that $\Sigma$ is the base of the free hull of $X$. Then
$$\Sigma=\{\p_1(u)\ |\ u\in X\}=\{\s_1(u)\ |\ u\in X\}.$$
\end{cor}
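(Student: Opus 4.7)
The plan is to invoke the preceding lemma directly, with $Y=\Sigma$. Since $\Sigma$ is by hypothesis the base of the free hull of $X$, it coincides with the minimal generating set $Y$ of that free hull. The lemma then guarantees that for each $y\in\Sigma$ there exists some $x\in X$ such that $y$ is a prefix of $x$. But $y$ is a single letter of $\Sigma$, so ``$y$ is a prefix of $x$'' means exactly $y=\p_1(x)$. This proves the inclusion $\Sigma\subseteq\{\p_1(u)\mid u\in X\}$, and the reverse inclusion is trivial because $X\subseteq\Sigma^*$ so every first letter of a word of $X$ lies in $\Sigma$.

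The suffix equality is obtained by repeating the argument with the suffix version of the lemma, or equivalently by reversing all words in $X$ and applying the prefix version to $\iv X$ (the free hull of $\iv X$ is the reversal of the free hull of $X$, so the base is again $\Sigma$). There is no real obstacle here: the work has been done inside the cited lemma from \cite{defaut}, and the corollary is purely a matter of observing that a length-one prefix of a word is its first letter.
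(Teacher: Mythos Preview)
Your proof is correct and is exactly the argument the paper has in mind: the corollary is stated as an ``immediate corollary'' of the preceding lemma, and you have simply spelled out why---each $y\in\Sigma$ is a length-one prefix (resp.\ suffix) of some $x\in X$, hence equals $\p_1(x)$ (resp.\ $\s_1(x)$), while the reverse inclusion is trivial.
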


It is quite intuitive that choosing the minimal generating set of the free hull as the target alphabet has no influence on the coincidence set of the morphisms. The following lemma is formulated for binary morphisms, but it can be trivially extended to any domain alphabet.

\begin{lem}\label{principalni abeceda}
Let $g_1$, $h_1$ be morphisms $A^*\to \Sigma^*$. Then there is a principal pair of morphisms $g$, $h$ such that
\begin{align*}
\I(g,h)&=\I(g_1,h_1).
\end{align*}
Moreover, if $g_1$  \textup{(}$h_1$, $\iv{g_1}$, $\iv{h_1}$ resp.\textup{)} is marked, then such is also $g$ \textup{(}$h$, $\iv g$, $\iv h$ resp.\textup{)}.
\end{lem}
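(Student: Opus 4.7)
The plan is to change the target alphabet to the base of the free hull of the four images. Set $X=\{g_1(a),g_1(b),h_1(a),h_1(b)\}$ and let $Y$ be the base of the free hull of $X$ in $\Sigma^*$, so that $Y$ is a code and $X\subseteq Y^*$. Take a fresh alphabet $\Gamma$ in bijection $y\mapsto[y]$ with $Y$, and let $\iota\colon\Gamma^*\to\Sigma^*$ be the morphism $\iota([y])=y$. Since $Y$ is a code, $\iota$ is injective.

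For each $x\in A$, the word $g_1(x)\in Y^*$ admits a unique factorization $y_{i_1}\cdots y_{i_k}$ over $Y$; I define $g(x):=[y_{i_1}]\cdots[y_{i_k}]\in\Gamma^*$, and define $h$ analogously from $h_1$. Then by construction $\iota\circ g=g_1$ and $\iota\circ h=h_1$, so injectivity of $\iota$ gives the chain of equivalences
$$g(u)=h(v)\iff \iota(g(u))=\iota(h(v))\iff g_1(u)=h_1(v),$$
which yields $\I(g,h)=\I(g_1,h_1)$.

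It remains to check principality and the preservation of markedness. For principality, let $Z\subseteq\Gamma^*$ be any set generating a free submonoid that contains the four words $g(a),g(b),h(a),h(b)$. Then $\iota(Z)$ is a code in $\Sigma^*$ (the injective image of a code is a code), so $\iota(Z)^*$ is a free submonoid of $\Sigma^*$ containing $X$; minimality of $Y$ forces $Y\subseteq\iota(Z)^*$, and by injectivity of $\iota$ this lifts to $\Gamma\subseteq Z^*$, so $Z=\Gamma$. For markedness, if $g_1$ is marked then the first $\Sigma$-letters of $g_1(a)$ and $g_1(b)$ differ; these coincide with the first $\Sigma$-letters of the first $Y$-factors $y,y'$ of $g_1(a),g_1(b)$, so $y\neq y'$ as elements of $Y$, whence $\p_1(g(a))=[y]\neq[y']=\p_1(g(b))$. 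The cases for $h_1$, $\iv{g_1}$, $\iv{h_1}$ are completely symmetric (the reversal cases using the obvious suffix analogue of the same argument). The delicate step of the plan is the principality argument, where one must keep the free-hull notion inside $\Sigma^*$ cleanly distinct from the one inside $\Gamma^*$ and exploit that $\iota$ carries codes to codes.
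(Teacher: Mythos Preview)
Your proof is correct and follows essentially the same approach as the paper: both replace the target alphabet by a fresh alphabet in bijection with the base $Y$ of the free hull, using the resulting injective morphism $\iota$ (the paper's $\varphi$) to transport the coincidence set. You simply spell out the principality and markedness verifications that the paper leaves as ``obvious''.
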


\begin{proof} Let $F\subset \Sigma^*$ be the free hull o the set $\{g_1(a),g_1(b),h_1(a),h_1(b)\}$ and let $C$ be an alphabet
whose cardinality equals the rank of $F$. Then $C^*$ and $F$ are isomorphic since they are both free monoids of the same rank; let $\varphi: C^*\to F$ be an isomorphism. Define morphisms $g, h: A^*\to C^*$
by
\begin{align*}
	g&=\varphi^{-1}\circ g_1, & h&=\varphi^{-1}\circ h_1.
\end{align*}
\[
\begin{tikzpicture}[>=angle 45, shorten >=0.5mm, shorten <=1mm]
\node (A) at (0,0) {$A^*$};
\node (F) at (2.5,1) {$F$};
\node (C) at (2.5,-1) {$C^*$};
\draw[->] (A)--(F) node [above, sloped,midway]{{\small $g_1,h_1$}};
\draw[->] (A)--(C) node [above,sloped, midway]{{\small $g,h$}};
\draw[->, transform canvas={xshift=-0.9ex}] (C)--(F) node [left, midway]{{\small$\varphi$}};
\draw[->, transform canvas={xshift=0.9ex}] (F)--(C) node [right, midway]{{\small$\varphi^{-1}$}};
\end{tikzpicture}
\]
Then $(g,h)$ is a principal pair of  morphisms, the above diagram commutes, and
$\I(g,h)=\I(g_1,h_1)$. The rest is obvious.
\end{proof}

The previous lemma shows that we can always, without loss of generality, suppose that the pair we work with is principal. 
We can now prove that this brings about markedness of one of the morphisms.

\begin{lem}\label{principalni markovanost}
Let $g$, $h$ be nonperiodic principal morphisms, with  $\ee(g,h)$ non\-empty. Then at least one of the morphisms $g$, $h$ is marked, and at least one of the morphisms $\iv g$, $\iv h$ is marked.

\end{lem}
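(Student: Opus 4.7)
The plan is a proof by contradiction for the first assertion, followed by a symmetric argument for the reversals.

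Suppose neither $g$ nor $h$ is marked, so $\p_1(g(a))=\p_1(g(b))$ and $\p_1(h(a))=\p_1(h(b))$. Since the pair $(g,h)$ is principal, Corollary \ref{free hull} identifies $\Sigma$ with the set of initial letters of $\{g(a),g(b),h(a),h(b)\}$; the non-markedness hypothesis collapses this set to at most two elements, namely $\{\p_1(g(a)),\p_1(h(a))\}$. Now invoke the hypothesis $\ee(g,h)\neq\emptyset$: choose a nonempty $w\in\E(g,h)$ and set $x=\p_1(w)\in A$. Comparing the first letters of $g(w)=h(w)$ yields $\p_1(g(x))=\p_1(h(x))$, which together with non-markedness forces all four initial letters to coincide. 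Hence $\abs\Sigma=1$, making $g$ periodic -- contradicting the hypothesis.

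For the assertion about $\iv g,\iv h$, the same argument will apply after checking three transfer properties. First, $(\iv g,\iv h)$ is again principal: reversal is an involution on $\Sigma^*$ that sends free submonoids to free submonoids, so the free hull of $\{\iv g(a),\iv g(b),\iv h(a),\iv h(b)\}$ is the reversal of the free hull of $\{g(a),g(b),h(a),h(b)\}$, and since the latter equals $\Sigma^*$, so does the former. Second, $\ee(\iv g,\iv h)$ is nonempty, because the identity $\iv g(\iv w)=\iv{g(w)}$ recorded in the preliminaries gives $\iv w\in\E(\iv g,\iv h)$ whenever $w\in\E(g,h)$. Third, $\iv g$ and $\iv h$ are clearly nonperiodic. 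The contradiction argument above, applied verbatim to $(\iv g,\iv h)$, yields that one of them is marked.

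I do not anticipate any real obstacle: the heart of the argument is a pigeonhole at the level of the first letters of the four generator images, made sharp by Corollary \ref{free hull}. The only point that deserves any care is the transfer of principality to the reversed pair, which reduces to the compatibility of the free hull construction with the reversal involution on $\Sigma^*$.
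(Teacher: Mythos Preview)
Your proof is correct and follows essentially the same approach as the paper: assume neither morphism is marked, use an element of the equality set to force all four initial letters to coincide, and then invoke Corollary~\ref{free hull} to conclude $\abs\Sigma=1$, contradicting nonperiodicity. Your treatment of the reversal case is a bit more explicit than the paper's (which simply remarks that the free hull is preserved under reversal), but the substance is identical.
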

\begin{proof}
Suppose that none of the morphisms is marked, therefore 
\begin{align*}
\p_1(g(a))&=\p_1(g(b)),& \p_1(h(a))=\p_1(h(b)).
\end{align*}
Let $x$ be a first letter of a word  $u\in\E(g,h)$. Then
\begin{align*}
\p_1(g(x))&=\p_1(h(x)),
\end{align*}
and Corollary \ref{free hull} implies that the morphisms are periodic, a contradiction.

Obviously, the morphisms $\iv g$, $\iv h$ are also principal, since the concept of the free hull is preserved under the reversal symmetry. This concludes the proof.
\end{proof}

\section{The block structure of the coincidence set}
\label{blocks}

In this section, we study the structure of the equality set of nonperiodic morphisms and their relation to the coincidence set. The previous section justifies why we shall always suppose that $g$ is marked.

Let  $u,v\in \Sigma^*$ be words such that $g(u)$ and $h(v)$ are comparable. Then the word $h(v)^{-1}g(v)$ is called an \emph{overflow} (the overflow may be a ``negative'' word if $g(v)$ is a prefix of $h(v)$).
Following lemmas show that the possibility to lengthen the
words $u$, $v$ to words $u'$, $v'$ such that $g(u')=h(v')$ is very restricted. Namely, the overflow $z_h$ is the only one admitting two different continuations.
\begin{lem} \label{critical}
Let $g$ and $h$ be binary morphisms, and let $g$ be marked. Let  $u,v\in A^*$ be words such that $g(u)$ and $h(v)$ are comparable and let
 $$ g(u)\neq h(v)z_h\,.$$
 Let $u_1,u_2,v_1,v_2\in A^+$ be words such that
\begin{align*}
g(uu_1)&=h(vv_1), & g(uu_2)&=h(vv_2).
\end{align*}
Then 
\begin{itemize}
	\item $\p_1(u_1)=\p_1(u_2)$, if $|g(u)|-|h(v)|<|z_h|$;
	\item $\p_1(v_1)=\p_1(v_2)$, if $|g(u)|-|h(v)|>|z_h|$.
\end{itemize}
\end{lem}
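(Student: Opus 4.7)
Set $d := |g(u)| - |h(v)|$. First I dispose of the case $d < 0$: then $g(u)$ is a proper prefix of $h(v)$, so $h(v) = g(u) w'$ with $w' \neq \varepsilon$, and cancelling $g(u)$ in $g(uu_i) = h(vv_i)$ yields $g(u_i) = w' h(v_i)$. Hence $\p_1(g(u_i)) = \p_1(w')$ is independent of $i$, and markedness of $g$ forces $\p_1(u_1) = \p_1(u_2)$. Since $d < 0 \leq |z_h|$, this matches the first bullet.

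Assume henceforth $d \geq 0$, so $g(u) = h(v) w$ with $|w| = d$, and the equation $g(uu_i) = h(vv_i)$ becomes $h(v_i) = w g(u_i)$. In particular, $w$ is a common prefix of $h(v_1)$ and $h(v_2)$, and $|h(v_i)| \geq |w| + 1$ (using that $g$, being marked, is non-erasing). In the subcase $d > |z_h|$, I argue by contradiction: if $\p_1(v_1) \neq \p_1(v_2)$, then without loss of generality $h(v_1) \in h(a)\{h(a), h(b)\}^*$ and $h(v_2) \in h(b)\{h(a), h(b)\}^*$. Since $\{h(a), h(b)\}$ is nonperiodic (because $h$ is) and $|h(v_i)| > |z_h|$, Lemma \ref{prefix} yields $h(v_1) \wedge h(v_2) = h(ab) \wedge h(ba) = z_h$, so the common prefix $w$ satisfies $|w| \leq |z_h|$, contradicting $|w| > |z_h|$. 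Thus $\p_1(v_1) = \p_1(v_2)$.

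It remains to treat the subcase $0 \leq d < |z_h|$, where the goal is $\p_1(u_1) = \p_1(u_2)$. By markedness of $g$, it suffices to show that $\p_1(g(u_i))$, which is the $(|w|+1)$-th letter of $h(v_i) = w g(u_i)$, is the same for $i = 1, 2$. My key sub-claim is that $z_h$ and $h(v_i)$ are prefix-comparable for each $i$: extending $v_i$ by a large power $c^N$ of the letter $c \in \{a,b\}$ opposite to $\p_1(v_i)$ and applying Lemma \ref{prefix} to $h(v_i c^N)$ against an $h$-image starting with the other letter produces $z_h \leq h(v_i c^N)$; since also $h(v_i) \leq h(v_i c^N)$, the words $z_h$ and $h(v_i)$ are two prefixes of a common word, hence comparable. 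As $|z_h| \geq |w|+1$ (from $|w| < |z_h|$) and $|h(v_i)| \geq |w|+1$, their $(|w|+1)$-th letters coincide, and this value equals the $(|w|+1)$-th letter of $z_h$, independent of $i$.

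The main obstacle is this final subcase. The overflow $w$ only gives agreement of $h(v_1)$ and $h(v_2)$ on the first $|w|$ positions, so one needs an additional ingredient to pin down position $|w|+1$. That ingredient is the universal-prefix role of $z_h$ with respect to $h$-images, which in turn rests on Lemma \ref{prefix}; everything else reduces to cancellation and the observation that the first letter of $g(u_i)$ determines $\p_1(u_i)$ by markedness of $g$.
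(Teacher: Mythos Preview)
Your proof is correct and follows essentially the same line as the paper's: both hinge on the fact (derived from Lemma~\ref{prefix}) that $z_h$ is prefix-comparable with every $h$-image, combined with markedness of $g$ to read off $\p_1(u_i)$ from $\p_1(g(u_i))$. The only presentational difference is that the paper first replaces each $(u_i,v_i)$ by $(u_iuu_i,\,v_ivv_i)$ so that all words exceed $|z_h|$ in length, which lets it fold your separate case $d<0$ into the case $d<|z_h|$ and replace your comparability sub-claim by the stronger statement $z_h\in\p(h(v_i))$; the underlying argument is the same.
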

\begin{proof} If $u_1$, $u_2$, $v_1$ and $v_2$ satisfy the conditions of the lemma, then the same conditions are satisfied also by the
words $u_1uu_1$, $u_2uu_2$, $v_1vv_1$ and $v_2vv_2$ resp. Hence we can suppose that each of the words $u_1,u_2,v_1,v_2$
is longer than $z_h$.
 Consider three cases.
 
\cislo{1.} First suppose that $\abs{g(u)}<\abs{h(v)}+\abs{z_h}$. By (\ref{vsudypritomnyprefix}), $h(v)z_h$ is a
prefix of both $h(vv_1)$ and $h(vv_2)$ and
\begin{align*}
\p_1(g(u_1))=\p_1(g(u_2))=\p_1(g(u)^{-1}h(v)z_h)=x.
\end{align*}
Since $g$ is a marked morphism, this implies that $\p_1(u_1)=\p_1(u_2)$.
\[
\begin{tikzpicture}
 \horA{0}{90}{${g(u)}$}
 \dolA{0}{60}{${h(v)}$}
 \dolseda{90}{98}{$x$}
 \doldashA{60}{115}{}
 \doludolbra{60}{115}{$z_h$}
\end{tikzpicture}
\] 
\cislo{2.} Suppose on the other hand that $\abs{g(u)}>\abs{h(v)}+\abs{z_h}$. Then $h(v_1)$, $h(v_2)$ have the common prefix longer
than $z_h$ and $\p_1(v_1)=\p_1(v_2)$ is determined by the letter $x=\p_1((h(v)z_h)^{-1}g(u))$.
\[
\begin{tikzpicture}
 \horseda{125}{133}{$x$}
 \horA{0}{150}{${g(u)}$}
 \dolA{0}{75}{${h(v)}$}
 \doldashA{75}{125}{$z_h$}
\end{tikzpicture}
\] 
\cislo{3.} If $\abs{g(u)}=\abs{h(v)}+\abs{z_h}$, then, clearly, $g(u)=h(v)z_h$.
\[
\begin{tikzpicture}
 \horA{0}{170}{${g(u)}$}
 \dolA{0}{120}{${h(v)}$}
 \doldashA{120}{170}{$z_h$}
\end{tikzpicture}
\] 
\end{proof}

Previous lemma yields the following property.

\begin{lem}\label{cor1}
Let $g$ and $h$ be binary morphisms, and let $g$ be marked. Let $(c,d)$ and $(c',d')$ be distinct elements of
$\I(g,h)$, and suppose that $c$ and $c'$ are not comparable. Put
\begin{align*}
u&=c\wedge c',& v&=d\wedge d'.
\end{align*}
Then $$g(u)=h(v)z_h.$$
\end{lem}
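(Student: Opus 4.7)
Write $c = u u_1$, $c' = u u_2$ and $d = v v_1$, $d' = v v_2$. The incomparability of $c, c'$ gives $u_1, u_2 \in A^+$ with $\p_1(u_1) \neq \p_1(u_2)$; moreover $g(u)$ and $h(v)$ are both prefixes of the common word $g(c) = h(d)$, hence are prefix-comparable. The plan is to suppose $g(u) \neq h(v) z_h$ for contradiction and invoke Lemma \ref{critical}, whose hypotheses require the four continuation words to be nonempty.

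I would first handle the situation in which one of $v_1, v_2$ is empty. If $d = d'$, then $g(c) = h(d) = h(d') = g(c')$; since the marked morphism $g$ is injective (its letter images form a prefix code), this forces $c = c'$, contradicting incomparability. So one may assume, without loss of generality, that $v_1 = \varepsilon$ and $v_2 \neq \varepsilon$. Then $g(u u_1) = h(v)$ and $g(u u_2) = h(v) h(v_2)$ together yield $g(u_2) = g(u_1) h(v_2)$, so $g(u_1)$ is a prefix of $g(u_2)$; but $\p_1(u_1) \neq \p_1(u_2)$ and markedness of $g$ force the first letters of $g(u_1)$ and $g(u_2)$ to differ, a contradiction. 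Thus $v_1, v_2 \in A^+$, and by maximality of $v$ we also have $\p_1(v_1) \neq \p_1(v_2)$.

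With all four continuations nonempty I would apply Lemma \ref{critical}. The case $|g(u)| - |h(v)| < |z_h|$ yields $\p_1(u_1) = \p_1(u_2)$, contradicting incomparability of $c$ and $c'$; the case $|g(u)| - |h(v)| > |z_h|$ yields $\p_1(v_1) = \p_1(v_2)$, contradicting maximality of $v$. Hence $g(u) = h(v) z_h$.

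The main obstacle is the degenerate case where $d$ and $d'$ are comparable, since Lemma \ref{critical} does not apply there; this is where injectivity of the marked morphism must be invoked directly to rule the case out before the main argument can proceed.
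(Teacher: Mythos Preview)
Your proof is correct and, for the main case where $d$ and $d'$ are incomparable, coincides exactly with the paper's argument via Lemma \ref{critical}.

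The difference lies in the degenerate case where $d$ and $d'$ are comparable. You rule it out by a direct injectivity argument: from $g(uu_1)=h(v)$ and $g(uu_2)=h(vv_2)$ you get $g(u_2)=g(u_1)h(v_2)$, and then markedness of $g$ forbids $g(u_1)$ from being a prefix of $g(u_2)$. The paper takes a different route: it observes that comparability of $d,d'$ forces $\abs{g(u)}-\abs{h(v)}<0\leq\abs{z_h}$, and then \emph{re-applies} Lemma \ref{critical} with the doubled continuations $u_1c,\,u_2c',\,v_1d,\,v_2d'$ (using $g(cc)=h(dd)$ and $g(c'c')=h(d'd')$), which are automatically nonempty. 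So your remark that ``Lemma \ref{critical} does not apply there'' is not quite right---the paper shows it can be made to apply with a small trick. Your injectivity argument is more elementary and self-contained; the paper's doubling trick keeps the proof uniform, invoking the same lemma in both branches.
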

\begin{proof}
We have $c=uu_1$ and $c'=uu_2$ where $u_1,u_2\in A^+$ and $\p_1(u_1)\neq \p_1(u_2)$. 

If $d$ and $d'$ are not comparable, then $d=vv_1$ and $d'=vv_2$ with $v_1,v_2\in A^+$ and $\p_1(v_1)\neq \p_1(v_2)$, and the claim follows from Lemma \ref{critical}.
 
If $d$ and $d'$ are comparable, then $\abs{g(u)}-\abs{h(v)}<0\leq \abs{z_h}$. Since
\begin{align*}
g(uu_1c)&=h(vv_1d), & g(uu_2c')&=h(vv_2d')
\end{align*}
with $u_1c, u_2c', v_1d, v_1d'\in A^+$, Lemma \ref{critical} yields a contradiction with $\p_1(u_1)\neq \p_1(u_2)$.
\end{proof}

\begin{example}
The previous corollary does not hold without the condition that $c$ and $c'$ are not comparable. Consider morphisms
\begin{align*}
g(a)&=a, & g(b)&=b,\\
h(a)&=a, & h(b)&=aab. 
\end{align*}
Then $(c,d)=(a,a)$, $(c',d')=(aab,b)$, $z_h=aa$, and
 $$g(c\wedge c')=g(a)=a\neq aa= h(\varepsilon)z_h=h(d\wedge d')z_h.$$
\end{example}

The ground for the characterization of the coincidence set is the following lemma.

\begin{lem}\label{ef}
Let $g$ and $h$ be binary morphisms, and let $g$ be marked. Let the words $e,f\in A^+$ satisfy following conditions:
\begin{list}{\rm(\roman{cond})}{\usecounter{cond}}
 \item \label{ef1} $z_hg(e)=h(f)z_h$
 \item \label{ef2} The words $e$, $f$ are minimal, i.e.: If $u\leq e$, $v\leq f$  and $z_hg(u)=h(v)z_h$, then either $u=v=\varepsilon$ or $u=e$ and $v=f$.
\end{list}
Then, given the first letter of $e$ or the first letter of $f$, the words $e$ and $f$ are determined uniquely.
\end{lem}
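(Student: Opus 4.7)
The plan is to reformulate the hypothesis in terms of the marked version $\normh$ of $h$, introduced in (\ref{marked definition}). The identity $z_hg(e)=h(f)z_h$ is equivalent to $g(e)=\normh(f)$, so condition~(i) simply says that $(e,f)\in\I(g,\normh)$, and condition~(ii) says that $(e,f)$ is a minimal element of this coincidence set. Crucially, both $g$ and $\normh$ are marked, so the lemma reduces to the clean statement that a minimal element of the coincidence set of two marked morphisms is uniquely determined by its first letter in either coordinate.

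Suppose $(\alpha,\beta)$ and $(\alpha',\beta')$ are two pairs satisfying conditions~(i) and~(ii), with $\p_1(\alpha)=\p_1(\alpha')$. First I would check that this already forces $\p_1(\beta)=\p_1(\beta')$: since $g$ is marked, $g(\alpha)$ and $g(\alpha')$ share their first letter; but these words coincide with $\normh(\beta)$ and $\normh(\beta')$, and $\normh$ being marked propagates the agreement back to $\beta$ and $\beta'$.

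The heart of the argument is then Lemma~\ref{gprefix} applied to both marked morphisms:
\[
g(\alpha\wedge\alpha')=g(\alpha)\wedge g(\alpha')=\normh(\beta)\wedge\normh(\beta')=\normh(\beta\wedge\beta').
\]
Hence $(\alpha\wedge\alpha',\beta\wedge\beta')$ also belongs to $\I(g,\normh)$, is a prefix of both $(\alpha,\beta)$ and $(\alpha',\beta')$, and by the previous paragraph is nonempty. Minimality of $(\alpha,\beta)$ therefore forces $\alpha\wedge\alpha'=\alpha$ and $\beta\wedge\beta'=\beta$, that is, $\alpha\leq\alpha'$ and $\beta\leq\beta'$; minimality of $(\alpha',\beta')$ then collapses these inequalities to equalities. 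The case where the first letter of $f$ is prescribed is completely symmetric, with the roles of $g$ and $\normh$ interchanged.

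I do not anticipate a serious obstacle. The only mildly subtle point is the translation between the overflow formulation $z_hg(e)=h(f)z_h$ and the symmetric form $g(e)=\normh(f)$; once this step is taken, Lemma~\ref{gprefix} performs all the combinatorial work and minimality does the rest.
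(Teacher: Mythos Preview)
Your proposal is correct and follows essentially the same route as the paper. The only cosmetic difference is that you pass to the marked version $\normh$ at the outset and then apply Lemma~\ref{gprefix} symmetrically to both $g$ and $\normh$, whereas the paper keeps the overflow form $z_hg(e)=h(f)z_h$ and handles the $h$-side via equation~(\ref{vsudypritomnyprefix}) instead; these are the same computation in different clothing, and indeed the paper makes the $\normh$ translation explicit immediately afterward in Lemma~\ref{coref}. Your intermediate step showing $\p_1(\beta)=\p_1(\beta')$ is harmless but unnecessary: minimality only needs $\alpha\wedge\alpha'\neq\varepsilon$, which follows directly from $\p_1(\alpha)=\p_1(\alpha')$.
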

\begin{proof} Suppose $e$, $f$ and $e'$, $f'$ satisfy (\ref{ef1}) and (\ref{ef2}), and $\p_1(e)=\p_1(e')$. Put $c=e\wedge e'$, $d=f\wedge
f'$. Since $g$ is a marked morphism, we have
\begin{align}\label{pr1}
z_hg(e)\wedge z_hg(e')=z_hg(c)
\end{align}
by Lemma \ref{gprefix}.
From (\ref{vsudypritomnyprefix}) we deduce
\begin{align}\label{pr2}
h(f)z_h\wedge h(f')z_h=h(d)z_h.
\end{align}
Since $z_hg(e)=h(f)z_h$ and $z_hg(e')=h(f')z_h$, the equalities (\ref{pr1}), (\ref{pr2}) yield
\begin{align*}
z_hg(c)=h(d)z_h.
\end{align*}
Since $c$ is nonempty, we deduce from (\ref{ef2})  that $c=e=e'$ and $d=f=f'$. Similarly if $\p_1(f)=\p_1(f')$.
\end{proof}

This implies the following lemma.

\begin{lem}\label{coref}
Let $g$ and $h$ be binary morphisms, and let $g$ be marked.
\begin{list}{\rm(\Alph{cla})}{\usecounter{cla}}
\item The rank of $\I(g,\normh)$ is at most two.
\item If the rank of $\I(g,\normh)$ is two and $\ii(g,\normh)=\{(e,f),(e',f')\}$, then
\begin{align*}
\p_1(e)&\neq\p_1(e')\\
\p_1(f)&\neq\p_1(f').
\end{align*}
\end{list}
\end{lem}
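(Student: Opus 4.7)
The plan is to deduce both statements almost directly from Lemma \ref{ef}. First I would translate the problem into the language of that lemma. By the definition of $\normh$ in (\ref{marked definition}), a pair $(c,d)$ lies in $\I(g,\normh)$ if and only if $z_h g(c) = h(d) z_h$. Consequently, every $(e,f) \in \ii(g,\normh)$ automatically satisfies condition (\ref{ef1}) of Lemma \ref{ef}. To see that condition (\ref{ef2}) is also met, I would use that $\ii(g,\normh)$ is a bifix code (Lemma \ref{freely}): if $u \le e$, $v \le f$ and $z_h g(u) = h(v) z_h$ with $(u,v) \ne (\varepsilon,\varepsilon)$, then factoring $(u,v)$ into generators shows that some element of $\ii(g,\normh)$ is a prefix of $(e,f)$, which by the bifix property forces it to equal $(e,f)$, hence $u = e$ and $v = f$. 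So every element of $\ii(g,\normh)$ meets the hypotheses of Lemma \ref{ef}.

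Given this, Lemma \ref{ef} says that each pair $(e,f) \in \ii(g,\normh)$ is uniquely determined by $\p_1(e)$ and also uniquely determined by $\p_1(f)$. For part (A) I would invoke the pigeonhole principle on the two-letter alphabet $A$: if $\ii(g,\normh)$ contained three distinct pairs, two of them would share the same first letter in the first coordinate, and Lemma \ref{ef} would then force those two to coincide, a contradiction. Hence the rank is at most two. For part (B), if $\ii(g,\normh) = \{(e,f),(e',f')\}$ with the two pairs distinct, then $\p_1(e) = \p_1(e')$ would again make the pairs identical by Lemma \ref{ef}; so $\p_1(e) \ne \p_1(e')$, and applying the same reasoning to the second coordinate gives $\p_1(f) \ne \p_1(f')$.

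There is no real obstacle here — the whole content sits in Lemma \ref{ef}, and the only subtlety is the bookkeeping check that elements of $\ii(g,\normh)$ satisfy the minimality condition (\ref{ef2}). Once that is noted, both (A) and (B) are one-line applications of pigeonhole together with the uniqueness given by Lemma \ref{ef}.
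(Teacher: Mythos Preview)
Your proposal is correct and follows exactly the approach the paper takes: translate $\I(g,\normh)$ into the set $\{(u,v)\mid z_hg(u)=h(v)z_h\}$ and then invoke Lemma~\ref{ef}. The paper's own proof is a two-line sketch (``The rest is a consequence of Lemma~\ref{ef}''), and you have simply spelled out the bookkeeping---in particular the verification of the minimality condition~(\ref{ef2}) via the bifix property from Lemma~\ref{freely}---that the paper leaves implicit.
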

\begin{proof} Recall that $\normh(u)=z_h^{-1}h(u)z_h$ to see that
$$\I(g,\normh)=\{(u,v)\in A^*\times A^*\ \vert\ z_hg(u)=h(v)z_h\}\,.$$
The rest is a consequence of Lemma \ref{ef}.
\end{proof}

Note that $(e,f)\in\ii(g,\normh)$ is just another formulation of the fact that $e$, $f$ are minimal words satisfying $z_hg(e)=h(f)z_h$, which are exactly conditions of Lemma \ref{ef}. The pairs $(e,f)$ and $(e',f')$ are often called \emph {blocks} of $g$ and $h$.

 The question on the structure of the equality set $\E(g,h)$ can be seen as a special case of the above
considerations. If conditions
\begin{align*}
u&=v,  & u_1&=v_1, & u_2&=v_2, &c&=d,  &c'&=d', &e&=f, &e'&=f',
\end{align*} are added, then we get the following modifications of  Lemma
\ref{critical}, Lemma \ref{cor1},  Lemma \ref{ef} and Lemma \ref{coref} with analogous proofs, which we omit.

\begin{lem} \label{critical revisited}
Let $g$ and $h$ be binary morphisms, and let $g$ be marked. Let  $u\in A^*$ be a word such that $g(u)$ and $h(u)$ are comparable, and
 $$ g(u)\neq h(u)z_h\,.$$
 Let $u_1,u_2\in A^+$ be words such that
\begin{align*}
g(uu_1)&=h(uu_1),\\ g(uu_2)&=h(uu_2).
\end{align*}
Then $\p_1(u_1)=\p_1(u_2)$.
\end{lem}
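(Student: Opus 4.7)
The plan is to derive Lemma \ref{critical revisited} as a direct specialization of Lemma \ref{critical}, by substituting $v=u$, $v_1=u_1$, and $v_2=u_2$. The hypothesis $g(u)\neq h(u)z_h$ excludes the boundary situation $\abs{g(u)}=\abs{h(u)}+\abs{z_h}$ (Case 3 of Lemma \ref{critical}), so only the two proper-inequality cases remain. If $\abs{g(u)}-\abs{h(u)}<\abs{z_h}$, Lemma \ref{critical} gives $\p_1(u_1)=\p_1(u_2)$ directly. If $\abs{g(u)}-\abs{h(u)}>\abs{z_h}$, it gives $\p_1(v_1)=\p_1(v_2)$, which under our identification is the same conclusion.

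If one prefers to re-run the argument in stand-alone form (the author justifies the omission with the phrase ``analogous proofs''), the steps parallel those of Lemma \ref{critical}. First I would iterate the two given equalities: from $g(uu_i)=h(uu_i)$ one gets $g(u(u_iuu_i)^k)=h(u(u_iuu_i)^k)$ for any $k\geq 0$, so without loss of generality both $\abs{g(u_i)}$ and $\abs{h(u_i)}$ exceed $\abs{z_h}$. Then I split on the sign of $\abs{g(u)}-\abs{h(u)}-\abs{z_h}$. In the case $\abs{g(u)}<\abs{h(u)}+\abs{z_h}$, property (\ref{vsudypritomnyprefix}) makes $h(u)z_h$ a prefix of both $h(uu_1)$ and $h(uu_2)$, so both $g(u_1)$ and $g(u_2)$ must begin with the same letter $x=\p_1(g(u)^{-1}h(u)z_h)$, and markedness of $g$ yields $\p_1(u_1)=\p_1(u_2)$. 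In the dual case $\abs{g(u)}>\abs{h(u)}+\abs{z_h}$, the common prefix of $h(uu_1)$ and $h(uu_2)$ extends past $h(u)z_h$, so $h(u_1)$ and $h(u_2)$ share a prefix of the form $z_h x$ for the same letter $x=\p_1((h(u)z_h)^{-1}g(u))$, and reading $x$ through the marked version $\normh$ via (\ref{marked definition}) again pins down $\p_1(u_1)=\p_1(u_2)$.

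There is no genuine obstacle. The only bookkeeping subtlety is that the second case requires the marked version $\normh$ of $h$ (rather than the markedness of $g$ used in the first case), but this is exactly the purpose for which (\ref{marked definition}) was set up in the preliminaries. The whole lemma is really just the diagonal slice of Lemma \ref{critical}.
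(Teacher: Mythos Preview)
Your approach is correct and matches the paper's: the lemma is precisely the diagonal specialization $v=u$, $v_i=u_i$ of Lemma~\ref{critical}, and the paper explicitly says the proof is analogous and omitted. One small slip in your optional stand-alone version: the iterated word should be $u(u_iu)^{k}u_i=(uu_i)^{k+1}$, not $u(u_iuu_i)^k$ (the latter fails for $k=0$ and $k\geq 2$); this is the same replacement the paper uses in the proof of Lemma~\ref{critical}, and it does not affect your main argument.
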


\begin{lem}\label{cor1 revisited}
Let $g$ and $h$ be binary morphisms, and let $g$ be marked. Let $c$ and $c'$ be incomparable elements of $\E(g,h)$. Put
 $u=c\wedge c'$.
Then $$g(u)=h(u)z_h.$$
\end{lem}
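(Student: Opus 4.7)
The plan is to imitate the proof of Lemma \ref{cor1}, but in the simpler diagonal setting where the two sides of each pair coincide. Since the paper says the proof is ``analogous'' and omits it, the argument should be a short direct reduction to Lemma \ref{critical revisited}.

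First, I would set up the decomposition. Because $c$ and $c'$ are incomparable elements of $\E(g,h)$, their maximal common prefix $u=c\wedge c'$ satisfies $c=uu_1$ and $c'=uu_2$ for some $u_1,u_2\in A^+$ with $\p_1(u_1)\neq \p_1(u_2)$. From $c,c'\in\E(g,h)$ we have
\[
g(uu_1)=h(uu_1)\qquad\text{and}\qquad g(uu_2)=h(uu_2).
\]

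Second, I would observe that $g(u)$ and $h(u)$ are prefix-comparable. Indeed, both are prefixes of the common word $g(c)=h(c)$ (and likewise of $g(c')=h(c')$), so one of them is a prefix of the other; this is exactly the hypothesis needed to invoke Lemma \ref{critical revisited} with this $u$, $u_1$, $u_2$.

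Finally, I would apply the contrapositive of Lemma \ref{critical revisited}. That lemma says that if $g(u)\neq h(u)z_h$, then $\p_1(u_1)=\p_1(u_2)$. Since we have arranged $\p_1(u_1)\neq \p_1(u_2)$, the conclusion $g(u)=h(u)z_h$ is forced, which is what we want.

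There is no real obstacle: the entire proof is a two-line reduction, the only thing to check carefully being the comparability hypothesis of Lemma \ref{critical revisited}, which is automatic here because both $g(u)$ and $h(u)$ are prefixes of $g(c)=h(c)$. The asymmetry in Lemma \ref{cor1} (separate treatment of the case where $d$, $d'$ are comparable) disappears in the diagonal case, which is why the proof collapses to this single application.
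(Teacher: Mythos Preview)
Your proof is correct and is precisely the analogous argument the paper has in mind: set up $c=uu_1$, $c'=uu_2$ with $\p_1(u_1)\neq\p_1(u_2)$, note that $g(u)$ and $h(u)$ are comparable as prefixes of $g(c)=h(c)$, and apply the contrapositive of Lemma~\ref{critical revisited}. Your remark that the case split from Lemma~\ref{cor1} collapses here is also right, since $d=c$ and $d'=c'$ are incomparable by hypothesis.
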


\begin{lem}\label{ef revisited}
Let $g$ and $h$ be binary morphisms, and let $g$ be marked. Let the word $e\in A^+$ satisfy following conditions:
\begin{list}{\rm(\roman{cond})}{\usecounter{cond}}
\item $z_hg(e)=h(e)z_h$
\item The word $e$ is minimal, i.e.: If $e_1$ is a prefix of $e$ and $z_hg(e_1)= h(e_1)z_h$, then $e_1=\varepsilon$ or $e_1=e$.
\end{list}
Then the word $e$ is determined uniquely by its first letter.
\end{lem}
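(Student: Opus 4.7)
The plan is to imitate almost verbatim the proof of Lemma \ref{ef}, collapsing the pair $(e,f)$ into a single word $e$. Suppose $e,e' \in A^+$ both satisfy (i) and (ii) and have $\p_1(e)=\p_1(e')$. Set $c = e \wedge e'$; because the first letters agree, $c$ is nonempty. The goal is to show $c=e=e'$.

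The first step is to compute the meets on both sides. Since $g$ is marked, Lemma \ref{gprefix} gives
\[
z_h g(e) \wedge z_h g(e') = z_h\bigl(g(e)\wedge g(e')\bigr) = z_h g(c).
\]
On the $h$-side, the defining property (\ref{vsudypritomnyprefix}) of $z_h$ (applied to the two continuations of $h(c)$ inside $h(e)z_h$ and $h(e')z_h$, which begin with different letters, or handled directly if one word is a prefix of the other) yields
\[
h(e)z_h \wedge h(e')z_h = h(c)z_h.
\]

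The second step combines these with the hypothesis $z_h g(e)=h(e)z_h$ and $z_h g(e')=h(e')z_h$: the two displayed meets must be equal, so $z_h g(c)=h(c)z_h$. Now $c$ is a nonempty prefix of $e$ satisfying (i), hence by the minimality condition (ii) we must have $c=e$, i.e.\ $e$ is a prefix of $e'$. The same argument with the roles of $e$ and $e'$ exchanged (or equivalently, using that $c$ is also a prefix of $e'$) gives $c=e'$, so $e=e'$, as required.

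There is no real obstacle here; the only point that needs a brief verification is the $h$-side meet formula, which is not a consequence of markedness of $h$ (which we do not have) but of the fact that both $h(e)z_h$ and $h(e')z_h$ extend $h(c)z_h$ in positions where by (\ref{vsudypritomnyprefix}) the next $|z_h|+1$ letters are determined by the first letter of the relevant continuation of $c$. This is exactly the step used in Lemma \ref{ef}, so the argument transfers without modification.
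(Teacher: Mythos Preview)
Your proof is correct and is exactly the approach the paper intends: it is the proof of Lemma~\ref{ef} specialized to the diagonal case $e=f$, $e'=f'$, $c=d$, which the paper explicitly says is analogous and therefore omits. Your extra care about the $h$-side meet (including the prefix-comparable subcase) is appropriate and matches how (\ref{vsudypritomnyprefix}) is used in the original Lemma~\ref{ef}.
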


\begin{lem}\label{coref revisited}
Let $g$ and $h$ be binary morphisms, and let $g$ be marked.
\begin{list}{\rm(\Alph{cla})}{\usecounter{cla}}
\item The rank of $\E(g,\normh)$ is at most two.
\item If the rank of $\E(g,\normh)$ is two and $\ee(g,\normh)=\{e,e'\}$, then
\begin{align*}
\p_1(e)&\neq\p_1(e').
\end{align*}
\end{list}
\end{lem}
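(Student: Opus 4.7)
The plan is to derive this lemma as a direct consequence of Lemma \ref{ef revisited}, in complete parallel with how Lemma \ref{coref} follows from Lemma \ref{ef}.

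First I would unfold the definition of $\normh$. Since $\normh(u)=z_h^{-1}h(u)z_h$ holds for every $u\in A^*$, we have
\begin{align*}
\E(g,\normh)=\{u\in A^*\ |\ g(u)=\normh(u)\}=\{u\in A^*\ |\ z_hg(u)=h(u)z_h\}.
\end{align*}
Consequently, a nonempty word $e$ belongs to $\ee(g,\normh)$ exactly when it satisfies both conditions (i) and (ii) of Lemma \ref{ef revisited}: it satisfies $z_hg(e)=h(e)z_h$, and no nonempty proper prefix $e_1$ of $e$ satisfies $z_hg(e_1)=h(e_1)z_h$ (equivalently, no proper prefix lies in $\E(g,\normh)\setminus\{\varepsilon\}$).

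Next I would invoke Lemma \ref{ef revisited}, which says that any such minimal $e$ is determined uniquely by its first letter $\p_1(e)\in\{a,b\}$. This immediately gives part (A): there are at most two distinct elements in $\ee(g,\normh)$, one for each possible value of $\p_1(e)$, hence $\E(g,\normh)$ has rank at most two.

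For part (B), suppose the rank is exactly two and $\ee(g,\normh)=\{e,e'\}$. If we had $\p_1(e)=\p_1(e')$, then Lemma \ref{ef revisited} applied to both words (both satisfy conditions (i) and (ii), and both have the same first letter) would force $e=e'$, contradicting that $\ee(g,\normh)$ contains two distinct elements. Therefore $\p_1(e)\neq\p_1(e')$, completing the proof.

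There is no serious obstacle here: the entire content of the lemma is already packaged in Lemma \ref{ef revisited}, and the present statement merely rephrases it in terms of the rank and first letters of generators of $\E(g,\normh)$. The only thing to be careful about is the translation between ``element of $\E(g,\normh)$'' and ``word $e$ with $z_hg(e)=h(e)z_h$'', which is purely definitional.
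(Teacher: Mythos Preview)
Your proposal is correct and follows exactly the approach the paper intends: the paper explicitly omits the proof of Lemma~\ref{coref revisited}, stating that it is analogous to that of Lemma~\ref{coref}, which in turn unfolds the definition of $\normh$ and invokes Lemma~\ref{ef}. Your argument does precisely the corresponding thing, unfolding $\normh$ to identify $\E(g,\normh)$ with $\{u\mid z_hg(u)=h(u)z_h\}$ and then applying Lemma~\ref{ef revisited}.
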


Note that the previous lemma proves Theorem \ref{hlavni veta} for morphisms, which are marked from both sides. In the rest of the paper we show that this is essentially the only situation in which the equality set can have  rank greater than one.

Marked morphisms are in general much easier to deal with. That's why it is convenient to work with principal pairs, where one of the morphisms, say $g$, is marked. Moreover, it is always possible to use the marked version $\normh$ instead of $h$ to get a marked pair, and thus a better insight into the coincidence set of $g$ and $h$.

The block structure of the coincidence set of marked morphisms leads to an important concept of \emph{successor  morphisms} introduced first in \cite{EKRPCP}. Consider marked morphisms $g$ and $h$ such that $\ii(g,h)$ consists of two blocks $(e,f)$ and $(e',f')$. Let $w$ be an element of $\E(g,h)$. The equality $g(w)=h(w)$ can be uniquely split into a sequence of blocks. This means that $w$ is an element of $\{e,e'\}^+$, and in the same time an element of $\{f,f'\}^+$. It is now natural to define the successor morphisms $(g_1,h_1)$ by
\begin{equation}\label{jdeto} \begin{cases}  g_1(a)=e,\\
                   g_1(b)=e',
    \end{cases}
    \begin{cases}  h_1(a)=f,\\
                    h_1(b)=f'\,,
    \end{cases}
\end{equation}
and to formulate the previous considerations by the following lemma.
\begin{lem}\label{seq1}
 Let $g, h$ be marked morphisms such that 
 $$\ii(g,h)=\{(e, f),\, (e', f')\}.$$
Then the morphisms $g_1$, $h_1$ defined by \eqref{jdeto} are marked. 
If $w\in \E(g,h)$, then there is a unique word $w_1\in \E(g_1,h_1)$ such that 
\begin{align*}
g_1(w_1)&=h_1(w_1)=w.
\end{align*}
\end{lem}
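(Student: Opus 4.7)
The proof splits into two independent claims, both of which follow quickly from material already established in the excerpt. For the markedness of $g_1$ and $h_1$, I would invoke Lemma \ref{coref}: since $h$ is marked we have $z_h=\varepsilon$ and $\normh=h$, so $\I(g,h)=\I(g,\normh)$. The rank of this coincidence set equals two (it has the two distinct generators $(e,f)$ and $(e',f')$), hence by part (B) of that lemma $\p_1(e)\neq\p_1(e')$ and $\p_1(f)\neq\p_1(f')$. Since $g_1(a)=e$, $g_1(b)=e'$, $h_1(a)=f$ and $h_1(b)=f'$, this is exactly the statement that $g_1$ and $h_1$ are marked.

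For the second claim, the plan is to extract $w_1$ from the block factorization of the pair $(w,w)$ in $\I(g,h)$. Given $w\in\E(g,h)$, we have $(w,w)\in\I(g,h)$. By Lemma \ref{freely}, $\I(g,h)$ is freely generated by $\ii(g,h)=\{(e,f),(e',f')\}$, so there is a unique factorization
\[
(w,w)=(e_{i_1},f_{i_1})(e_{i_2},f_{i_2})\cdots(e_{i_n},f_{i_n}),
\]
with each $(e_{i_j},f_{i_j})\in\{(e,f),(e',f')\}$. Define $w_1\in A^*$ letter by letter, setting the $j$-th letter of $w_1$ to $a$ if $(e_{i_j},f_{i_j})=(e,f)$ and to $b$ otherwise. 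Then by the definition of $g_1$ and $h_1$,
\[
g_1(w_1)=e_{i_1}e_{i_2}\cdots e_{i_n}=w\quad\text{and}\quad h_1(w_1)=f_{i_1}f_{i_2}\cdots f_{i_n}=w,
\]
so $w_1\in\E(g_1,h_1)$ with $g_1(w_1)=h_1(w_1)=w$.

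Uniqueness of $w_1$ is just uniqueness of the block factorization: if $w_1'\in\E(g_1,h_1)$ satisfies $g_1(w_1')=h_1(w_1')=w$, then reading $w_1'$ letter by letter produces a factorization of $(w,w)$ into elements of $\{(e,f),(e',f')\}$, which by Lemma \ref{freely} must agree with the one above, forcing $w_1'=w_1$.

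I do not foresee any real obstacle here; the lemma is essentially a packaging statement. The only thing to be mildly careful about is that one genuinely needs $(e,f)\neq(e',f')$ (so that the letter-to-block assignment is well-defined), which is already built into the hypothesis that $\ii(g,h)$ has two distinct elements; and that the free generation of $\I(g,h)$ by $\ii(g,h)$ is used in both existence (to get some factorization) and uniqueness (to rule out any other) of $w_1$.
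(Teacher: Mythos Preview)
Your proposal is correct and follows exactly the same route as the paper's own proof, which simply cites Lemma \ref{coref} for the markedness of $g_1,h_1$ and Lemma \ref{freely} for the existence and uniqueness of $w_1$. You have merely spelled out the details that the paper leaves implicit (in particular, the observation that $h$ marked gives $z_h=\varepsilon$ and hence $\normh=h$, so that Lemma \ref{coref} applies directly to $\I(g,h)$).
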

\begin{proof}
The morphisms $g_1$ and $h_1$ are marked by Lemma \ref{coref}. The existence and uniqueness of the word $w_1$ follows from $(w,w)\in \I(g,h)$, and from Lemma \ref{freely}.
\end{proof}

\section{The counterexample and its structure}
\label{counterexample section}

We now have all necessary ingredients for the proof of our main claim, Theorem \ref{hlavni veta}.
The course of the prove will be essentially by contradiction. We
shall assume that there exists a counterexample to the claim, and gradually show that such an assumption is contradictory. 

We first formulate what is understood as a counterexample.

\begin{defn}\label{counterexample}
We say that a pair of morphisms $(g,h)$ is a {\em counterexample} if
\begin{list}{\rm(\alph{defi})}{\usecounter{defi}}
 \item \label{defa} The rank of $\E(g,h)$ is at least two; 
 \item \label{defb} $g$ is marked and $h$ is not marked;
 \item \label{delky} $\abs{g(a)}>\abs{h(a)}$ and $\abs{g(b)}<\abs{h(b)}$.
\end{list}
\end{defn}

The third condition takes advantage of the symmetry of letters $a$ and $b$. Note that the strict inequalities do not harm generality, since $\abs{g(a)}=\abs{h(a)}$ or $\abs{g(b)}=\abs{h(b)}$ would imply $g=h$.
Since the letters $a$ and $b$ are not interchangeable anymore, we shall sometimes need the morphism $\pi$ defined by $\pi(a)=b$ and $\pi(b)=a$.

The following lemma yields basic information about the structure of the equality set of a  counterexample.

\begin{lem}\label{counterexample structure}
Let $(g,h)$ be a  counterexample. Then there exist nonempty words $\sigma$, $\nu_a$ and
$\nu_b$ such that $|\sigma|_a\geq 1$,
\begin{align*}
\p_1(\nu_a)&=a, & \p_1(\nu_b)&=b,
\end{align*}
the words $\sigma\nu_a$, $\sigma\nu_b$ are two distinct elements of $\ee(g,h)$ and
\begin{align}
 g(\sigma)&=h(\sigma)z_h, \label{zacatek}\\
 z_hg(\nu_a)&=h(\nu_a),\label{prostredeka}\\
 z_hg(\nu_b)&=h(\nu_b)\label{prostredekb}.
\end{align}
\[
 \begin{tikzpicture}
  \preghA{10}
  \ramec{0}{130} 
  \stredA{70}{40}
  \uzel{20}{$\sigma$}
  \uzel{100}{${\nu_l}$}
  \nadpisA{40}{70}{$z_h$} 
\end{tikzpicture}
\]
\end{lem}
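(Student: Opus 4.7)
The plan is to extract the data $\sigma, \nu_a, \nu_b$ directly from two distinct minimal elements of $\E(g,h)$ and then verify each property by straightforward bookkeeping, using Lemma \ref{cor1 revisited} as the single nontrivial input.

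Since the rank of $\E(g,h)$ is at least two, the bifix code $\ee(g,h)$ contains two distinct elements $c, c'$. As $\ee(g,h)$ is a bifix code, $c$ and $c'$ are prefix-incomparable. Set $\sigma = c \wedge c'$. By Lemma \ref{cor1 revisited} applied to these two incomparable elements of $\E(g,h)$, one obtains
\begin{equation*}
g(\sigma) = h(\sigma) z_h,
\end{equation*}
which is the desired identity \eqref{zacatek}. Incomparability of $c$ and $c'$ further permits us to write $c = \sigma \mu_1$, $c' = \sigma \mu_2$ with $\mu_1, \mu_2 \in A^+$ and $\p_1(\mu_1) \neq \p_1(\mu_2)$; after swapping $c$ and $c'$ if necessary we may assume $\p_1(\mu_1) = a$ and $\p_1(\mu_2) = b$, and we define $\nu_a := \mu_1$ and $\nu_b := \mu_2$. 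This guarantees both $\sigma\nu_a, \sigma\nu_b \in \ee(g,h)$ and their distinctness.

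To establish \eqref{prostredeka}, compute $g(\sigma)g(\nu_a) = g(\sigma\nu_a) = h(\sigma\nu_a) = h(\sigma)h(\nu_a)$, and cancel $h(\sigma)$ on the left after substituting $g(\sigma) = h(\sigma)z_h$; this yields $z_h g(\nu_a) = h(\nu_a)$. The argument for \eqref{prostredekb} is identical.

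It remains to check that $\sigma$ is nonempty and $|\sigma|_a \geq 1$. If $\sigma = \varepsilon$, then \eqref{zacatek} forces $z_h = \varepsilon$, contradicting condition (\ref{defb}) that $h$ is not marked. If instead $\sigma \in b^+$, taking lengths in \eqref{zacatek} gives $|\sigma|(|g(b)| - |h(b)|) = |z_h| > 0$, contradicting $|g(b)| < |h(b)|$ from condition (\ref{delky}). Hence $\sigma$ contains at least one occurrence of $a$. There is no real obstacle here; the lemma is essentially a direct assembly of Lemma \ref{cor1 revisited} with the defining inequalities of a counterexample.
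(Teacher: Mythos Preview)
Your proof is correct and follows essentially the same approach as the paper: take two distinct minimal elements of $\E(g,h)$, set $\sigma$ to be their longest common prefix, apply Lemma~\ref{cor1 revisited}, and read off the remaining identities and the constraint $|\sigma|_a\geq 1$ from the counterexample conditions. Your version is in fact slightly more explicit than the paper's, spelling out the cancellation that yields \eqref{prostredeka}--\eqref{prostredekb} and the separate verification that $\sigma\neq\varepsilon$.
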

\begin{proof} 
Let $u$ and $v$ be two distinct elements of $\ee(g,h)$. Note that $u$ and $v$ are not comparable, and put $\sigma=u\wedge v$, $u_1=\sigma^{-1}u$ and $v_1=\sigma^{-1}$. Clearly, $\p_1(u_1)\neq\p_1(v_1)$ and the choice of $\nu_a$ and $\nu_b$ is now obvious.
The equalities (\ref{zacatek}), (\ref{prostredeka}) and
(\ref{prostredekb}) are yielded by Lemma \ref{cor1 revisited}, and $|\sigma|_a\geq 1$ follows from $|g(b)|<|h(b)|$. 
\end{proof}

The equalities \eqref{zacatek}, \eqref{prostredeka} and \eqref{prostredekb} are of a special importance in the proof. They represent two points, where the structure of a  counterexample is well defined, and which therefore yield information for a combinatorial analysis.

The following lemma makes sure that the counterexample defined above deserves its name.
\begin{lem}\label{uprava}
Let $g_1$ and $h_1$ be nonperiodic binary morphisms such that $\ee(g_1,h_1)$ contains two elements $\alpha$ and $\beta$ with the same first letter. Then there is a counterexample $(g,h)$ such that $\E(g,h)=\E(g_1,h_1)$. 

Moreover, if $\iv{g_1}$ ($\iv{h_1}$ resp.) is marked, then also $\iv g$ ($\iv h$ resp.) is marked.
\end{lem}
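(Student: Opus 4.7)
The plan is threefold: normalize $(g_1,h_1)$ to a principal pair, identify the unique marked morphism, and then arrange the length condition.

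First, apply Lemma \ref{principalni abeceda} to obtain a principal pair $(g_0,h_0)$ with $\I(g_0,h_0)=\I(g_1,h_1)$. This preserves $\E$, $\ee$, and, via the ``moreover'' clause, the markedness of both reversals; in particular $\alpha,\beta\in\ee(g_0,h_0)$ still share a first letter.

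Next, by Lemma \ref{principalni markovanost} at least one of $g_0,h_0$ is marked. If both were, Lemma \ref{coref revisited}(B) would force the distinct elements $\alpha,\beta$ of $\ee(g_0,h_0)$ to begin with different letters, contradicting our hypothesis. So exactly one is marked; take that one as $g$ and the other as $h$. Conditions (\ref{defa}) and (\ref{defb}) of Definition \ref{counterexample} are now in place.

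Finally, for (\ref{delky}), I count lengths on $g(w)=h(w)$ to get $|w|_a(|g(a)|-|h(a)|)+|w|_b(|g(b)|-|h(b)|)=0$ for every $w\in\E(g,h)$. Observe that $\ee(g,h)$ contains at most one minimal power of any given letter, since $g(a)^k=h(a)^k$ with $k\ge 1$ already forces $g(a)=h(a)$ in a free monoid, hence $a\in\E(g,h)$; so some $w\in\ee(g,h)$ satisfies $|w|_a,|w|_b>0$, forcing $|g(a)|-|h(a)|$ and $|g(b)|-|h(b)|$ to be nonzero with opposite signs. If $|g(a)|>|h(a)|$ we are done; otherwise the fix is to compose both $g$ and $h$ with $\pi$, which flips the sign pattern. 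This $\pi$-step is the main delicate point: one needs to verify that it preserves markedness of $g$ (since $\pi$ only permutes letters), the reversal clauses (via $\iv{g\circ\pi}=\iv g\circ\pi$), and the hypothesis that $\ee$ contains two elements sharing a first letter, which is invariant under the domain letter swap.
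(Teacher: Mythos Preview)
Your approach mirrors the paper's almost exactly: normalize to a principal pair via Lemma \ref{principalni abeceda}, use Lemma \ref{principalni markovanost} together with Lemma \ref{coref revisited} to see that exactly one of the two is marked (and call it $g$), then invoke the $a\leftrightarrow b$ symmetry for condition \eqref{delky}. Your extra length-counting paragraph is a correct and more explicit justification of what the paper dismisses in the remark following Definition \ref{counterexample}.

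There is, however, one point you list among the ``things to verify'' but do not actually verify, and it in fact fails as literally stated: after replacing $(g,h)$ by $(g\circ\pi,\,h\circ\pi)$ you obtain
\[
\E(g\circ\pi,\,h\circ\pi)=\pi\bigl(\E(g,h)\bigr)=\pi\bigl(\E(g_1,h_1)\bigr),
\]
which is \emph{not} $\E(g_1,h_1)$ in general. So the conclusion $\E(g,h)=\E(g_1,h_1)$ does not survive your $\pi$-step. The paper's own proof is equally informal here (``by the symmetry of $a$ and $b$''), and the honest reading in both cases is that the equality holds only up to the letter swap $\pi$. This is harmless for every use of the lemma in the paper: at the end of the proof of Theorem \ref{hlavni veta} only the \emph{existence} of a counterexample is needed, and in Lemma \ref{shortest1} only the total length $|\sigma\mu_a|+|\sigma\mu_b|$ matters, which $\pi$ preserves. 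A similar caveat applies to the ``moreover'' clause under the $g\leftrightarrow h$ swap in your step two; again this is immaterial in the paper's applications, where both $\iv{g_1}$ and $\iv{h_1}$ are marked. You should flag these two points explicitly rather than leave them in the list of unchecked items.
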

\begin{proof}
 Lemma \ref{principalni abeceda} yields principal morphisms $g$ and $h$ such that $\E(g,h)=\E(g_1,h_1)$. By Lemma \ref{principalni markovanost} and by the symmetry of $g$ and $h$, we can suppose that $g$ is marked. Similarly, by the symmetry of $a$ and $b$, we can suppose that the condition \eqref{delky} of Definition \ref{counterexample} is satisfied. In order to see that $(g,h)$ is a counterexample, it remains to show that $h$ is not marked. If $h$ is marked, then both morphisms are marked, and $\p_1(\alpha)\neq \p_1(\beta)$ by Lemma \ref{coref revisited}, contrary to the assumption.

Markedness of reversals is conserved by Lemma \ref{principalni abeceda}.
\end{proof}

The further strategy is to show that there is no counterexample. We shall divide the investigation into several stages.

\section{When $z_h$ commutes} \label{zh commutes}
In this section we investigate two special situations, in which $z_h$ commutes with one of the image words. We show that those situations lead to a contradiction.
We start with a technical lemma, which will be the core of the proof. In the original version of this paper the claim had the following strong form:

\begin{lemnul}\label{hyper technicke lema stare}
{\em Let $g,h:A^*\to A^*$ be two marked morphisms. Let $u$, $u'$, $v$ and $v'\in A^*$ be words, and $s$, $r$, $q$ positive
integers such that
\begin{align*}
 g(a^sbu)&=h(a^sbu'), &
 g(a^{r}bv)&=h(a^{q}bv').
\end{align*}
Then $s=r=q$.}
\end{lemnul}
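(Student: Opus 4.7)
The plan is to exploit the block structure of $\I(g,h)$ from Lemmas~\ref{coref} and~\ref{ef}. Since $g$ and $h$ are both marked, $z_h=\varepsilon$, so $\ii(g,h)$ has at most two elements, each uniquely determined by the first letter of either coordinate. Both pairs $(a^sbu,a^sbu')$ and $(a^rbv,a^qbv')$ lie in $\I(g,h)$ and begin with $a$ in both coordinates, hence by Lemma~\ref{freely} both decompositions must start with the same ``$a$-block'' $(e,f)\in\ii(g,h)$, with $\p_1(e)=\p_1(f)=a$.

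I would then split on whether $b$ appears inside $e$ or $f$. If $e$ contains $b$, write $e=a^kb\cdots$; being a common prefix of $a^sbu$ and $a^rbv$ (whose first $b$ sits at positions $s+1$ and $r+1$), this forces $k=s=r$. Analogously, if $f$ contains $b$, writing $f=a^mb\cdots$ gives $m=s=q$. When both $e$ and $f$ contain $b$ we immediately conclude $s=r=q$; when only one does, I would combine the partial identity with an inductive application of the same argument to the residual pair obtained after stripping $(e,f)$.

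The remaining possibility is $(e,f)=(a^k,a^\ell)$ with $e,f\in a^+$, where the leading $b$ of neither argument is yet reached. Here $g(a)^k=h(a)^\ell$, so $g(a)$ and $h(a)$ share a primitive root by Lemma~\ref{prim}. I would then iterate the stripping: remove copies of $(a^k,a^\ell)$ from both pairs until one coordinate's leading $a$-run is exhausted, at which point the next block must be a ``$b$-block'' $(e',f')$ with $\p_1(e')=b$. Matching the positions at which $b$ occurs in the second coordinates of the remainders of both pairs should furnish a linear equation in $s,r,q,k,\ell$ which, together with the divisibility constraints $k\mid s$ and $k\mid r$ inherited from the stripping, would pin down $s=r=q$.

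The principal obstacle I anticipate is the fully degenerate subcase $k=\ell=1$, i.e.\ $g(a)=h(a)$: the $a$-block is then $(a,a)$, stripping it produces no length information, and the two equations decouple into $g(bu)=h(bu')$ and $g(bv)=h(bv')$ once $r=q$ is secured, leaving $s$ and $r$ completely unrelated. I expect the argument to fail precisely here; the author's preface in fact confirms that the strong claim is fallacious, and any counterexample should exploit exactly this freedom, which is why the corrected statement in the present version of the paper must rule such a configuration out.
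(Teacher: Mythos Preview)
You are right that the statement is false; the paper does not prove it but exhibits Laine's counterexample and then retreats to the weaker Lemma~\ref{hyper technicke lema}. Your block-decomposition framework is also the natural one, and the partial conclusions you draw when $e$ or $f$ contains a $b$ are correct.

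Your diagnosis of \emph{where} the argument breaks, however, is too narrow. You locate the failure exclusively in the degenerate subcase $(e,f)=(a,a)$, i.e.\ $g(a)=h(a)$. But Laine's example has $g(a)=a^2b^2\neq a=h(a)$, and its $a$-block is $(e,f)=(a,\,a^2b)$, which lands squarely in your case ``only $f$ contains $b$''. There you propose to finish by ``an inductive application of the same argument to the residual pair obtained after stripping $(e,f)$'', and that step does not work. Stripping $(a,a^2b)$ from the two pairs $(a^2b^2,\,a^2ba^2b^2)$ and $(ab^2,\,a^2b^2)$ yields the residuals $(ab^2,\,a^2b^2)$ and $(b^2,\,b)$. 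The second residual is already the $b$-block and carries no information about $r$ whatsoever; the first residual has lost its diagonal shape (its coordinates now have leading $a$-runs of lengths $1$ and $2$), so the hypothesis required for the ``same argument'' is no longer available. No mechanism remains to tie $s$ to $r$.

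Thus the obstruction is not confined to $g(a)=h(a)$: it already appears whenever one stripping of the $a$-block exhausts the leading $a$-run of one pair but not of the other. The corrected Lemma~\ref{hyper technicke lema} repairs exactly this by requiring the first equation to be genuinely diagonal, $g(a^sbu)=h(a^sbu)$; that is what makes the successor-morphism reduction preserve the shape of the first pair so that the induction can go through.
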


However, as Markku Laine pointed out by constructing an example, this claim does not hold. The example is as follows.
\begin{example}
Let 
\begin{align*}
g(a)&=a^2b^2, & h(a)&=a, \\
g(b)&=b, & h(b)&=b^2.
\end{align*}
Then
$$g(a^2b^2)=h(a^2ba^2bb)=a^2b^2a^2b^4,$$
and 
$$g(ab^2)=h(a^2b^2)=a^2b^4.$$
\end{example}

We therefore present a bit weaker version, which fits the purpose of this paper.

\begin{lem}\label{hyper technicke lema}
Let $g,h:A^*\to A^*$ be two marked morphisms. Let $u$ and $v\in A^*$ be words, and $s$, $r$, $q$ positive
integers such that
\begin{align}
 g(a^sbu)&=h(a^sbu),\label{nekk} \\
 g(a^{r}bv)&=h(a^{q}bv).
\end{align}
Then $s=r=q$.
\end{lem}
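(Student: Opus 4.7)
I argue by contradiction using the block structure of the coincidence set. Since $g$ and $h$ are both marked, Lemma \ref{coref} (with $z_h=\varepsilon$) gives that $\I(g,h)$ is freely generated by at most two blocks $(e,f)$ and $(e',f')$ satisfying $\p_1(e)\ne\p_1(e')$ and $\p_1(f)\ne\p_1(f')$. Both hypotheses produce elements of $\I(g,h)$ whose four coordinates all begin with $a$ (as $s,r,q\ge 1$), so the first block of either factorisation must be the unique block with both coordinates starting with $a$; call it $(e,f)$. Consequently $e$ is a common prefix of $a^s b u$ and $a^r b v$, and $f$ is a common prefix of $a^s b u$ and $a^q b v$.

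Next, I examine the position of the first $b$ inside $e$ and inside $f$. If $e$ contains a $b$, writing $e=a^{k_e}b e_1$ forces this $b$ to coincide with the first $b$ of both $a^s b u$ and $a^r b v$, giving $k_e=s=r$. Symmetrically, if $f$ contains a $b$, writing $f=a^{k_f}b f_1$ yields $k_f=s=q$. If both $e$ and $f$ contain a $b$, the lemma follows with $s=r=q$.

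It remains to dispose of the defective cases where $e$ or $f$ lies in $a^+$. If both $e=a^{k_e}$ and $f=a^{k_f}$ are pure $a$-words, the unique common factorisation of $(a^s b u, a^s b u)$ via the successor morphisms of Lemma \ref{seq1} uses the same number $m$ of $(e,f)$-blocks on each coordinate to exhaust the initial $a^s$; hence $m k_e = s = m k_f$, so $k_e=k_f$ and $g(a)^{k_e}=h(a)^{k_e}$, which forces $g(a)=h(a)$. But $g(a)=h(a)$ combined with $g\ne h$ forces $|g(b)|\ne|h(b)|$ (otherwise $g=h$), and then the length identity $|g(w)|=|h(w)|$ on $\E(g,h)$ yields $|w|_b=0$ for every $w\in\E(g,h)$, i.e.\ $\E(g,h)\subseteq a^*$, contradicting $a^s b u\in\E(g,h)$. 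In the mixed defective subcase (only one of $e,f$ in $a^+$; say $e=a^{k_e}$ while $f=a^s b f_1$ contains a $b$, so already $s=q$), the block identity $g(a)^{k_e}=h(a)^s h(b) h(f_1)$ expresses a pure power of $g(a)$ as a word with $h(a)^s$ as a long prefix; a Fine--Wilf / primitivity argument based on Lemma \ref{factor} forces $g(a)$ and $h(a)$ to share a primitive root $\rho$, and a first-letter comparison that exploits markedness of $h$ (any surplus power of $\rho$ would begin with $\p_1(h(a))\ne\p_1(h(b))$) upgrades this to $g(a)=h(a)$, reducing once again to the previous contradiction.

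The main technical obstacle is the mixed defective subcase: ensuring the Fine--Wilf hypothesis applies to the prefix $h(a)^s$ sitting inside $g(a)^{k_e}$, and carefully handling borderline lengths via the factor structure of primitive powers in order to extract a common primitive root of $g(a)$ and $h(a)$.
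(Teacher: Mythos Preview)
Your block analysis is sound through the easy cases, but the mixed defective subcase has a genuine gap. When $e=a^{k_e}$ and $f=a^sbf_1$, the identity $g(a)^{k_e}=h(a)^s h(b)h(f_1)$ only tells you that $h(a)^s$ is a prefix of $g(a)^{k_e}$. For the Periodicity Lemma to force $g(a)$ and $h(a)$ to commute you would need $s\,|h(a)|\ge |g(a)|+|h(a)|$, i.e.\ $(s-1)|h(a)|\ge|g(a)|$; but from $k_e\le s$ and the length equality one only gets $|g(a)|>|h(a)|$, and already for $s=2$, $k_e=2$ the required inequality reduces to $0\ge |h(b)|+|h(f_1)|$, which is false. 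Since markedness of $h$ shows (exactly as in the paper's first paragraph) that $g(a)$ and $h(a)$ do \emph{not} commute, your plan of forcing commutation and then reaching $g(a)=h(a)$ cannot succeed here. Nothing else in your outline yields $s=r$ in this case: the first-coordinate block count only gives $k_e\mid s$ and $k_e\mid r$, which does not force $s=r$.

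This mixed case is precisely the one the paper cannot dispatch directly. Instead it argues by minimal counterexample: using Lemma~\ref{gprefix} on both sides gives $g(a^sbu\wedge a^rbv)=h(a^sbu\wedge a^qbv)$, and in the case $s=r$, $s\neq q$ this produces an element $(a^sbw,a^m)\in\I(g,h)$ with $m=\min(s,q)$, forcing the $a$-block to have $f=a^p\in a^+$ and $e\notin a^+$ (by non-commutation). Passing to the successor morphisms $g_1,h_1$ defined by the blocks then yields a strictly shorter instance of the same hypotheses, contradicting minimality of $|a^sbu|$. Your direct approach would need an equivalent descent mechanism in the mixed case; without it, the proof is incomplete.
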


\begin{proof} Recall that we suppose $g\neq h$. (Obviously, only $r=q$ is forced if $g=h$.)
Let $g$ and $h$ be morphisms satisfying assumptions, and suppose that $s=r=q$ does not hold. Assume, moreover, that $g$ and $h$ are chosen such that the length of
$a^sbu$ is the smallest possible. We show that $a^sbu$ can be shortened, and hence obtain a contradiction.

We first prove that $g(a)$ and $h(a)$ do not commute. Suppose for a while that $\abs{g(a)}>\abs{h(a)}$, and that $t$ is the common primitive root of $g(a)$ and $h(a)$. From \eqref{nekk}, we deduce that $h(b)$ is comparable with $h(a^s)^{-1}g(a^s)$, which is an element of $t^+$. That is a contradiction with $h$ being marked.
Similarly if $\abs{g(a)}<\abs{h(a)}$. (Clearly, $g(a)=h(a)$ implies $g=h$.)
  
Let us continue the proof of the lemma. Lemma \ref{gprefix} applied once to $g$ and once to $h$ gives
\begin{align}\label{weq}
 g(a^{s}bu\wedge a^{r}bv) &=g(a^{s}bu)\wedge g(a^{r}bv)=h(a^{s}bu)\wedge h(a^{q}bv)=h(a^{s}bu\wedge a^{q}bv).
\end{align}
\cislo{1.} If $s\neq r$ and $s\neq q$, then  (\ref{weq}) yields $$g(a^i)=h(a^j),$$ with $i=\min(s,r)$, $j=\min(s,q)$.
Therefore the words $g(a)$ and $h(a)$ commute, a contradiction.

\cislo{2.} Suppose next, by symmetry, $s=r$ and $s\neq
q$. Put $m=\min(s,q)$. Equality (\ref{weq}) implies
\begin{align}\label{1712}
g(a^sbw)=h(a^m),
\end{align}
where $w=u\wedge v$.

 The set $\I(g,h)$ contains elements $(a^sbu,a^sbu)$ and $(a^sbw,a^m)$, whence the rank of $\I(g,h)$ is two.
 Let $(e,f)$ and $(e',f')$ be the blocks of $g$ and $h$, and let $g_1$, $h_1$ be their successor morphisms defined by
\eqref{jdeto}. 

By symmetry, suppose that $\p_1(f)=a$.
Equality (\ref{1712}) implies that there is a positive integer $p$ such that $f=a^{p}.$ Since $g(a)$ and $h(a)$ do not commute, we deduce that $e\notin
a^+$ and thus $\abs{e}>s$. Since $a^sbu$ and $a^qbv$ are elements of $\{f,f'\}^*$, both $s$ and $q$ are multiples of
$p$. Put
\begin{align*}
s_1&=\frac s p,&q_1&=\frac q p,
\end{align*}
and define words $u_1$ and $v_1$ by
  \begin{align*}
g_{1}(u_{1})&=a^sbu, & h_{1}(u_{1})&=a^sbu, \\
g_{1}(v_{1})&=a^{s}bv, & h_{1}(v_{1})&=a^{q}bv.
  \end{align*}
Since $f=a^p$, the words $u_1$ and $v_1$ can be factorized as
\begin{align*}
u_1&=a^{s_1}bu_2, &  v_1&=a^{q_1}bv_2,
\end{align*}
with $u_2,v_2\in A^*$.
Therefore
\begin{align*}
g_{1}(a^{s_1}bu_2)&=h_{1}(a^{s_1}bu_2)=a^sbu,\\
g_1(a^{q_1}bv_2)&=h_1(a^{s_1}bv_2)=a^sbv.
\end{align*}
Inequality $s\neq q$ implies $s_1\neq q_1$, and $\abs{e}>s$ yields $\abs{a^{s_1}bu_2}<\abs{a^sbu}$. This completes the
proof.
\end{proof}

The following two claims exploit the previous lemma. The words $\sigma$, $\nu_a$ and $\nu_b$ are as in Lemma \ref{counterexample structure}.

\begin{claim}\label{nekomutuje s b}
There is no counterexample such that $z_h$ commutes with $g(b)$ and $\p_1(\sigma)=b$.
\end{claim}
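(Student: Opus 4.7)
The plan is to argue by contradiction. Suppose $(g,h)$ is a counterexample satisfying both conditions. Since $z_h$ and $g(b)$ are nonempty (as $h$ is not marked and $g$ is non-erasing) and commute, they share a common primitive root $t$; write $z_h=t^k$ and $g(b)=t^m$ with $k,m\geq 1$. Let $x=\p_1(t)$; markedness of $g$ gives $\p_1(g(a))=\pi(x)\neq x$. Since $\p_1(\sigma)=b$ and $|\sigma|_a\geq 1$, write $\sigma=b^s a\sigma''$ with $s\geq 1$ maximal. Substituting into $g(\sigma)=h(\sigma)z_h$ (from Lemma~\ref{counterexample structure}) yields
\[
t^{ms}\,g(a\sigma'')=h(b)^s\,h(a\sigma'')\,t^k,
\]
which I denote by $(\star)$. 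In particular $t^{ms}$ is a prefix of $h(b)^s$, and since $|h(b)|>m|t|$, the word $t^m$ is a prefix of $h(b)$; write $h(b)=t^m r$ with $|r|>0$.

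First I would dispatch the subcase $h(b)\in t^+$: then $h(b)=t^{m'}$ with $m'>m$, and $(\star)$ reduces to $g(a\sigma'')=t^{(m'-m)s}\,h(a\sigma'')\,t^k$. The left-hand side starts with $\pi(x)$ while the right-hand side starts with $x$, a contradiction. So henceforth I assume $h(b)\notin t^+$. Applying $z_h g(\nu_b)=h(\nu_b)$ with $\nu_b=b\nu_b'$ and cancelling $t^m$ gives $t^k g(\nu_b')=r\,h(\nu_b')$; if $\nu_b'=\varepsilon$ then $r=t^k\in t^+$, contradicting the case, so $\nu_b'\neq\varepsilon$. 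Comparing first letters in this equality then forces $\p_1(r)=x$.

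For $s=1$, cancelling $t^m$ in $(\star)$ yields $g(a\sigma'')=r\,h(a\sigma'')\,t^k$, whose left-hand side starts with $\pi(x)$; this forces $\p_1(r)=\pi(x)$, contradicting $\p_1(r)=x$. For $s\geq 2$ the argument is more delicate: the equation $(\star)$ requires $t^{ms}$ to be a prefix of $(t^m r)^s$, which by Fine--Wilf-type periodicity together with the primitivity of $t$ either forces $h(b)\in t^+$ (returning to the subcase already settled), or enables the construction of two coincidence pairs in $\I(g,\normh)$ of the form $(b^{s_1}au,b^{s_1}au)$ and $(b^{s_2}av,b^{s_3}av)$ with $s_1,s_2,s_3$ not all equal---obtained by exploiting the identity $g(b^k)=z_h^m$ to insert or remove $b^k$-blocks in $\sigma\nu_a$ and $\sigma\nu_b$---whereupon Lemma~\ref{hyper technicke lema}, applied to the marked pair $(g,\normh)$ in its $a/b$-symmetric form, yields $s_1=s_2=s_3$ and the contradiction. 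The principal obstacle is precisely this $s\geq 2$ subcase, which is why Lemma~\ref{hyper technicke lema} is proved immediately beforehand.
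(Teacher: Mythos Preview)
Your argument through the case $s=1$ is clean and correct; in particular the first-letter contradiction via $\p_1(r)$ is a nice direct shortcut. The gap is entirely in your treatment of $s\geq 2$. The assertion that ``$t^{ms}$ is a prefix of $(t^m r)^s$ \ldots\ either forces $h(b)\in t^+$ or enables the construction of two coincidence pairs'' is not substantiated: the Fine--Wilf length condition $ms\abs{t}\geq \abs{h(b)}+\abs{t}$ need not hold (we only know $\abs{h(b)}>m\abs{t}$, nothing prevents $\abs{h(b)}$ from being large), and your proposed construction via the identity $g(b^k)=z_h^m$ does not by itself produce a pair $(b^{s_2}av,b^{s_3}av)\in\I(g,\normh)$ with $s_2\neq s_3$ --- inserting $b^k$ on the $g$-side adds $z_h^m$, but on the $h$-side it adds $h(b)^k$, which is a different word, so the coincidence is destroyed rather than preserved.

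What the paper does instead, and what your sketch is missing, is a \emph{maximal $t$-prefix} comparison. One first shows (essentially your step~5) that $h(b)$ does not commute with $t$; by Lemma~\ref{factor}\eqref{factor3} this forces every word $h(bu)$ (for $u$ long enough) to have the \emph{same} maximal $t$-prefix, independent of $u$. Applying this to $h(\sigma\nu_a)=g(\sigma\nu_a)$ and to $h(\nu_b\sigma)z_h=z_hg(\nu_b\sigma)$ yields the exact relation $g(b)^\ell=z_hg(b)^{k}$, where $\ell$ is your $s$ and $k$ is the length of the maximal $b$-prefix of $\nu_b\sigma$; in particular $z_h$ is a \emph{power} of $g(b)$, and $k\neq\ell$. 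This relation is what allows one to rewrite $g(\sigma)=h(\sigma)z_h$ as $z_hg(b^k\sigma')=h(b^\ell\sigma')z_h$ with $\sigma'=b^{-\ell}\sigma$, producing precisely the second coincidence pair (the first being $\nu_b\sigma\in\E(g,\normh)$) needed to invoke Lemma~\ref{hyper technicke lema} for $\normh\circ\pi$ and $g\circ\pi$. Your first-letter method cannot extract this power relation, and without it there is no way to set up the lemma.
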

\begin{proof}
Suppose that $(g,h)$ is such a counterexample, and let $t$ be the common primitive root of $z_h$ and $g(b)$.
Let $b^{\ell}$ be the maximal $b$-prefix of $\sigma\nu_a$ and $b^k$ be the maximal $b$-prefix of $\nu_b\sigma$.  
 Then $g(b)^\ell$ is the maximal $t$\/-prefix of $g(\sigma\nu_a)$ and $z_hg(b)^k$ is the maximal $t$\/-prefix of   $z_hg(\nu_b\sigma)$.

Suppose that $h(b)$ commutes with $g(b)$. Since $\abs{h(b)}>\abs{g(b)}$, the equality $g(\sigma\nu_a)=h(\sigma\nu_a)$ implies that $g(a)$ is comparable with $g(b)^{-\ell}h(b)^\ell$, a contradiction with $g$ being marked. Therefore $h(b)$ and $g(b)$ do not commute. 

By Lemma \ref{factor}\eqref{factor3}, the maximal $t$\/-prefix of $h(b)z_h$ is shorter than $\abs{h(b)t}$. This implies, by \eqref{vsudypritomnyprefix}, that all words $h(bu)$ long enough have the same maximal $t$\/-prefix. In particular, the maximal $t$-prefix of $h(\sigma\nu_a)$ is the same as the maximal $t$-prefix of $h(\nu_b\sigma)$.
From $h(\sigma\nu_a)=g(\sigma\nu_a)$ and $h(\nu_b\sigma)z_h=z_hg(\nu_b\sigma)$ we deduce that
$g(b)^\ell=z_hg(b)^k$ and $k\neq \ell$.

Put $\sigma'=b^{-\ell}\sigma$ and note that $\sigma'$ is nonempty since $\sigma$ contains the letter $a$. Then
\begin{align*}
z_hg(b^k\sigma')&=h(b^\ell\sigma')z_h,& z_hg(\nu_b\sigma)&=h(\nu_b\sigma)z_h,
\end{align*}
and Lemma \ref{hyper technicke lema}, applied to morphisms $\normh\circ\pi$ and $g\circ\pi$, yields a
contradiction.
\end{proof}

\begin{claim}\label{nekomutuje s a}
There is no counterexample such that $\p_1(\sigma)=a$, $z_h$ commutes with $h(a)$, and the common primitive root of $z_h$ and $h(a)$ is a suffix of $g(a)$.
\end{claim}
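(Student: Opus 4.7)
The plan is to mirror the proof of Claim \ref{nekomutuje s b}: derive a mismatch of $a$-prefix exponents through a maximal $t$-prefix computation, and close using Lemma \ref{hyper technicke lema}. Let $t$ denote the common primitive root of $z_h$ and $h(a)$, with $z_h = t^q$ and $h(a) = t^p$ for $p, q \geq 1$. Since $h$ is nonperiodic, $h(b)$ cannot commute with $t$ (otherwise $h(a), h(b) \in t^+$), so by Lemma \ref{factor}\eqref{factor3} the maximal $t$-prefix of $h(b)$ has some fixed exponent $j_b$. From Lemma \ref{counterexample structure} I take $\sigma, \nu_a, \nu_b$; since $\p_1(\sigma) = \p_1(\nu_a) = a$, both $\sigma\nu_b \in \ee(g,h)$ and $\nu_a\sigma$ (which, derived from $z_h g(\nu_a) = h(\nu_a)$ and $g(\sigma) = h(\sigma)z_h$, satisfies $z_h g(\nu_a\sigma) = h(\nu_a\sigma) z_h$) begin with $a$, with maximal $a$-prefixes $a^\ell$ and $a^k$ respectively, each followed by the letter $b$.

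Next I will compute the maximal $t$-prefixes on both sides of the two governing equalities. On the $h$-side these are $t^{p\ell + j_b}$ and $t^{pk + j_b}$. On the $g$-side, the hypothesis that $t$ is a suffix of $g(a)$ together with the markedness of $g$ (which forces $g(b)$ not to start with $t$ whenever $g(a)$ does) determines the $t$-prefixes of $g(a)^\ell g(b)\cdots$ and $z_h g(a)^k g(b)\cdots$. In the natural sub-case $g(a) = t^{k_a} \in t^+$, equating with the $h$-side values and subtracting yields the arithmetic relation $(\ell - k)(k_a - p) = q$, which rules out $k_a = p$, forces $\ell \neq k$, and gives the sliding identity $g(a^{\ell - k}) = h(a^{\ell - k}) z_h$. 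Combined with $g(\sigma) = h(\sigma) z_h$, this shows that $\sigma'' := a^{-(\ell - k)}\sigma$ lies in $\E(g, \normh)$ and begins with $a^k b$.

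In the final step I apply Lemma \ref{hyper technicke lema} to the marked morphisms $g$ and $\normh$: the equation $g(\nu_a\sigma) = \normh(\nu_a\sigma)$ (with $a$-prefix exponent $k$) together with a second equation derived from $\sigma\nu_b$ through the sliding identity, of the form $g(a^r b v) = \normh(a^q b v)$ with $\{r, q\}$ not all equal to $k$, forces all the $a$-prefix exponents to coincide, contradicting $\ell \neq k$. The main obstacle is the remaining sub-case $g(a) \notin t^+$: there $g(a^m)$ is never a power of $t$, so no sliding identity of the form $g(a^m) = h(a^m) z_h$ can hold. In this sub-case one must derive the contradiction directly from the numerical incompatibilities $j_{g,a} = p\ell + j_b$ and $q + j_{g,a} = pk + j_b$ (which first force the divisibility $p \mid q$ and then constrain the shape of $g(a)$ via its bounded $t$-prefix and $t$-suffix, giving a clash with the markedness of $g$). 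Ruling out this sub-case cleanly is the most delicate part of the argument.
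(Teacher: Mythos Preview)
Your high-level plan (compare maximal $t$-prefixes on both sides, then invoke Lemma \ref{hyper technicke lema}) matches the paper's, but your case split is inverted and the main case is left as a genuine gap.

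The sub-case $g(a)\in t^+$ that you call ``natural'' is in fact immediately impossible. Since $g$ is marked, the maximal $t$-prefix of $z_hg(\nu_a\sigma)$ is $z_hg(a)^k$; by \eqref{vsudypritomnyprefix} the maximal $t$-prefix of $h(\nu_a\sigma)z_h$ is $h(a)^kz_h$. Equality of these forces $|g(a)|=|h(a)|$, contradicting condition \eqref{delky} of Definition \ref{counterexample}. So your sliding-identity machinery for this sub-case is unnecessary --- and your proposed application of Lemma \ref{hyper technicke lema} there is also not set up correctly: both $\nu_a\sigma$ and your $\sigma''=a^{-(\ell-k)}\sigma$ lie in $\E(g,\normh)$ with the \emph{same} leading $a$-block $a^k$, so you have not produced the mismatch the lemma requires.

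The real content is exactly your ``remaining sub-case'' $g(a)\notin t^+$, which you leave vague. The hypothesis that $t$ is a suffix of $g(a)$ is the whole point here: combined with $g(a)$ not commuting with $t$, Lemma \ref{factor}\eqref{factor3} shows $g(a)$ cannot be a prefix of $t^+$, so the maximal $t$-prefix of any $g(aw)$ equals the maximal $t$-prefix of $g(a)$. On the $h$-side you should use \eqref{vsudypritomnyprefix} directly to get that the maximal $t$-prefix of $h(bw)z_h$ is exactly $z_h$ (your $j_b$, defined from $h(b)$ alone, misbehaves when $h(b)$ is a prefix of $z_h$). Comparing both equalities then gives that the maximal $t$-prefix of $g(a)$ is $h(a)^\ell z_h$, and the $\nu_a\sigma$ equation yields $z_hh(a)^\ell z_h=h(a)^kz_h$, hence $z_h=h(a)^{k-\ell}$. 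Now $\sigma'=a^{-\ell}\sigma$ begins with $b$ (since $\sigma\notin a^+$, as $g(a)$ and $h(a)$ do not commute), and from $g(\sigma)=h(\sigma)z_h$ together with $z_h=h(a)^{k-\ell}$ one gets $z_hg(a^\ell\sigma')=h(a^k\sigma')z_h$. Paired with $z_hg(\nu_a\sigma)=h(\nu_a\sigma)z_h$, Lemma \ref{hyper technicke lema} (for $\normh$ and $g$) forces $k=\ell$, contradicting $z_h\neq\varepsilon$.
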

\begin{proof}
As in the previous proof, suppose that $(g,h)$ satisfies assumptions of the claim and let $t$ be the common primitive root of $z_h$ and $h(a)$.
 Let $a^{\,\ell}$ be the maximal $a$-prefix of $\sigma\nu_b$, and $a^k$ be the maximal $a$-prefix of $\nu_a\sigma$. 

First, suppose that $g(a)$ commutes with $h(a)$ and $z_h$. Since $g$ is marked,  
the maximal $t$-prefix of $z_hg(\nu_a\sigma)$ is $z_hg(a)^k$. From  \eqref{vsudypritomnyprefix}, we deduce that the word $z_h$ is the maximal $t$\/-prefix of $h(bu)z_h$ for any $u$. Hence the maximal $t$\/-prefix of $h(\nu_a\sigma)z_h$ is $h(a)^kz_h$. The equality $z_hg(\nu_a\sigma)=h(\nu_a\sigma)z_h$ now yields $z_hg(a)^k=h(a)^kz_h$, a contradiction with $\abs{g(a)}>\abs{h(a)}$. Therefore $g(a)$ and $h(a)$ do not commute.

Since, by assumption, $t$ is a suffix of $g(a)$, Lemma \ref{factor}\eqref{factor3} implies that the maximal $t$-prefix of $g(\sigma\nu_b)=h(\sigma\nu_b)$ is equal to the maximal $t$-prefix of $g(a)$. Using  \eqref{vsudypritomnyprefix} as above, we deduce from that this maximal $t$-prefix is equal to  $h(a)^\ell z_h$. In this case, the equality $z_hg(\nu_a\sigma)=h(\nu_a\sigma)z_h$ implies $z_hh(a)^\ell z_h=h(a)^kz_h$ whence
$z_h=h(a)^{k-\ell}$.

For $\sigma'=a^{-\ell}\sigma$ we obtain
\begin{align*}
z_hg(a^\ell\sigma')&=h(a^{k}\sigma')z_h, & z_hg(\nu_a\sigma)&=h(\nu_a\sigma)z_h.
\end{align*}
Since $g(a)$ and $h(a)$ do not commute, we deduce that $\sigma \notin a^+$, whence $\p_1(\sigma')=b$ and
morphisms $\normh$, $g$ satisfy assumptions of Lemma  \ref{hyper technicke lema}, a contradiction.
\end{proof}

\section{Case: $\iv g$ is not marked}
\label{g not marked}

In this section we deal with the situation when $\iv g$ is not marked. Note that then $\iv h$is marked by Lemma \ref{principalni markovanost}, and verify that $(\iv h\circ \pi,\iv g\circ \pi)$ is also a counterexample. Recall that $\pi$ exchanges letters $a$ and $b$, and it is applied in order to satisfy the condition \eqref{delky} of Definition \ref{counterexample}.
This allows to suppose 
 \begin{align}
 \abs{\kk z_g}\geq\abs{z_h} \label{delsi zetko}.
 \end{align}
More precisely, if $\abs{\kk z_g}<\abs{z_h}$, then we consider $(\iv h\circ \pi,\iv g\circ \pi)$, instead of $(g,h)$.

The equality (\ref{vsudypritomnyprefix}) applied to reversals implies that $\kk z_g$ is a suffix of any $g(u)$ long enough.
Especially,
\begin{align}
&\text{$\kk z_g$ is a suffix of $g(a)^+$}, \label{vsudypritomnysuffixA}\\
&\text{$\kk z_g$ is a suffix of $g(b)^+$}. \label{vsudypritomnysuffixB}
\end{align}
Since $z_h$ is a suffix of $g(\sigma)$, which is suffix comparable with $\kk z_g$, we deduce from \eqref{delsi zetko} that 
 \begin{align} \label{zhsufzg}
  z_h\in\s(\kk z_g).
 \end{align}
\[
\begin{tikzpicture}
\preghA{10}
\jensigma{80}{55}
\hordashA{35}{80}{{\scriptsize $\kk z_g$}}
\uzel{25}{$\sigma$}
\podpisA{55}{80}{$z_h$}
\end{tikzpicture}
\]
The following claim excludes the situation of this section.

\begin{claim}\label{spor zg non-empty}
There is no counterexample with $\iv g$ not marked.
\end{claim}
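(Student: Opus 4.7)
Suppose for contradiction $(g,h)$ is a counterexample with $\iv g$ not marked. The preamble to this claim has already established $\kk z_g \neq \varepsilon$, $z_h \in \s(\kk z_g)$, and $z_h$ is a suffix of both $g(a)^+$ and $g(b)^+$ (via \eqref{vsudypritomnysuffixA}, \eqref{vsudypritomnysuffixB}, \eqref{zhsufzg}). The plan is to reduce to one of the exclusions in Claim~\ref{nekomutuje s b} or Claim~\ref{nekomutuje s a}, thereby obtaining the desired contradiction.

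First I would prove that $\s_1(\nu_a) = \s_1(\nu_b)$. By Lemma~\ref{principalni markovanost}, $\iv h$ is marked, so $h(a)$ and $h(b)$ end in distinct letters; by the assumption of this section, $\iv g$ is not marked, so $g(a)$ and $g(b)$ share a common last letter. Matching the last letters in the equalities $g(\sigma\nu_c) = h(\sigma\nu_c)$ for $c \in \{a,b\}$ forces $\s_1(\nu_a)$ and $\s_1(\nu_b)$ to coincide with the unique letter $y \in \{a,b\}$ such that $h(y)$ ends in the common last letter of $g(a)$ and $g(b)$.

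Next I would derive a right-side mirror of $g(\sigma) = h(\sigma)z_h$. Let $\tau$ be the maximal common suffix of $\sigma\nu_a$ and $\sigma\nu_b$; since $\ee(g,h)$ is a bifix code, $\tau$ is a proper suffix of both. Applying the reverse of Lemma~\ref{gprefix} on the $h$-side (valid because $\iv h$ is marked), the maximal common suffix of $h(\sigma\nu_a)$ and $h(\sigma\nu_b)$ equals $h(\tau)$. Applying the reverse of Lemma~\ref{prefix} on the $g$-side to the letters preceding $\tau$ in $\sigma\nu_a$ and $\sigma\nu_b$ (which differ by maximality of $\tau$), the same common suffix equals $\kk z_g \cdot g(\tau)$. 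Equating the two expressions gives
$$h(\tau) = \kk z_g\, g(\tau),$$
an analogue of $g(\sigma) = h(\sigma)z_h$ on the opposite side.

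The final step is to split on $\p_1(\sigma)$ and extract the commutation required. For $\p_1(\sigma) = b$, I would combine $z_h \in \s(g(b)^+)$ with $z_h g(\nu_b) = h(\nu_b)$ (using $\p_1(\nu_b) = b$, which forces $g(b)$ to be a prefix of $h(b)$) and the newly-derived identity $h(\tau) = \kk z_g\, g(\tau)$, and apply Lemma~\ref{factor}\eqref{factor3a} to conclude that $z_h$ commutes with $g(b)$; Claim~\ref{nekomutuje s b} then yields the contradiction. For $\p_1(\sigma) = a$, a symmetric analysis (with $h(a)$ a prefix of $g(a)$) should yield that $z_h$ commutes with $h(a)$ and that their common primitive root is a suffix of $g(a)$, so Claim~\ref{nekomutuje s a} applies. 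The main obstacle is this last step: the delicate interplay between the left-side overflow $z_h$ and the right-side overflow $\kk z_g$ requires careful length analysis (comparing $|z_h|$ with $|g(a)|, |g(b)|, |h(a)|, |h(b)|$), separate handling of the degenerate case where $\tau$ extends back into $\sigma$, and repeated application of Lemma~\ref{factor} parts \eqref{factor3a} and \eqref{factor4} as the core combinatorial levers.
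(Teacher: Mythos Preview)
Your overall plan---split on $\p_1(\sigma)$ and reduce to Claim~\ref{nekomutuje s a} or Claim~\ref{nekomutuje s b}---matches the paper exactly. However, your first two steps (establishing $\s_1(\nu_a)=\s_1(\nu_b)$ and the right-side identity $h(\tau)=\kk z_g\,g(\tau)$) are unnecessary detours: the paper never introduces $\tau$ at all, and the ``delicate interplay'' you anticipate between $z_h$ and $\kk z_g$ simply does not arise. The only role of $\kk z_g$ is the one already in the preamble, namely \eqref{zhsufzg} combined with \eqref{vsudypritomnysuffixA}--\eqref{vsudypritomnysuffixB}, which tells you that $z_h$ is a suffix of $g(a)^+$ and of $g(b)^+$.

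The paper's argument is entirely local to the left end. For $\p_1(\sigma)=a$: from $g(\sigma)=h(\sigma)z_h$ and $\abs{g(a)}>\abs{h(a)}$ one gets $h(a)\in\p(g(a))$; from $z_hg(\nu_a)=h(\nu_a)$ one gets that $h(a)z_h$ is a prefix of $z_hg(a)$. Since $z_hh(a)$ is also a prefix of $z_hg(a)$ of the same length, $z_hh(a)=h(a)z_h$. Their common primitive root $t$ is then a suffix of $z_h$, hence of $\kk z_g$ by \eqref{zhsufzg}, hence of $g(a)$ by \eqref{vsudypritomnysuffixA} and $\abs t\leq\abs{h(a)}<\abs{g(a)}$; Claim~\ref{nekomutuje s a} finishes. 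For $\p_1(\sigma)=b$: from \eqref{vsudypritomnysuffixB} and \eqref{zhsufzg}, $z_hg(b)$ is a suffix of $g(b)^+$; from $g(\sigma)=h(\sigma)z_h$, $g(b)\in\p(h(b))$; from $z_hg(\nu_b)=h(\nu_b)$, $h(b)$ and $z_hg(b)$ are prefix-comparable, so $g(b)\in\p(z_hg(b))$. Lemma~\ref{factor}\eqref{factor3a} then gives $z_h g(b)=g(b)z_h$ directly, and Claim~\ref{nekomutuje s b} finishes. No length dichotomies, no degenerate sub-cases, no use of Lemma~\ref{factor}\eqref{factor4}.
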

\begin{proof}
Suppose that $(g,h)$ is such a  counterexample.

\cislo{1.} Suppose first $\p_1(\sigma)=a$.
The equality $g(\sigma)=h(\sigma)z_h$ yields $h(a)\in \p(g(a))$, and $z_hg(\nu_a)=h(\nu_a)$ implies that $h(a)z_h$
is a prefix of $z_hg(a)$. Thus $z_hh(a)=h(a)z_h$. 

Let $t$ be the common primitive root of $h(a)$ and $z_h$. From \eqref{zhsufzg} we deduce that $t$ is a suffix of $\kk z_g$, and \eqref{vsudypritomnysuffixA} together with $\abs{g(a)}>\abs{h(a)}\geq\abs t$ yields that $t$ is a suffix of  $g(a)$.
This is a contradiction with Claim \ref{nekomutuje s a}.

\cislo{2.} Suppose then that $\p_1(\sigma)=b$.
From \eqref{vsudypritomnysuffixB} and (\ref{zhsufzg}) we deduce that $z_hg(b)$ is a suffix of $g(b)^+$.
Equalities $g(\sigma)=h(\sigma)z_h$ and  $z_hg(\nu_b)=h(\nu_b)$ imply that $g(b)$ is a prefix of $h(b)$, and that $h(b)$ is comparable with $z_hg(b)$ respectively. Therefore $g(b)$ is a prefix of $z_hg(b)$, and Lemma \ref{factor}\eqref{factor3a} yields that $g(b)$ and $z_h$ commute, a contradiction with Claim \ref{nekomutuje s b}.
\end{proof}

\section{Case: $\iv h$ is not marked}\label{h not marked}

In this subsection we consider the situation when $\iv g$ is marked and $\iv h$ is not. We shall not exclude this case directly. Instead we reduce it to the case when both $\iv g$ and $\iv h$ are marked.

To accomplish this plan we first we need a description of possible counterexample structure that is more precise than Lemma \ref{counterexample structure}.

\begin{lem}\label{presna struktura}
Let $(g,h)$ be a  counterexample. Then one of the following possibilities takes place.
\begin{list}{\rm(\Alph{cla})}{\usecounter{cla}}
\item \label{presna struktura1} There exist words $\sigma, \mu_a,
\mu_b\in A^+$, and $\tau\in A^*$ such that
\begin{align*}
\ee(g,h)&=\{\sigma\mu_a\tau,\,\sigma\mu_b\tau\},
\end{align*}
where
\begin{align*}
z_hg(\mu_a)\kk z_h&=h(\mu_a),& g(\sigma)&=h(\sigma)z_h,\\
z_hg(\mu_b)\kk z_h&=h(\mu_b),& g(\tau)&=\kk z_h h(\tau),\\
\end{align*}
and
\begin{align*}
\p_1(\mu_a)&=a, & \p_1(\mu_b)&=b, & \s_1(\mu_a)&\neq\s_1(\mu_b)\,.
\end{align*}
\[
\begin{tikzpicture}
\preghA{10}
\ramec{0}{170}
\stredA{70}{40} \stredA{130}{150}
\uzel{20}{$\sigma$} \uzel{100}{$\mu_x$} \uzel{160}{$\tau$}
\podpisA{40}{70}{$z_h$} \podpisA{130}{150}{$\kk z_h$}
\end{tikzpicture}
\]
\item \label{presna struktura2} There exist words $\zeta, \mu, \rho, \eta\in A^+$ such that
\[
\ee(g,h)=\zeta(\rho\mu)^*\rho\eta=\zeta\rho(\mu\rho)^*\eta,
\]
and
\begin{align*}
g(\zeta)\kk z_h&=h(\zeta),& z_hg(\mu)\kk z_h&=h(\mu),& \p_1(\mu)&\neq\p_1(\eta), \\
g(\rho)&=\kk z_h h(\rho)z_h, &z_hg(\eta)&=h(\eta), & \s_1(\mu)&\neq\s_1(\zeta).
\end{align*}
\[
\begin{tikzpicture}
\preghA{15}
\ramec{-5}{245}
\stredA{20}{40} \stredA{90}{60} \stredA{150}{170} \stredA{220}{190}
\uzel{8}{$\zeta$} \uzel{50}{$\rho$} \uzel{120}{$\mu$} \uzel{180}{$\rho$}  \uzel{233}{$\eta$}
\podpisA{20}{40}{$\kk z_h$} \podpisA{60}{90}{$z_h$} \podpisA{150}{170}{$\kk z_h$} \podpisA{190}{220}{$z_h$}
\end{tikzpicture}
\] 
\end{list}
\end{lem}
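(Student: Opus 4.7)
The plan is to combine the top-level classification of $\E(g,h)$ from Theorem \ref{char} with two versions of the overflow equality Lemma \ref{cor1 revisited}. By Claim \ref{spor zg non-empty} the reversal $\iv g$ must be marked in any counterexample, so applying Lemma \ref{cor1 revisited} to the principal pair $(\iv g, \iv h)$ and translating through the identity $\iv g(\iv u) = \iv{g(u)}$ gives a suffix counterpart: if $c, c' \in \E(g,h)$ are not suffix-comparable and $\tau$ is their longest common suffix, then $g(\tau) = \kk z_h h(\tau)$. Theorem \ref{char}(\ref{char3}) together with the rank-$\geq 2$ hypothesis then splits the analysis into two cases: (i) $\ee(g,h) = \{\alpha_0, \beta_0\}$ with both generators nonempty, and (ii) $\ee(g,h)$ is the infinite bifix code $\alpha_0\gamma_0^*\beta_0$ with $\alpha_0, \beta_0, \gamma_0 \in A^+$.

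For case (i), which will yield (\ref{presna struktura1}), I would let $\sigma := \alpha_0 \wedge \beta_0$ and let $\tau$ be the longest common suffix of $\alpha_0, \beta_0$. By Lemma \ref{counterexample structure}, $\sigma$ is nonempty (and $\tau$ may be empty). Factor $\alpha_0 = \sigma\mu_a\tau$ and $\beta_0 = \sigma\mu_b\tau$, naming so that $\p_1(\mu_a) = a$ and $\p_1(\mu_b) = b$; maximality of $\tau$ forces $\s_1(\mu_a) \neq \s_1(\mu_b)$. Lemma \ref{cor1 revisited} gives $g(\sigma) = h(\sigma)z_h$ and its suffix counterpart gives $g(\tau) = \kk z_h h(\tau)$. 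Substituting into $g(\alpha_0) = h(\alpha_0)$ and cancelling $h(\sigma)$ on the left and $h(\tau)$ on the right produces $z_h g(\mu_a)\kk z_h = h(\mu_a)$, and symmetrically for $\mu_b$, completing case (\ref{presna struktura1}).

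For case (ii), which will yield (\ref{presna struktura2}), I would apply the same two overflow equalities to the two shortest elements $w_0 = \alpha_0\beta_0$ and $w_1 = \alpha_0\gamma_0\beta_0$, producing $\sigma, \tau$ with $g(\sigma) = h(\sigma)z_h$ and $g(\tau) = \kk z_h h(\tau)$. The crux is to find a nonempty word $\rho$ with $\sigma = \zeta\rho$ and $\tau = \rho\eta$ ($\zeta, \eta \in A^+$), i.e., to exhibit $\sigma$ and $\tau$ as overlapping infixes of $w_0$. This overlap is forced by the infinite family $w_k = \alpha_0\gamma_0^k\beta_0$: each $w_k$ with $k \geq 2$ inherits both the prefix $\sigma$ and the suffix $\tau$, and the compatibility of the two corresponding overflow regions inside $w_k$ forces $|\sigma| + |\tau| > |w_0|$ and pins down a periodic overlap. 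With $\rho$ so defined and $\mu$ set to the segment of $w_1$ between the two copies of $\rho$, expansion of $g(w_0) = h(w_0)$ and $g(w_1) = h(w_1)$ combined with the two overflow equalities yields $g(\zeta)\kk z_h = h(\zeta)$, $g(\rho) = \kk z_h h(\rho)z_h$, $z_h g(\mu)\kk z_h = h(\mu)$, and $z_h g(\eta) = h(\eta)$; the letter-inequalities $\p_1(\mu) \neq \p_1(\eta)$ and $\s_1(\mu) \neq \s_1(\zeta)$ come from the maximality of $\sigma$ and $\tau$.

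The main obstacle is the combinatorial argument in case (ii) that extracts $\rho$: showing that the prefix-overflow region carrying $z_h$ and the suffix-overflow region carrying $\kk z_h$ must overlap inside $w_0$ (equivalently $|\sigma|+|\tau| > |w_0|$), and that this overlap is consistent across all $w_k$, will require careful length bookkeeping and repeated use of the periodicity conclusions in Lemma \ref{factor}. Once the decomposition is in hand, every other claim in case (\ref{presna struktura2}) is routine substitution in the two master overflow equations.
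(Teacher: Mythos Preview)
Your overall plan---apply Lemma \ref{cor1 revisited} both to $(g,h)$ and (via Claim \ref{spor zg non-empty}) to $(\iv g,\iv h)$, then read off the factor equalities---is exactly what the paper does. The difference is in how the two cases are separated, and in how $\rho$ is produced; here your proposal has a genuine gap.

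The paper does \emph{not} appeal to Theorem \ref{char}; it takes two shortest elements $\alpha,\beta$ of $\ee(g,h)$, sets $\sigma=\alpha\wedge\beta$, $v_0=\sigma^{-1}\alpha$, $v_1=\sigma^{-1}\beta$, and splits on whether $v_0,v_1$ are \emph{suffix-comparable}. If not, one gets case~(\ref{presna struktura1}) immediately, together with a direct argument that $\{\alpha,\beta\}$ exhausts $\ee(g,h)$. If yes, say $v_1=uv_0$, then $z_hg(u)=h(u)z_h$ follows at once, and $\rho$ is simply \emph{defined} as the longest common suffix of $u$ and $\sigma$; it is nonempty because $z_h$ is a suffix of both $g(u)$ and $g(\sigma)$ and $\iv g$ is marked. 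One then sets $\eta=v_0$, $\zeta=\sigma\rho^{-1}$, $\mu=u\rho^{-1}$, checks $\tau=\rho\eta$ via the reversed version of Lemma \ref{cor1 revisited}, and reads off all the displayed identities. No length bookkeeping, no periodicity lemma, no analysis of the whole family $w_k$.

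By contrast, your case~(ii) proposes to extract $\rho$ as an overlap of $\sigma$ and $\tau$ inside $w_0$ and to justify $|\sigma|+|\tau|>|w_0|$ through ``compatibility of overflow regions'' across the $w_k$ and Lemma~\ref{factor}. You correctly flag this as the crux, but as stated it is only a hope: there is no argument given that the overflow positions in different $w_k$ line up to force the inequality, and it is not clear which clause of Lemma~\ref{factor} would deliver it. The paper's one-line definition of $\rho$ bypasses all of this.

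There is also a smaller gap in your case~(i): writing $\alpha_0=\sigma\mu_a\tau$ and $\beta_0=\sigma\mu_b\tau$ with $\mu_a,\mu_b\in A^+$ silently assumes $|\sigma|+|\tau|<\min(|\alpha_0|,|\beta_0|)$, i.e.\ that $v_0,v_1$ are not suffix-comparable. Theorem~\ref{char} alone does not rule out the suffix-comparable situation inside the two-generator alternative; you would still have to show that suffix-comparability forces infinitely many minimal elements, which is precisely the content of the paper's case~2. So invoking Theorem~\ref{char} does not actually spare you that analysis.
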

\begin{proof}
Let $\alpha$ and $\beta$ be two shortest elements of $\ee(g,h)$. Put $\sigma=\alpha \wedge \beta$, and similarly let $\tau$ be the longest common suffix of $\alpha$ and $\beta$. By Lemma \ref{cor1 revisited}, applied first to $g$ and $h$, and then to $\iv g$ and $\iv h$, we have 
\begin{align}
g(\sigma)&=h(\sigma)z_h, \label{alfa}\\
g(\tau)&=\kk z_hh(\tau)\,. 
\end{align}
Denote by $v_0$ and $v_1$ the words $\sigma^{-1}\alpha$ and $\sigma^{-1}\beta$. Clearly, $\p_1(v_0)\neq \p_1(v_1)$.

	\cislo{1.} First suppose that $v_0$ and $v_1$ are not suffix-comparable. Then with a suitable choice of $i,j\in \{0,1\}$ we have  $v_i=\mu_a\tau$, $v_j=\mu_b\tau$, and $\p_1(\mu_\ell)=\ell$ for both $\ell\in A$.
	
	Therefore $\{\sigma\mu_a\tau,\sigma\mu_b\tau\}=\{\alpha,\beta\}$.  We show that $\sigma$ is the unique prefix of $\alpha$ ($\beta$ resp.) satisfying \eqref{alfa}. 
	
	Suppose first that $\sigma_1\sigma_2=\sigma$ and $g(\sigma_1)=h(\sigma_1)z_h$. Then it is easy to see that also $\sigma_1v_i\in \E(g,h)$, $i=0,1$, a contradiction with $\alpha$ and $\beta$ being the shortest elements of $\ee(g,h)$.

Let then $v_i=w_1w_2$, for some $i\in \{0,1\}$, and  $g(\sigma w_1)=h(\sigma w_1)z_h$. Then $\sigma w_2$ is an element of $\E(g,h)$, which is shorter than $\sigma v_i$. Since $\alpha$ and $\beta$ are the shortest elements of $\ee(g,h)$, it remains that $\sigma w_2=\sigma v_{1-i}$. But then $v_0$ and $v_1$ are suffix-comparable, a contradiction.

	 We still have to show that the set $\{\alpha,\beta\}$ generates whole $\E(g,h)$. Suppose that $w$ is an element of $\E(g,h)$ such that neither $\alpha$, nor $\beta$ is a prefix of $w$, and consider words $w_i=w\wedge \sigma v_i$, $i=0,1$. Lemma \ref{cor1 revisited} implies that $g(w_i)=h(w_i)z_h$, for both $i=0,1$. 
It is easy to deduce that $w_0$ and $w_1$ cannot be both  equal to $\sigma$, a contradiction with the previous paragraph. Consequently, we have the case (A).

\cislo{2.}
	 Suppose now, by symmetry, that $v_1=uv_0$. Then $z_hg(u)=h(u)z_h$ and $\sigma u^* v_0$ is a subset of $\E(g,h)$. Moreover, $\sigma$ and $\sigma u$ are the only prefixes of $\sigma u v_0$ satisfying \eqref{alfa}. The proof is similar as above: any other prefix satisfying \eqref{alfa} allows to drop a part of the word, which contradicts the minimality of $\alpha$ and $\beta$. We omit details. This, in particular, implies that $u$ is not a suffix of $\sigma$.

We show that $\sigma u^* v_0$ generates the whole equality set. Suppose the contrary, and let $w$ be the shortest element of $\ee(g,h)$ that is not in $\sigma u^* v_0$. As above, the words $w_0=w \wedge \sigma v_0$ and $w_1=w\wedge \sigma uv_0$ satisfy $g(w_i)=h(w_i)z_h$, $i=0,1$. Therefore $w_0=\sigma$, by the previous paragraph. From $\p_1(u)\neq \p_1(v_0)$, one obtains that $w_1$ is strictly longer than $\sigma$, which implies $w_1=\sigma u$. Therefore $w=\sigma u w'$, for some $w'$. Hence $\sigma w'$ is an element of $\ee(g,h)$ shorter than $w$, and thus an element of $\sigma u^* v_0$. Therefore $w'\in u^*v_0$ and $w\in\sigma u^* v_0$, a contradiction.

We have seen that $u$ is not a suffix of $\sigma$. Also $\sigma$ cannot be a suffix of $u$, otherwise $\sigma u\sigma^{-1}\in \E(g,h)$ will contradict the minimality of $\alpha$ and $\beta$. We can therefore define $\rho$ as the longest common suffix of $u$ and $\sigma$. The word $\rho$ is not empty since $z_h$ is a suffix of both $g(u)$ and $g(\sigma)$, and $\iv g$ is marked.  Denote, $\eta=v_0$, $\zeta=\sigma\rho^{-1}$ and $\mu=u\rho^{-1}$.

Note that the word $\tau=\rho\eta$ is the longest common suffix of $\alpha$ and $\beta$. Lemma \ref{cor1 revisited} applied to $(\iv g, \iv h)$ yields $g(\rho\eta)=\kk z_h h(\rho\eta)$.	The verification of all claims in case (B) is now straightforward. 
\end{proof}

Note that the previous lemma proves, in particular, Theorem \ref{char}\eqref{char3}.

The following lemma allows to suppose that both $\iv g$ and $\iv h$ are marked, which was the task of this section. 

\begin{claim}\label{koncove zh prazdne}
Let $(g,h)$ be a counterexample. Then there exists also a counterexample $(g_1,h_1)$ such that both $\iv {g_1}$ and $\iv {h_1}$ are marked.
\end{claim}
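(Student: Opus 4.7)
The plan is to take $g_1 := g$ and to replace $h$ by its right-marked analogue $\tilde h$, defined by symmetry with equation~\eqref{marked definition} as
\[
\tilde h(u) = \kk z_h \, h(u) \, \kk z_h^{-1}, \quad u \in A.
\]
I first verify that $\tilde h$ is a well-defined morphism $A^* \to \Sigma^*$ by mirroring the argument used for $\normh$. Since $\kk z_h \neq \varepsilon$ (the standing hypothesis that $\iv h$ is not marked), both $\tilde h(a)$ and $\tilde h(b)$ begin with $\p_1(\kk z_h)$, so $\tilde h$ is not marked, while by construction $\iv{\tilde h}$ is marked. The morphism $g$ is unchanged, so both $g$ and $\iv g$ remain marked.

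To produce nonempty equalities for $(g, \tilde h)$ I use Lemma~\ref{presna struktura}. Note that $\kk z_h \neq \varepsilon$ forces $\tau \neq \varepsilon$ in case~(A), because $g(\tau) = \kk z_h h(\tau)$ would otherwise give $\kk z_h = \varepsilon$. In case~(A), a direct calculation using the identities of the lemma yields, for $x \in \{a, b\}$,
\[
g(\tau\sigma\mu_x) = \kk z_h \, h(\tau)\,h(\sigma)\,z_h\,g(\mu_x) = \kk z_h \, h(\tau\sigma\mu_x)\,\kk z_h^{-1} = \tilde h(\tau\sigma\mu_x),
\]
so both $\tau\sigma\mu_a$ and $\tau\sigma\mu_b$ lie in $\E(g, \tilde h)$. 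In case~(B) the analogous telescoping shows that $\rho(\mu\rho)^n\eta\zeta \in \E(g, \tilde h)$ for every $n \geq 0$. For the rank bound I will invoke the letter distinctions from Lemma~\ref{presna struktura}: in case~(A), $\s_1(\mu_a) \neq \s_1(\mu_b)$ stops $\tau\sigma\mu_a$ and $\tau\sigma\mu_b$ from sharing a primitive root; in case~(B), $\p_1(\mu) \neq \p_1(\eta)$ forces $\rho\eta\zeta$ and $\rho\mu\rho\eta\zeta$ to disagree at position $|\rho|+1$, so they too cannot share one.

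Finally, to check the remaining axioms of Definition~\ref{counterexample}, I observe that $|\tilde h(x)| = |h(x)|$ gives condition~\eqref{delky}, and non-periodicity of $\tilde h$ follows from non-periodicity of $h$ by Lemma~\ref{prim}. Therefore $(g, \tilde h)$ is a counterexample whose reverses $\iv g$ and $\iv{\tilde h}$ are both marked. The delicate step is the well-definedness of $\tilde h$ on individual letters; this parallels the argument for $\normh$ and, like it, succeeds because of non-periodicity of $h$ together with the suffix-version of~\eqref{vsudypritomnyprefix}.
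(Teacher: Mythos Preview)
Your proof is correct and follows essentially the same route as the paper: set $g_1=g$, replace $h$ by its suffix-marked version $\tilde h(u)=\kk z_h\,h(u)\,\kk z_h^{-1}$, and use the structural decomposition of Lemma~\ref{presna struktura} to exhibit two non-commuting elements of $\E(g,\tilde h)$. The only cosmetic differences are that in case~(A) you invoke $\s_1(\mu_a)\neq\s_1(\mu_b)$ where the paper uses $\p_1(\mu_a)\neq\p_1(\mu_b)$, and in case~(B) you work with the pair $\rho\eta\zeta$, $\rho\mu\rho\eta\zeta$ rather than the paper's $\rho\mu$, $\rho\eta\zeta$; both choices yield incomparable words and hence rank at least two.
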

\begin{proof} 
Suppose that $(g,h)$ is a  counterexample. Then $\iv g$ is marked by Claim \ref{spor zg non-empty}.
Suppose that $\kk z_{h}\neq \varepsilon$ and define $g_1$ and $h_1$ by
 \begin{align*}
 g_1(u)&=g(u), & h_1(u)&=\kk z_{h}h(u)(\kk z_{h})^{-1}.
 \end{align*}
 It is not difficult to see that the morphism $h_1$ is well defined, it is not marked while $\iv {h_1}$ is marked.
It remains to show that $\E(g_1,h_1)$ has rank at least two.
This is a consequence of the characterization presented in Lemma \ref{presna struktura}. (We shall use its notation.)

 \cislo{1.}If the case (\ref{presna
struktura1}) of Lemma \ref{presna struktura} takes place, then
 $$\{\tau\sigma\mu_a,\,\tau\sigma\mu_b\}\subset \E(g_1,h_1)\,.$$
\[
\begin{tikzpicture}
\ramec{0}{180}
\preghA{10}
\stredA{110}{60}
\uzel{30}{$\tau\sigma$} \uzel{145}{$\mu_x$}
\podpisA{60}{110}{$z_{h_1}=\kk z_{h} z_{h}$}
\end{tikzpicture}
\]
\cislo{2.} If, on the other hand, we have the case (\ref{presna
struktura2}) of the Lemma \ref{presna struktura}, then
 $$\{\rho\mu,\,\rho\eta\zeta\}\subset \E(g_1,h_1)\,.$$
\[
\begin{tikzpicture}
\ramec{0}{130}
\preghA{10}
\stredA{65}{15}
\uzel{87}{$\mu$} \uzel{8}{$\rho$}
\podpisA{15}{65}{$z_{h_1}=\kk z_{h} z_{h}$}
\ramec{170}{275}
\preghA{-160}
\stredA{235}{185}
\uzel{253}{$\eta\zeta$} \uzel{178}{$\rho$}
\podpisA{185}{235}{$z_{h_1}=\kk z_{h} z_{h}$}
\end{tikzpicture}
\]
Definitions in Lemma \ref{presna struktura} yield $\p_1(\mu_a)\neq \p_1(\mu_b)$ and $\p_1(\mu)\neq \p_1(\eta)$, whence the equality set has in both cases rank at least two.  
\end{proof}

\section{Case: $\iv g$ and $\iv h$ marked.} \label{both}

From now on we shall suppose that both $\iv g$ and $\iv h$ are marked. Consider Lemma \ref{presna struktura}. It is easy to note that the case \eqref{presna struktura1} of the lemma has to take place, and moreover, the word $\tau$ is empty. Therefore
  $$\ee(g,h)=\{\sigma \mu_a,\sigma \mu_b\},$$
with $\p_1(\mu_a)=a$, $\p_1(\mu_b)=b$, and $\s_1(\mu_a)\neq \s_1(\mu_b)$.

Note the following useful fact.
\begin{lem}\label{mirror}
	Let $(g,h)$ be a counterexample such that $\iv g$ and $\iv h$ are marked. Put $g_1=\iv g$ and $h_1=\iv{\normh}$. Then the pair $(g_1,h_1)$ is again a counterexample such that $\iv {g_1}$ and $\iv {h_1}$ are marked, and 
	$$\ee(g_1,h_1)=\{\iv \sigma \iv{\mu_a},\iv{\sigma}\iv{\mu_b}\}.$$
\end{lem}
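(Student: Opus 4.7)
The plan is to verify each clause of Definition~\ref{counterexample} for $(g_1,h_1)=(\iv g,\iv{\normh})$ and then to determine $\ee(g_1,h_1)$ via a description of $\ee(g,\normh)$ together with the anti-isomorphism between these two equality sets induced by reversal.

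Markedness and length conditions are straightforward. By hypothesis $g_1=\iv g$ is marked and $\iv{g_1}=g$ is marked, while $\iv{h_1}=\normh$ is marked by the very definition of the marked version. The morphism $h_1=\iv{\normh}$ itself is \emph{not} marked, because $z_h\neq\varepsilon$ (else $h=\normh$ would already be marked, contradicting the counterexample hypothesis), and then each $\normh(x)=z_h^{-1}h(x)z_h$ ends in $\s_1(z_h)$, giving $\s_1(\normh(a))=\s_1(\normh(b))$. Lengths are preserved under reversal and conjugation, so $|g_1(x)|=|g(x)|$ and $|h_1(x)|=|h(x)|$, and the inequalities of clause~(c) transfer directly from $(g,h)$.

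The substantive step is to identify $\ee(g,\normh)$. By the refinement of Lemma~\ref{presna struktura} fixed at the start of this section we have $g(\sigma)=h(\sigma)z_h$ and $z_hg(\mu_x)=h(\mu_x)$ for $x\in\{a,b\}$, and chaining these gives
$$z_hg(\mu_x\sigma)=h(\mu_x)h(\sigma)z_h=h(\mu_x\sigma)z_h,$$
so $\mu_a\sigma,\mu_b\sigma\in\E(g,\normh)$. Since both $g$ and $\normh$ are marked and these two words begin with different letters, Lemma~\ref{coref revisited} forces $\E(g,\normh)$ to have rank exactly two, with one generator of each first letter.

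The main (though routine) obstacle is to show that the witnesses $\mu_a\sigma,\mu_b\sigma$ are themselves the generators, i.e.\ that they are minimal in $\E(g,\normh)\setminus\{\varepsilon\}$. Given a nonempty proper prefix $e$ of $\mu_a\sigma$ with $z_hg(e)=h(e)z_h$ I distinguish three cases. If $e$ is a proper prefix of $\mu_a$, then $\sigma e$ is a proper nonempty prefix of $\sigma\mu_a$ satisfying $g(\sigma e)=h(\sigma e)z_h$; the proof of Lemma~\ref{presna struktura} shows that within $\sigma\mu_a$ the only such prefix is $\sigma$ itself, a contradiction. The case $e=\mu_a$ yields $h(\mu_a)=h(\mu_a)z_h$, hence $z_h=\varepsilon$, contradicting non-markedness of $h$. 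Finally, if $e=\mu_a e''$ with $\varepsilon\neq e''<\sigma$, cancelling $z_hg(\mu_a)=h(\mu_a)$ gives $g(e'')=h(e'')z_h$, a third contradiction with the same uniqueness. Hence $\ee(g,\normh)=\{\mu_a\sigma,\mu_b\sigma\}$. The equivalence $u\in\E(g_1,h_1)\iff\iv u\in\E(g,\normh)$ makes reversal an anti-isomorphism of monoids sending minimal generators to minimal generators, so
$$\ee(g_1,h_1)=\iv{\ee(g,\normh)}=\{\iv\sigma\,\iv{\mu_a},\,\iv\sigma\,\iv{\mu_b}\},$$
which in particular has rank two, confirming the remaining clause of the counterexample definition.
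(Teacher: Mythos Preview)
Your proof is correct and supplies exactly the details that the paper omits behind the single sentence ``The verification is straightforward.'' Your route---first computing $\ee(g,\normh)=\{\mu_a\sigma,\mu_b\sigma\}$ using the uniqueness of the prefix satisfying $g(\cdot)=h(\cdot)z_h$ established in the proof of Lemma~\ref{presna struktura}, then transporting by the reversal anti-isomorphism---is the natural one and is presumably what the author had in mind.
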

\begin{proof}
The verification is straightforward.
\end{proof}

In this section, we will also need to assume that the pair $(g,h)$ is a \emph{shortest counterexample}. That is,  $|\sigma\mu_a|+|\sigma\mu_b|$ is as small as possible. Shortest counterexample have the following important properties, which can be summarized as: there are no repeated overflows. The proof is similar to the proof of Lemma \ref{hyper technicke lema}. If there is a repeated overflow, then we can decompose the counterexample into blocks, and find a shorter counterexample, namely the pair of successor morphisms. Since both $\iv g$ and $\iv h$ are marked, we will consider their blocks, which are easier to deal with.  
\begin{lem}\label{shortest1}
	Let $(g,h)$ be a counterexample such that $\iv g$ and $\iv h$ are marked. Let two nonempty prefixes $\sigma_1$ and $\sigma_2$ of $\sigma$ satisfy $g(\sigma_1)=h(\sigma_2)$. Then $(g,h)$ is not a shortest counterexample.
\end{lem}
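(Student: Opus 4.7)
The plan is to derive a contradiction with the minimality assumption by producing a strictly shorter counterexample $(g',h')$, exploiting the hypothesis as an ``internal synchronisation'' inside the coincidence set.

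First, I will analyse the block structure of $\I(g,h)$. The pairs $(\sigma_1,\sigma_2)$, $(\sigma\mu_a,\sigma\mu_a)$ and $(\sigma\mu_b,\sigma\mu_b)$ all belong to $\I(g,h)$, so by Lemma \ref{freely} each admits a unique factorisation into minimal blocks from the bifix code $\ii(g,h)$. The identity
\[
(\sigma\mu_x,\sigma\mu_x) \;=\; (\sigma_1,\sigma_2)\cdot(\sigma_1^{-1}\sigma\mu_x,\sigma_2^{-1}\sigma\mu_x), \qquad x\in\{a,b\},
\]
shows both factorisations pass through $(\sigma_1,\sigma_2)$. Let $(p,q)$ be their greatest common block-prefix. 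Then $\sigma_1\leq p$ and $\sigma_2\leq q$, so $p,q$ are nonempty; moreover both are proper prefixes of $\sigma$, for $p=\sigma$ would give $(\sigma,q)\in\I(g,h)$, i.e.\ $h(q)=g(\sigma)=h(\sigma)z_h$, which is incompatible with $q\leq\sigma$ and $z_h\neq\varepsilon$, and symmetrically for $q$.

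Next I will exploit the divergence past $(p,q)$: write $(\sigma\mu_x,\sigma\mu_x) = (p,q)(e_x,f_x)T_x$, with $(e_a,f_a)\neq(e_b,f_b)$ in $\ii(g,h)$ and $T_x$ a (possibly trivial) product of blocks. This is the structural data from which $(g',h')$ will be built. Because $\iv g$ and $\iv h$ are both marked, the successor-morphism construction of Lemma \ref{seq1} applies, after passing to the marked companion $\normh$ on the appropriate side, to the block data above. The result is a pair of marked morphisms whose letter images are built from the blocks just identified. I then re-introduce the overflow on the $h$-side so that the new pair $(g',h')$ satisfies: $\E(g',h')$ has rank two, with minimal generators corresponding to $\mu_a$ and $\mu_b$ after the common prefix $(p,q)$ has been absorbed into the letter images; $g'$ is marked while $h'$ is not; and the length condition (c) of Definition \ref{counterexample} is inherited from $(g,h)$. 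Since $(p,q)\geq(\sigma_1,\sigma_2)$ componentwise and both $\sigma_1,\sigma_2$ are nonempty, the total length of the two minimal generators of $\E(g',h')$ is strictly smaller than $|\sigma\mu_a|+|\sigma\mu_b|$.

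The most delicate point will be verifying condition (b) of Definition \ref{counterexample} for $(g',h')$: the successor construction naturally outputs a marked pair, so the overflow has to be re-attached on the $h$-side in such a way that $h'$ becomes non-marked without disturbing the markedness of $g'$ or the equality-set structure. This is precisely where the two-sided markedness of $\iv g$ and $\iv h$, together with the proper-prefix inequalities $p<\sigma$ and $q<\sigma$ established above, is used: it guarantees that the residual overflow transplanted into $(g',h')$ faithfully mimics the role of $z_h$ in $(g,h)$, so $(g',h')$ is a genuine counterexample of Definition \ref{counterexample}.
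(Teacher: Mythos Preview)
Your overall strategy---pass to successor morphisms and show the resulting pair is a strictly shorter counterexample---is exactly the paper's. The gap is in where you run the block decomposition and, consequently, in the construction of $(g',h')$.

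You factorise inside $\I(g,h)$, but $h$ is \emph{not} marked, so Lemma~\ref{seq1} does not apply to that pair; there is no guarantee that $\ii(g,h)$ has only two elements, and your attempt to fix this by ``passing to the marked companion $\normh$ on the appropriate side'' mixes two different block structures (those of $\ii(g,h)$ and of $\ii(g,\normh)$) without explaining how they match up. The subsequent step of ``re-introducing the overflow on the $h$-side'' is then never actually carried out: you acknowledge yourself that verifying condition~(b) of Definition~\ref{counterexample} for $(g',h')$ is ``the most delicate point'', and the proposal stops at the plan.

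The paper sidesteps all of this by working on the reversal side: since $\iv g$ and $\iv h$ are both marked, the block decomposition and the successor construction of Lemma~\ref{seq1} apply directly to $\I(\iv g,\iv h)$, yielding successor morphisms $g_1,h_1$ with $\iv{g_1},\iv{h_1}$ marked. The three pairs $(\iv{\sigma\mu_a},\iv{\sigma\mu_a})$, $(\iv{\sigma\mu_b},\iv{\sigma\mu_b})$, $(\iv{\sigma_1},\iv{\sigma_2})$ decompose into those two blocks; back on the un-reversed side this produces $w,w'\in\E(g_1,h_1)$ with $g_1(w)=\sigma\mu_a$, $g_1(w')=\sigma\mu_b$, sharing a nonempty prefix (because $(\sigma_1,\sigma_2)$ is a common prefix of both equality pairs) and with $|w|+|w'|<|\sigma\mu_a|+|\sigma\mu_b|$. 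There is then no need to hand-build a non-marked morphism: Lemma~\ref{uprava} already converts ``two elements of $\ee(g_1,h_1)$ with the same first letter'' into a genuine counterexample, and preserves the markedness of the reversals. So the ``re-introduce the overflow'' manoeuvre you worry about is simply unnecessary once you work on the correct side.
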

\begin{proof}
 Lemma \ref{coref} applied to morphisms $\iv g$ and $\iv h$ implies that pairs $(\iv{\sigma\mu_a},\iv{\sigma\mu_a})$, $(\iv{\sigma\mu_b},\iv{\sigma\mu_b})$ and $(\iv{\sigma_1},\iv{\sigma_2})$ can be factorized into a sequence of pairs $(\iv e,\iv f)$ and $(\iv{e'},\iv{f'})$ such that $\iv g(\iv e)=\iv h(\iv f)$ and $\iv g(\iv{e'})=\iv h(\iv{f'})$. Turning to reversals and defining $g_1$ and $h_1$ as in \eqref{jdeto} we obtain words $w,w'\in \E(g_1,h_1)$ such that 
\begin{align*}
g_1(w)&=h_1(w)=\sigma\mu_a, & g_1(w')&=h_1(w')=\sigma\mu_b.
\end{align*}
  Note also that $\iv{g_1}$ and $\iv{h_1}$ are marked by Lemma \ref{seq1}.
 
 Since $(\sigma_1,\sigma_2)$ is a prefix of both $(\sigma\mu_a,\sigma\mu_a)$ and $(\sigma\mu_b,\sigma\mu_b)$, the words $w$ and $w'$ have a nonempty common prefix. From $g\neq h$, it is also easy to see that $|w|+|w|'<|\sigma\mu_a|+|\sigma\mu_b|$. Lemma \ref{uprava} concludes the proof. 
\end{proof}

\begin{lem}\label{shortest2}
	Let $(g,h)$ be a shortest counterexample such that $\iv g$ and $\iv h$ marked. Let prefixes $\sigma_1$, $\sigma_2$, $\sigma_1'$ and $\sigma_2'$ of $\sigma$ satisfy 
\begin{align}\label{16080}
	h(\sigma_2)^{-1}g(\sigma_1)=h(\sigma_2')^{-1}g(\sigma_1').
\end{align}
	 Then $\sigma_1=\sigma_1'$ and $\sigma_2=\sigma_2'$.
\end{lem}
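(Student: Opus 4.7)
The plan is a proof by contradiction ultimately invoking Lemma \ref{shortest1}. Assume distinct pairs $(\sigma_1,\sigma_2)\neq(\sigma_1',\sigma_2')$ of prefixes of $\sigma$ share a common overflow $w:=h(\sigma_2)^{-1}g(\sigma_1)=h(\sigma_2')^{-1}g(\sigma_1')$.

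First I would normalise the configuration. The common signed overflow length yields $|g(\sigma_1)|-|g(\sigma_1')|=|h(\sigma_2)|-|h(\sigma_2')|$; as $g$ and $h$ are non-erasing, $|\sigma_1|<|\sigma_1'|$ is equivalent to $|\sigma_2|<|\sigma_2'|$, while simultaneous equality of lengths across both coordinates would force the pairs to coincide (the four words being prefixes of $\sigma$). Up to the symmetry of the hypothesis I may therefore assume $\sigma_1<\sigma_1'$ and $\sigma_2<\sigma_2'$ strictly, and write $\sigma_1'=\sigma_1 u$, $\sigma_2'=\sigma_2 v$ with $u,v\in A^{+}$.

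Working (by symmetry) in the positive-overflow case $g(\sigma_i)=h(\sigma_i)w$, substitution in $g(\sigma_1')=g(\sigma_1)g(u)=h(\sigma_2)h(v)w$ yields the key conjugacy
\[
w\,g(u)=h(v)\,w.
\]
Iterating gives $wg(u)^n=h(v)^n w$ for every $n\geq 0$, so $w$ is a prefix of $h(v)^{+}$. Writing $w=h(v)^k w_0$ with $|w_0|<|h(v)|$, the reduced equation $w_0 g(u)=h(v)w_0$ combined with length comparison forces $h(v)=w_0 p$ and $g(u)=p w_0$ for some word $p$ of length $|h(v)|-|w_0|$.

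The contradiction comes from Lemma \ref{shortest1}. If $w=\varepsilon$, that lemma applied to $(\sigma_1,\sigma_2)$ forces $\sigma_1=\sigma_2=\varepsilon$, after which $g(\sigma_1')=h(\sigma_2')$ with $\sigma_1',\sigma_2'$ nonempty prefixes of $\sigma$ violates Lemma \ref{shortest1} directly; so $w\neq\varepsilon$. For $w\neq\varepsilon$, the shifted pair $(\sigma_1,\sigma_2')$ is a pair of prefixes of $\sigma$ whose overflow is readily computed to be $h(v)^{-1}w$, with strictly smaller signed length than $w$. A finite descent on the absolute overflow length---using the periodic structure $h(v)=w_0 p$, $g(u)=p w_0$, and the symmetric argument whenever the overflow changes sign---should eventually produce two nonempty prefixes of $\sigma$ with coinciding $g$- and $h$-images, contradicting Lemma \ref{shortest1}. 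The negative-overflow case of the original setup is handled symmetrically by exchanging the roles of $g$ and $h$. The main obstacle I expect is making this finite descent rigorous while ensuring all intermediate prefixes remain inside $\sigma$ and that the sign-change transition is controlled using the reduced conjugacy $h(v)=w_0 p$, $g(u)=p w_0$.
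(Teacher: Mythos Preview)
Your normalisation and the derivation of the conjugacy $w\,g(u)=h(v)\,w$ are fine, and the $w=\varepsilon$ case is handled correctly. The gap is exactly where you suspect it: the descent does not iterate. After passing to the shifted pair $(\sigma_1,\sigma_2')$ with overflow $h(v)^{-1}w$, you have only \emph{one} pair of prefixes of $\sigma$ realising that new overflow. The hypothesis of the lemma, and hence the conjugacy $w\,g(u)=h(v)\,w$, required \emph{two} such pairs; without a second one you cannot repeat the step. The obvious candidate for a second pair, say $(\sigma_1 u^{-1},\sigma_2)$ or $(\sigma_1',\sigma_2' v)$, need not consist of prefixes of $\sigma$: there is no reason for $u$ to be a suffix of $\sigma_1$, nor for $\sigma_2' v$ to lie inside $\sigma$. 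The periodicity data $h(v)=w_0p$, $g(u)=pw_0$ do not by themselves manufacture the missing prefixes. So the descent, as sketched, stalls after one move.

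The paper avoids the descent entirely by exploiting the hypothesis you never used: both $\iv g$ and $\iv h$ are marked, so $z_{\iv g}=z_{\iv h}=\varepsilon$. From the overflow equality one gets, for the positive word $W_1:=\sigma_1'\cdot(\sigma_1^{-1}\sigma\mu_a)$ and its analogue $W_2$, the coincidence $g(W_1)=h(W_2)$. Now $(\iv{W_1},\iv{W_2})$ and $(\iv{\sigma\mu_a},\iv{\sigma\mu_a})$ are two elements of $\I(\iv g,\iv h)$; unless $\sigma_1'$ happens to be a suffix of $\sigma_1$ (and similarly for $\sigma_2'$, a case handled directly), these pairs have incomparable first or second components, and Lemma~\ref{cor1} applied to the reversals yields $\iv g$ and $\iv h$ agreeing on the common prefix --- i.e.\ $g$ and $h$ agreeing on the common suffix $u_1\sigma_1^{-1}\sigma\mu_a$. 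Cancelling against $g(\sigma\mu_a)=h(\sigma\mu_a)$ produces nonempty prefixes $\sigma_1u_1^{-1}$, $\sigma_2u_2^{-1}$ of $\sigma$ with $g(\sigma_1u_1^{-1})=h(\sigma_2u_2^{-1})$, and Lemma~\ref{shortest1} finishes. The moral: rather than descending inside $\sigma$, push the comparison out to the full equality word $\sigma\mu_a$ and read it from the right, where both morphisms are marked.
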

Recall that we allow \eqref{16080} to be an equality of two ``negative'' words if $g(\sigma_1)<h(\sigma_2)$ and $g(\sigma_1')<h(\sigma_2')$.
\begin{proof}
Proceed by contradiction. Without loss of generality, we can suppose $|\sigma_1|>|\sigma_1'|$ and $|\sigma_2|>|\sigma_2'|$. 
Let $u_1$ be the longest common suffix of $\sigma_1$ and $\sigma_1'$, and let $u_2$ be the longest common suffix of $\sigma_2$ and $\sigma_2'$. We want to show that 
\begin{align}\label{16081}
	g(\sigma_1u_1^{-1})=h(\sigma_2u_2^{-1}). 
\end{align}
From \eqref{16080} and $g(\sigma\mu_a)=h(\sigma\mu_a)$, we deduce
\begin{align}\label{16082}
g(\sigma_1'\sigma_1^{-1}\sigma\mu_a)&=h(\sigma_2'\sigma_2^{-1}\sigma\mu_a).
\end{align}
If $u_1=\sigma_1'$ and $u_2=\sigma_2'$, then \eqref{16081} follows from \eqref{16082}.
Otherwise, we apply Lemma \ref{cor1} to morphisms $\iv g$ and $\iv h$ (note that the role of $\iv g$ and $\iv h$ is interchangeable since both are marked), and to pairs $$(\iv{\sigma_1'\sigma_1^{-1}\sigma\mu_a},\iv{\sigma_2'\sigma_2^{-1}\sigma\mu_a})\quad \text{ and} \quad (\iv{\sigma\mu_a},\iv{\sigma\mu_a})$$ to obtain
\[g(u_1\sigma_1^{-1}\sigma\mu_a)=h(u_2\sigma_2^{-1}\sigma\mu_a),\]
	whence \eqref{16081} follows too.

 The rest is Lemma \ref{shortest1}.
\[
\begin{tikzpicture}
\dolseda{15}{40}{} \dolseda{110}{135}{}
 \draw (150pt,-\radek)--(-30pt,-\radek)--(-30pt,\radek)--(150pt,\radek);
 \draw[dashed] (150pt,\radek)--(210pt,\radek); \draw[dashed] (150pt,-\radek)--(210pt,-\radek);
 \stredA{40}{15} \stredA{135}{110} 
 \horuzel{5}{$g(\sigma_1')$} \doluzel{-5}{$h(\sigma_2')$}
 \horuzel{105}{$g(u_1)$} \doluzel{95}{$h(u_2)$} \draw[dashed] (80pt,\radek)--(80pt,-\radek);
 \horuhornad{-30}{135}{$g(\sigma_1)$} 
 \doludolpod{-30}{110}{$h(\sigma_2)$} 
\end{tikzpicture}
\] 
\end{proof}

As a particular case, we point out the following corollary.

\begin{lem}\label{shortestcor}
	Let $(g,h)$ be a shortest counterexample such that $\iv g$ and $\iv h$ marked. Let two prefixes $\sigma_1$ and $\sigma_2$ of $\sigma$ satisfy $g(\sigma_1)=h(\sigma_2)z_h$. Then $\sigma_1=\sigma_2=\sigma$.
\end{lem}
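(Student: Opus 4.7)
The plan is to derive this corollary as an immediate consequence of Lemma \ref{shortest2}. Recall that by Lemma \ref{counterexample structure}, the prefix $\sigma$ of each element of $\ee(g,h)$ satisfies the defining overflow equation $g(\sigma) = h(\sigma) z_h$, which in the free-group notation reads $h(\sigma)^{-1} g(\sigma) = z_h$. The hypothesis $g(\sigma_1) = h(\sigma_2) z_h$ rewrites identically as $h(\sigma_2)^{-1} g(\sigma_1) = z_h$. Since both $\sigma$ and $\sigma_1, \sigma_2$ are prefixes of $\sigma$, we are in the situation covered by Lemma \ref{shortest2}.

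Thus, setting $\sigma_1' = \sigma_2' = \sigma$ in Lemma \ref{shortest2}, both sides of the equation $h(\sigma_2)^{-1} g(\sigma_1) = h(\sigma_2')^{-1} g(\sigma_1')$ are equal to $z_h$, and the lemma's conclusion forces $\sigma_1 = \sigma$ and $\sigma_2 = \sigma$. There is no real obstacle here: Lemma \ref{shortestcor} is essentially a specialization of the uniqueness statement in Lemma \ref{shortest2} to the canonical overflow $z_h$ realized by the pair $(\sigma,\sigma)$. The only mild point to verify is that the equality $h(\sigma_2)^{-1} g(\sigma_1) = z_h$ is a genuine equality of (positive) words---which is automatic, since $z_h$ is a prefix in $\Sigma^*$ and $g(\sigma_1) = h(\sigma_2) z_h$ forces $|g(\sigma_1)| \geq |h(\sigma_2)|$, so no free-group cancellation issues arise.
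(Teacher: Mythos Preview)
Your proof is correct and matches the paper's own treatment: the paper presents Lemma~\ref{shortestcor} explicitly as a particular case of Lemma~\ref{shortest2}, and your instantiation $\sigma_1'=\sigma_2'=\sigma$ (using $g(\sigma)=h(\sigma)z_h$ from Lemma~\ref{counterexample structure}) is precisely the intended specialization. Your remark about positivity is a nice touch but not strictly needed, since Lemma~\ref{shortest2} is stated to allow the free-group reading anyway.
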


\subsection{The case: $\p_1(\sigma)= a$ or $\s_1(\sigma)= a$}\label{sub1}
\label{suba} 

In this subsection we show that the word $\sigma$ of a counterexample cannot start nor end by the letter $a$.

Since $\abs{g(a)}>\abs{h(a)}$ and $\s_1(\mu_c)=a$ for some $c\in A$, we have
\begin{align}\label{1022}
h(a)\in\s(g(a)).
\end{align}

\begin{claim}\label{spor pref a}
There is no counterexample such that both $\iv g$ and $\iv h$ are marked and $\p_1(\sigma)= a$ or $\s_1(\sigma)= a$.
\end{claim}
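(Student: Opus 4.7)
The plan is threefold: (a) reduce the case $\s_1(\sigma)=a$ to $\p_1(\sigma)=a$ by passing to the mirrored counterexample via Lemma \ref{mirror}; (b) in the case $\p_1(\sigma)=a$, derive that $z_h$ and $h(a)$ commute and their common primitive root is a suffix of $g(a)$; (c) invoke Claim \ref{nekomutuje s a} to reach a contradiction.

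For part (a), by Lemma \ref{mirror} the mirrored pair $(\iv g,\iv{\normh})$ is again a shortest counterexample, and the common prefix of its two elements of $\ee$ is $\iv\sigma$, since $\iv\sigma\iv{\mu_a}$ and $\iv\sigma\iv{\mu_b}$ admit no further common prefix ($\s_1(\mu_a)\neq\s_1(\mu_b)$). The first letter of $\iv\sigma$ is $\s_1(\sigma)$, so the hypothesis $\s_1(\sigma)=a$ for $(g,h)$ becomes the hypothesis that the mirrored common prefix starts with $a$ for $(\iv g,\iv{\normh})$. It therefore suffices to rule out $\p_1(\sigma)=a$.

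For part (b), assume $\p_1(\sigma)=a$ and write $\sigma=a\sigma_0$, $\mu_a=a\mu_a'$; observe $\mu_a'\neq\varepsilon$, since $\mu_a=a$ would give $z_hg(a)=h(a)$, contradicting $|g(a)|>|h(a)|$. Comparing prefixes of length $|h(a)|$ in $g(\sigma)=h(\sigma)z_h$ forces $h(a)\in\p(g(a))$; combined with \eqref{1022}, the word $h(a)$ is a border of $g(a)$, so write $g(a)=h(a)t$ with $t\in A^+$. Expanding $z_hg(\mu_a)=h(\mu_a)$ to $z_hg(a)g(\mu_a')=h(a)h(\mu_a')$ and comparing lengths yields $h(a)\in\p(z_hg(a))$.

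The critical step is to promote these two prefix relations to commutativity of $z_h$ and $h(a)$. If $|z_h|\leq|h(a)|$, write $h(a)=z_hr$; substituting into $h(a)\in\p(z_hg(a))=\p(z_h^2rt)$ and cancelling $z_h$ yields $r\in\p(z_hrt)$. Iterating this recursion produces $h(a)\in\p(z_h^+)$, and the shortest-counterexample lemmas \ref{shortest2} and \ref{shortestcor} eliminate those configurations in which $h(a)$ would be a non-power prefix of $z_h^+$ (these correspond to distinct prefix/overflow pairs of $\sigma$ sharing the same overflow, which is forbidden); hence $h(a)$ and $z_h$ share a common primitive root. If $|z_h|>|h(a)|$, an analogous analysis of the outer overflow $w=h(a)^{-1}z_hg(a)$, which satisfies $wg(\mu_a')=h(\mu_a')$, yields the same conclusion.

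For part (c), let $t_0$ be the common primitive root of $z_h$ and $h(a)$. Then $t_0\in\s(h(a))$, and by \eqref{1022} also $t_0\in\s(g(a))$. All hypotheses of Claim \ref{nekomutuje s a} are now satisfied, delivering the contradiction.

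The main obstacle is part (b): the extraction of commutativity of $z_h$ with $h(a)$. The prefix relations alone yield only $h(a)\in\p(z_h^+)$, which in general does not imply that $h(a)$ and $z_h$ commute. Upgrading to genuine commutativity requires a careful deployment of the uniqueness-of-overflow lemmas \ref{shortest2} and \ref{shortestcor} to rule out the ``twisted'' configurations in which $h(a)$ would be a prefix of a power of $z_h$ without being itself a power of their common root.
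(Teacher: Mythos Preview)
Your reduction in part~(a) and the final contradiction in part~(c) match the paper. The gap is entirely in part~(b), and you have correctly identified it yourself: from $h(a)\in\p(g(a))$ and $h(a)\in\p(z_hg(a))$ you cannot conclude that $z_h$ and $h(a)$ commute, and the appeal to Lemmas~\ref{shortest2} and~\ref{shortestcor} is both vague and, strictly speaking, unavailable, since the statement of the claim does not assume $(g,h)$ is a shortest counterexample.

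The paper's argument (which it borrows verbatim from the proof of Claim~\ref{spor zg non-empty}, case~1) closes this gap in one line, and the step you are missing is elementary. You already have $h(a)\in\p(g(a))$, hence $z_hh(a)\in\p(z_hg(a))$. The point is to strengthen your second prefix relation from $h(a)\in\p(z_hg(a))$ to $h(a)z_h\in\p(z_hg(a))$. Writing $\nu_a=a\nu_a'$, the equality $z_hg(\nu_a)=h(\nu_a)$ gives that $h(a)^{-1}z_hg(a)$ is a prefix of $h(\nu_a')$; by \eqref{vsudypritomnyprefix}, $z_h$ is a prefix of $h(\nu_a')z_h$, and since $\abs{h(a)^{-1}z_hg(a)}=\abs{z_h}+\abs{g(a)}-\abs{h(a)}>\abs{z_h}$, comparability forces $z_h\in\p\bigl(h(a)^{-1}z_hg(a)\bigr)$, i.e.\ $h(a)z_h\in\p(z_hg(a))$. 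Now $z_hh(a)$ and $h(a)z_h$ are two prefixes of $z_hg(a)$ of the same length, so $z_hh(a)=h(a)z_h$. No iteration, no overflow-uniqueness lemmas, no shortest-counterexample hypothesis.
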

\begin{proof} 
Let first $\p_1(\sigma)= a$. As in the proof of Claim \ref{spor zg non-empty}, we obtain that $z_h$ and $h(a)$ have a common primitive root, say $t$. From \eqref{1022} we have that $t$ is a suffix of $g(a)$, which yields a contradiction with Claim \ref{nekomutuje s a}.

The case $\s_1(\sigma)= a$ follows from the same considerations for morphisms $\iv g$ and $\iv {\normh}$ by Lemma \ref{mirror}. 
\end{proof}

\subsection{The case: $\p_1(\sigma)=\s_1(\sigma)= b$}\label{subkonec}
\label{subb}

In this subsection we shall suppose that $(g,h)$ is a counterexample such that $\iv g$ and $\iv h$ are marked, and $\p_1(\sigma)=\s_1(\sigma)=b$. We shall restrict possible counterexamples to the case $\mu_b\in b^+$.

We first fix some notation.

\begin{convention}\label{convention}
\ 
\begin{itemize}
\item Denote by  $\ell$ the maximal integer such that $b^{\ell}$ is a prefix of $\sigma$.
\item Denote by  $k$ the maximal integer such that $b^{k}$ is a prefix of $\mu_b\sigma$.
\item Denote by  $\ell'$ the maximal integer such that $b^{\ell'}$ is a suffix of $\sigma\mu_b$ or $\sigma\mu_a$ (the one of the two equality words ending with $b$).
\item Denote by  $k'$ the maximal integer such that $b^{k'}$ is a suffix of $\sigma$.
\end{itemize}
\end{convention}

We make use of Lemma \ref{mirror} and suppose that $k'\geq \ell$. In other words, we shall work either with $(g,h)$ or with $(\iv g, \iv{\normh})$ depending on whether $\sigma$ has more $b$s in the front or in the rear.

We first present some auxiliary lemmas.

\begin{lem}\label{hb gb nekomutuji}
The words $g(b)$ and $h(b)$ do not commute.
\end{lem}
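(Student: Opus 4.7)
The plan is to argue by contradiction, exploiting the equality $g(\sigma)=h(\sigma)z_h$ from Lemma \ref{presna struktura}, which is available in the present setting (both $\iv g$ and $\iv h$ marked, so $\tau=\varepsilon$ and only case \eqref{presna struktura1} occurs). Suppose, for contradiction, that $g(b)$ and $h(b)$ commute. By Lemma \ref{prim} they share a common primitive root $t$, and condition \eqref{delky} of Definition \ref{counterexample} gives positive integers $n<m$ with $g(b)=t^n$ and $h(b)=t^m$.

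Next, the running hypothesis $\p_1(\sigma)=b$ together with $|\sigma|_a\geq 1$ from Lemma \ref{counterexample structure} lets me factor $\sigma=b^\ell\sigma'$, where $\ell\geq 1$ is the maximal $b$-prefix exponent of $\sigma$ (as in Convention \ref{convention}) and $\sigma'$ is a nonempty word with $\p_1(\sigma')=a$. Substituting this factorization into $g(\sigma)=h(\sigma)z_h$ gives
$$t^{n\ell}g(\sigma')=t^{m\ell}h(\sigma')z_h,$$
and since $n\ell<m\ell$, cancelling the common prefix $t^{n\ell}$ yields
$$g(\sigma')=t^{(m-n)\ell}h(\sigma')z_h.$$

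It remains to compare first letters. Since $(m-n)\ell\geq 1$, the right hand side begins with $\p_1(t)=\p_1(g(b))$; but $g$ is marked and $\p_1(\sigma')=a$, so the left hand side begins with $\p_1(g(a))\neq\p_1(g(b))$, a contradiction. I do not foresee any real obstacle: all the heavy lifting has already been done in assembling the structural description of a shortest counterexample (Lemma \ref{presna struktura}, Convention \ref{convention}, and the subsection's standing assumptions), and this lemma is the quick consequence that rules out the commutative degenerate case before the more delicate combinatorial analysis of $\mu_b\in b^+$ begins. An alternative, slightly more indirect route would be to first deduce from $h(\sigma)z_h\in t^+\cdot g(\sigma')$ that $z_h$ commutes with $g(b)$ and then invoke Claim \ref{nekomutuje s b} directly, but the first-letter argument above avoids that detour.
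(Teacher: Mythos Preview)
Your proof is correct and follows essentially the same approach as the paper's own proof. Both argue by contradiction from $g(\sigma)=h(\sigma)z_h$: since $|g(b)|<|h(b)|$ and $\sigma$ has the form $b^\ell a\cdots$, the first occurrence of $g(a)$ on the left is forced to be comparable with a power of the common primitive root $t$ of $g(b)$ and $h(b)$, contradicting markedness of $g$; your version merely spells out the cancellation $g(\sigma')=t^{(m-n)\ell}h(\sigma')z_h$ explicitly.
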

\begin{proof}
Suppose, for a contradiction, that $t$ is the common primitive root of $g(b)$ and $h(b)$. Since $\abs{g(b)}<\abs{h(b)}$, we deduce, by $g(\sigma)=h(\sigma)z_h$, that the first occurrence of $g(a)$ in $g(\sigma)$ is comparable with $t$, a contradiction with $g$ being marked.
\end{proof}

\begin{lem}\label{g(b^l)<h(b)}
$\abs{h(b)}>(\ell+\ell'-1)\abs{g(b)}$.
\end{lem}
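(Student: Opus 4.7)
The plan is to place the $b$-block of length $\ell+\ell'$ that appears inside $(\sigma\mu_c)^2$ (where $c\in A$ is chosen so that $\s_1(\mu_c)=b$) under both morphisms and then apply the Fine--Wilf Periodicity Lemma. Since $\sigma$ starts with $b^\ell$ followed by an $a$, and $\sigma\mu_c$ ends with $ab^{\ell'}$, the word $(\sigma\mu_c)^2\in\E(g,h)$ contains $b^{\ell+\ell'}$ as a factor at the seam between its two copies, bordered by $a$ on both sides.

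In the common image $w=g((\sigma\mu_c)^2)=h((\sigma\mu_c)^2)$, set $L=|g(\sigma\mu_c)|$. The $g$-occurrence of the block is $g(b)^{\ell+\ell'}$ at positions $[L-\ell'|g(b)|+1,\,L+\ell|g(b)|]$ of $w$, while the $h$-occurrence is $h(b)^{\ell+\ell'}$ at positions $[L-\ell'|h(b)|+1,\,L+\ell|h(b)|]$. Because $|h(b)|>|g(b)|$, the first interval lies entirely inside the second, so $g(b)^{\ell+\ell'}$ appears as a factor of $h(b)^{\ell+\ell'}$, starting at offset $\ell'(|h(b)|-|g(b)|)$; a fortiori $g(b)^{\ell+\ell'}$ is a factor of $h(b)^+$.

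Suppose for contradiction that $|h(b)|\leq(\ell+\ell'-1)|g(b)|$, so that $|g(b)^{\ell+\ell'}|=(\ell+\ell')|g(b)|\geq|g(b)|+|h(b)|$. Since $g(b)^{\ell+\ell'}$ has period $|g(b)|$ trivially and period $|h(b)|$ from being a factor of $h(b)^+$, Fine--Wilf gives it period $d:=\gcd(|g(b)|,|h(b)|)$. Writing $p$ for the length-$d$ prefix of $g(b)^{\ell+\ell'}$, I obtain $g(b)=p^{|g(b)|/d}$, and the length-$|h(b)|$ prefix of $g(b)^{\ell+\ell'}$ equals $p^{|h(b)|/d}$.

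Finally I identify this length-$|h(b)|$ prefix with $h(b)$ itself. Inside $h(b)^+$ it is $h(b)$ rotated by $e:=\ell'(|h(b)|-|g(b)|)\bmod|h(b)|$ positions. Because $d$ divides both $|g(b)|$ and $|h(b)|$, the offset $e$ is a multiple of $d$, and rotating the $d$-periodic word $p^{|h(b)|/d}$ by a multiple of $d$ is the identity, so $h(b)=p^{|h(b)|/d}$. Combined with $g(b)=p^{|g(b)|/d}$, this forces $g(b)$ and $h(b)$ to commute, contradicting Lemma \ref{hb gb nekomutuji}. The main obstacle, which I expect to require the most care to write cleanly, is this last step: Fine--Wilf alone only supplies conjugate primitive roots of $g(b)$ and $h(b)$, and the upgrade to \emph{equal} primitive roots hinges on the divisibility $d\mid e$ afforded by the specific offset $\ell'(|h(b)|-|g(b)|)$.
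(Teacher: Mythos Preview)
Your argument is correct, but the paper's route is noticeably shorter because it never squares the equality word. The paper observes directly that $h(b)$ and $g(b)^\ell$ are prefix-comparable (since $\sigma\mu_c$ starts with $b^\ell$ and lies in $\E(g,h)$) and that $h(b)$ and $g(b)^{\ell'}$ are suffix-comparable (since $\sigma\mu_c$ ends with $b^{\ell'}$). If $|h(b)|\le(\ell+\ell'-1)|g(b)|$, then the $g(b)^\ell$-prefix and $g(b)^{\ell'}$-suffix of $h(b)$ overlap by at least $|g(b)|$ inside $h(b)$; writing $h(b)=u\,g(b)^{\ell'}$, one sees that $u\,g(b)$ is a prefix of $g(b)^+$, and Lemma~\ref{factor}\eqref{factor3} gives $u\,g(b)=g(b)\,u$ and hence $h(b)\,g(b)=g(b)\,h(b)$ immediately. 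Your detour through $(\sigma\mu_c)^2$ merges the prefix and suffix data into a single block and then needs Fine--Wilf plus the offset-divisibility step you flagged to recover that $g(b)$ and $h(b)$ have the \emph{same} (not merely conjugate) primitive root; the paper's argument sidesteps that issue entirely because it works inside $h(b)$, where the $g(b)$-periodic prefix and suffix are already aligned with a genuine copy of $g(b)$ on each end.
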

\begin{proof} The word $h(b)$ is comparable with $g(b)^{\ell}$ and suffix-comparable with $g(b)^{\ell'}$. If $\abs{h(b)}\leq(\ell+\ell'-1)\abs{g(b)}$, then $g(b)$ and $h(b)$ commute by Lemma \ref{factor}\eqref{factor3}, a contradiction with Lemma \ref{hb gb nekomutuji}.
\medskip
\[
\begin{tikzpicture}
 \horA{0}{30}{$g(b)$} \horA{30}{60}{$g(b)$} \horA{60}{90}{$g(b)$} \horA{90}{120}{$g(b)$} 
 \hordashA{120}{170}{}
 \dolA{80}{110}{$g(b)$} \dolA{110}{140}{$g(b)$} \dolA{140}{170}{$g(b)$}
 \doldashA{0}{80}{}
 \doludolbra{0}{170}{$h(b)$}
\end{tikzpicture}
\]
\end{proof}

\begin{lem}\label{dlouhe zh}
$\abs{z_h}>(\ell+k'-1)\abs{g(b)}$.
\end{lem}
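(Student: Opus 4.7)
My plan is to argue by contradiction via Claim~\ref{nekomutuje s b}. Suppose $|z_h|\le(\ell+k'-1)|g(b)|$; I will show that $z_h$ commutes with $g(b)$, which contradicts Claim~\ref{nekomutuje s b} since $\p_1(\sigma)=b$. The strategy mirrors the proof of Lemma~\ref{g(b^l)<h(b)}: $z_h$ will have a $g(b)^\ell$ ``prefix'' and a $g(b)^{k'}$ ``suffix'', and the length hypothesis will force these to overlap sufficiently inside $z_h$ for the Periodicity Lemma to take effect.

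Two comparability facts are at the heart of the argument. First, since $h$ is not marked, $z_h$ is a prefix of $h(ba)=h(b)h(a)$; by Lemma~\ref{g(b^l)<h(b)}, $|h(b)|>(\ell+\ell'-1)|g(b)|\ge\ell|g(b)|$, so $g(b)^\ell$ is a proper prefix of $h(b)$, hence of $h(ba)$, and therefore $z_h$ and $g(b)^\ell$ are prefix-comparable. Second, $z_h$ is the suffix of $g(\sigma)=h(\sigma)z_h$ of length $|z_h|$; since $\iv g$ is marked and $\sigma$ ends with $ab^{k'}$, the maximal $g(b)$-suffix of $g(\sigma)$ is exactly $g(b)^{k'}$, so $z_h$ and $g(b)^{k'}$ are suffix-comparable.

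I then split into cases, using also $k'\ge\ell$ (from Lemma~\ref{mirror}). In the main case $|z_h|\ge k'|g(b)|$, both $g(b)^\ell$ (as prefix) and $g(b)^{k'}$ (as suffix) sit inside $z_h$ with overlap of length at least $(\ell+k')|g(b)|-|z_h|\ge|g(b)|$; aligning the two $g(b)$-rhythms in the overlap forces $|z_h|$ to be a multiple of $|g(b)|$ and $z_h\in g(b)^+$, so $z_h$ commutes with $g(b)$. In the small case $|z_h|<\ell|g(b)|$ (which also forces $|z_h|<k'|g(b)|$ since $k'\ge\ell$), the word $z_h$ is simultaneously a prefix and a suffix of $g(b)^+$; writing $z_h=g(b)^jp=sg(b)^j$ with $p$ a prefix and $s$ a suffix of $g(b)$, uniqueness of prefix/suffix of a given length in $g(b)$ gives $p=s$, and then a direct computation gives $z_h g(b)=g(b)z_h$. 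In the intermediate case $\ell|g(b)|\le|z_h|<k'|g(b)|$, writing $g(b)=ts$ and $z_h=sg(b)^j$, the prefix $g(b)^\ell$ of $z_h$ becomes $s(ts)^{\ell-1}t=(st)^\ell$, yielding $(st)^\ell=(ts)^\ell$, which forces $s$ and $t$ to commute, and hence $z_h$ to commute with $g(b)$.

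In all cases $z_h$ commutes with $g(b)$, contradicting Claim~\ref{nekomutuje s b}, and the inequality $|z_h|>(\ell+k'-1)|g(b)|$ follows. The main obstacle is the intermediate case: $z_h$ is not itself a power of $g(b)$, and one must combine the prefix description (from $h(ba)$) with the suffix description (from $g(\sigma)$) to derive the identity $(st)^\ell=(ts)^\ell$ on the two halves of $g(b)=ts$; the exponent $\ell+k'-1$ in the statement is precisely tight for the overlap to contain an entire $g(b)$.
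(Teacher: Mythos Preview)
Your overall strategy matches the paper's: assume $|z_h|\le(\ell+k'-1)|g(b)|$, show $z_h$ commutes with $g(b)$, and invoke Claim~\ref{nekomutuje s b}. The two comparability facts you isolate (prefix-comparability of $z_h$ with $g(b)^\ell$ via $h(b)$, suffix-comparability with $g(b)^{k'}$ via $g(\sigma)$) are exactly what the paper uses.

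However, there is a genuine gap in your ``small case''. When $|z_h|<|g(b)|$ you have $j=0$, so $z_h=p=s$ is simultaneously a prefix and a suffix of $g(b)$; but this alone does \emph{not} force $z_hg(b)=g(b)z_h$. Take $g(b)=aba$ and $z_h=a$: then $a$ is both a prefix and a suffix of $aba$, yet $a\cdot aba=aaba\neq abaa=aba\cdot a$. Your ``direct computation'' only works when $j\ge 1$, because then $g(b)^jp=pg(b)^j$ genuinely says $p$ commutes with $g(b)^j$ (hence with $g(b)$). The paper treats $|z_h|<|g(b)|$ by a separate argument that uses more than just the two comparability facts: from $g(\sigma)=h(\sigma)z_h$ one gets $g(b)=vz_h$ with $v$ a \emph{prefix} of $g(b)$, and from $z_hg(\mu_b)=h(\mu_b)$ together with $g(b)\in\p(h(b))$ one gets $g(b)=z_hv'$ where $v'$ is the prefix of $g(b)$ of the same length as $v$; hence $v=v'$ and $g(b)=vz_h=z_hv$.

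A minor point: in your main case the conclusion ``$|z_h|$ is a multiple of $|g(b)|$ and $z_h\in g(b)^+$'' is too strong when $g(b)$ is not primitive (e.g.\ $g(b)=abab$, $z_h=(ab)^5$, $\ell=k'=2$). What the overlap argument actually yields is that $z_h$ is a power of the primitive root of $g(b)$, hence commutes with $g(b)$; that weaker conclusion is all you need.
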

\begin{proof}
The word $z_h$ is comparable with $g(b)^{\ell}$, since $z_h$ is comparable with $h(b)$, and $g(b)^{\ell}$ is a prefix of $h(b)$. Also $z_h$ is suffix-comparable with $g(b)^{k'}$, by $g(\sigma)=h(\sigma)z_h$.

First suppose that $\abs{z_h}\geq \abs{g(b)}$. Now, if $\abs{z_h}\leq (\ell+k'-1)\abs{g(b)}$, then $z_h$ and $g(b)$ commute by Lemma \ref{factor}\eqref{factor3}, a contradiction with Claim \ref{nekomutuje s b}.

Suppose now that $z_h$ is shorter than $g(b)$. Recall that $g(b)$ is a prefix of $g(\mu_b)$, prefix of $h(b)$, and a suffix of $g(\sigma)$. From $g(\sigma)=h(\sigma)z_h$ and $z_hg(\mu_b)=h(\mu_b)$ we deduce that there is a word $v$ such that $g(b)=vz_h$ and at the same time $g(b)=z_hv$. Again, the words $g(b)$ and $z_h$ commute, a contradiction. 
\[
\begin{tikzpicture}
\dolunulabra{20}{50}{{\small $z_h$}}
\horseda{20}{50}{}
\horuhorbra{0}{20}{{\small $v$}}
\horuhorbra{50}{70}{{\small $v$}}
\tecky{-30}{0} \draw (0,-\radek)--(20pt,-\radek);
\tecky{100}{160}
\horA{0}{50}{$g(b)$}  \horA{50}{100}{$g(b)$}
\dolA{20}{120}{$h(b)$}
\doludolbra{20}{70}{$g(b)$}
\end{tikzpicture}
\]
\end{proof}

An important step in the proof is the following lemma which shows that $g(a)$ cannot be too short.

\begin{lem}\label{existujeu}
$|g(ba)|>|h(b)|$.
\end{lem}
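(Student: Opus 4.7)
Suppose, for contradiction, that $|g(ba)|\le|h(b)|$. Since $\sigma$ begins with $b^{\ell}a$ and $g(\sigma)=h(\sigma)z_{h}$, the word $g(b)^{\ell}g(a)$ is a prefix of $h(\sigma)z_{h}$. Using our hypothesis together with Lemma \ref{g(b^l)<h(b)} (which in particular gives $(\ell-1)|g(b)|<|h(b)|$),
\[
|g(b)^{\ell}g(a)|=\ell|g(b)|+|g(a)|\le(\ell-1)|g(b)|+|h(b)|<\ell|h(b)|,
\]
so $g(b^{\ell}a)$ is a proper prefix of $h(b)^{\ell}$. Let $j\in\{0,\dots,\ell-1\}$ be the largest integer with $h(b^{j})\in\p(g(b^{\ell}a))$, and write $g(b^{\ell}a)=h(b^{j})w$ where $w$ is a proper prefix of $h(b)$.

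The easy case is $w=\varepsilon$: then $g(b^{\ell}a)=h(b^{j})$, and necessarily $j\ge 1$ since $g(b^\ell a)$ is nonempty. Both $b^{\ell}a$ and $b^{j}$ are nonempty prefixes of $\sigma$ (the latter since $j\le\ell-1<\ell$) and they are distinct (one contains $a$, the other does not). Lemma \ref{shortest1} applied to these prefixes then contradicts the shortness of $(g,h)$.

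The remaining case is $w\ne\varepsilon$. Here the plan is to extend the analysis past $b^{\ell}a$ by reading successive letters of $\sigma$ and tracking, after each new letter, the overflow of the cumulative $g$-image relative to the matching $h(b^{?})$-boundary. Each such extension either produces a boundary equality (yielding a contradiction via Lemma \ref{shortest1}, exactly as in the previous case) or a fresh nonempty overflow which remains a proper prefix of $h(b)^{+}$. By Lemma \ref{shortest2} the overflow map on pairs of prefixes of $\sigma$ is injective; combined with Lemma \ref{hb gb nekomutuji} and Claim \ref{nekomutuje s b} (non-commutation of $g(b)$ with $h(b)$ and with $z_{h}$), this forces the iteration eventually to yield the overflow $z_{h}$ at some pair of proper prefixes of $\sigma$, contradicting Lemma \ref{shortestcor}.

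The main obstacle will be the rigorous implementation of the iteration in the non-boundary case, in particular tracking how the overflow evolves after each appended letter of $\sigma$ and ensuring that the process terminates by producing the specific value $z_{h}$, rather than cycling indefinitely through short proper prefixes of $h(b)^{+}$; the non-commutation statements are the key tools preventing such cycling.
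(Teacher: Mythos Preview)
Your proposal correctly handles the easy boundary case $w=\varepsilon$ via Lemma~\ref{shortest1}, but the substantive case $w\ne\varepsilon$ is only a sketch, and the sketch does not close. Two concrete problems:

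\textbf{(i) The overflow need not stay a prefix of $h(b)^{+}$.} Your iteration compares $g(\sigma_1)$ against $h(b^{j'})$. But once $j'$ would have to exceed $\ell$, the word $b^{j'}$ is no longer a prefix of $\sigma$, so Lemmas~\ref{shortest1}--\ref{shortestcor} no longer apply; and if you instead compare against genuine prefixes of $\sigma$, the $h$-side now contains $h(a)$-blocks, so the overflow is no longer a prefix of $h(b)^{+}$. Either way the invariant you rely on breaks.

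\textbf{(ii) There is no mechanism forcing the overflow to hit $z_h$ at proper prefixes.} Injectivity of the overflow map (Lemma~\ref{shortest2}) only says different pairs give different overflows; it does not say the specific value $z_h$ must occur before $(\sigma,\sigma)$. Your appeal to non-commutation (Lemma~\ref{hb gb nekomutuji}, Claim~\ref{nekomutuje s b}) does not supply this: non-commutation rules out certain periodic coincidences but gives no lower or upper bound driving the overflow toward $z_h$. You yourself flag this as ``the main obstacle,'' and it is a real one.

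The paper's argument is structurally different. Rather than iterating overflows, it makes a global pigeonhole count: since (by Lemma~\ref{dlouhe zh}) the last $g(b)$ in $g(\sigma)$ lies outside $h(\sigma)$, some occurrence of $h(b)$ in $h(\sigma)$ contains no start of a $g(b)$, so $h(b)$ is a prefix of $s\,g(a)^{+}$ for a suffix $s$ of $g(a)$ or $g(b)$; symmetrically on the right. Markedness of $g$ and $\iv g$ (via Lemma~\ref{factor}\eqref{factor1}) then forces $|h(b)|<|g(b)^{\ell}t|$ and $|h(b)|<|t\,g(b)^{\ell'}|$ with $t$ the primitive root of $g(a)$, which finishes $\ell=1$ immediately. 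For $\ell\ge 2$ a second pigeonhole gives two such occurrences with distinct overflows $s_1\ne s_2$ (here Lemma~\ref{shortest2} is used once, not iteratively), forcing $g(a)$ imprimitive and hence $|g(a)|\ge 2|t|$; combining the two bounds contradicts Lemma~\ref{g(b^l)<h(b)}. This avoids your termination problem entirely.

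A minor point: your displayed chain $(\ell-1)|g(b)|+|h(b)|<\ell|h(b)|$ fails for $\ell=1$ (it becomes $|h(b)|<|h(b)|$); the case $|g(ba)|=|h(b)|$ with $\ell=1$ should be handled separately via Lemma~\ref{shortest1}, as you do for $w=\varepsilon$.
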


\begin{proof} 
In this proof we shall consider occurrences of $g(b)$s and $h(b)$s in $g(\sigma)$ and $h(\sigma)$, and their relative position. The idea is quite intuitive, but we give a more formal definition. Let $i,j\leq \abs{\sigma}_b$ be positive integers. Denote by $u_i$ the prefix of $\sigma$ such that also $u_ib$ is a prefix of $\sigma$, and $\abs{u_ib}_b=i$.

We say that the $i$th occurrence of $g(b)$ in $g(\sigma)$ {\em starts} within the $j$th occurrence of $h(b)$ in $h(\sigma)$, if  $$\abs{h(u_j)}\leq \abs{g(u_i)}< \abs{h(u_jb)}.$$ 
Similarly, we say that the $i$th occurrence of $g(b)$ in $g(\sigma)$ {\em ends} within the $j$th occurrence of $h(b)$ in $h(\sigma)$, if $$\abs{h(u_j)}< \abs{g(u_ib)}\leq \abs{h(u_jb)}.$$

Lemma \ref{dlouhe zh} immplies that the last occurrence of $g(b)$ in $g(\sigma)$ both starts and ends outside $h(\sigma)$. Therefore, by the pigeon hole principle, there is an occurrence of $h(b)$ in $h(\sigma)$ such that no occurrence of $g(b)$ in $g(\sigma)$ starts within it.   Similarly, there is an occurrence of  $h(b)$ in $h(\sigma)$ within which no $g(b)$ ends. 
From this it is easy to deduce that $h(b)$ is a prefix of $sg(a)^+$, and a suffix of $g(a)^+p$
where $s$ is a suffix of $g(b)$ or $g(a)$, and $p$ is a prefix of $g(b)$ or $g(a)$. Let $t$ be the primitive root of $g(a)$.

By $g(\sigma)=h(\sigma)z_h$, the words $h(b)$ and $g(b^\ell a)$ are prefix comparable. Suppose that $g(b^\ell t)$ is a prefix of $h(b)$. From the fact that $g(b^\ell)g(a)$ is a prefix of $sg(a)^+$ we deduce by Lemma \ref{factor}\eqref{factor1} that $\iv g$ is not marked, a contradiction. Similarly, we obtain a contradiction with $g$ being marked, if $tg(b)^{\ell'}$ is a suffix of $h(b)$. Therefore  
\begin{align}\label{dveb}
\text{	$|h(b)|<|g(b)^\ell t|$ and $|h(b)|<|t g(b)^{\ell'} |$}
\end{align} 
and we are through if $\ell=1$.

Suppose $\ell\geq 2$. Again by a pigeon hole principle, there are at least two occurrences of $h(b)$ in $g(\sigma)$ with no starting $g(b)$. Therefore $h(b)$ is a prefix of $s_1g(a)^+$ and $s_2g(a)^+$, where $s_1$ and $s_2$ are proper suffixes of $g(b)$ or $g(a)$. Note that $s_1$ and $s_2$ are overflows in $\sigma$, whence $s_1\neq s_2$ by Lemma \ref{shortest2}. 
Suppose, for a contradiction, that $|g(ba)|\leq |h(b)|$. From $s_1\neq s_2$, it is then not difficult to deduce that $g(a)$ overlaps nontrivially with $g(a)^2$, whence it is not primitive and $|g(a)|\geq 2|t|$.
From this and from \eqref{dveb} we obtain
\[2|h(b)|<|g(b)^\ell t|+|g(b)^{\ell'} t|\leq(\ell+\ell')|g(b)|+|g(a)|\leq(\ell+\ell'-1)|g(b)|+|h(b)|,\]
a contradiction with Lemma \ref{g(b^l)<h(b)}.
 \end{proof}

We can now once more point out two commuting words.

\begin{lem}\label{hb komutuje}
The word $h(b)$ commutes with $z_hg(b)^{k-\ell}$.
\[
\begin{tikzpicture}
\podpisA{0}{45}{{$z_h$}}
\horseda{0}{45}{}
\horuhornad{20}{45}{{\small $g(b)^{k'}$}}
\horuhornad{60}{90}{{\small $g(b)^{\ell}$}}
\horuhorbra{90}{185}{$u$}
\doludolbra{60}{185}{{\small $h(b)$}}
\tecky{-30}{0} \draw (0,\radek)--(60pt,\radek);
\tecky{220}{240} \draw (90pt,\radek)--(220pt,\radek);
\horA{45}{90}{$g(b)^k$}
\dolA{0}{110}{$h(b)$} \dolA{110}{220}{$h(b)$}
\node at (150pt,0.5*\radek){$g(a)$};
\end{tikzpicture}
\]
\end{lem}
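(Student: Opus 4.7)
I plan to locate two occurrences of $h(b)$ in the word
\[
W := z_h\, g(\mu_b\sigma) = h(\mu_b\sigma)\,z_h
\]
(the second equality combining $z_h g(\mu_b) = h(\mu_b)$ and $g(\sigma) = h(\sigma)z_h$) and then invoke Lemma \ref{factor}. The case $\mu_b \in b^+$ is trivial: writing $\mu_b = b^j$ gives $z_h g(b)^{k-\ell} = h(\mu_b) = h(b)^j$, which clearly commutes with $h(b)$. So I may assume $\mu_b$ contains $a$, and in particular $\mu_b\sigma$ starts with $b^k a$.

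The key preliminary combinatorial fact is that $g(b)^\ell$ is a prefix of $h(b)$ and that $w := g(b)^{-\ell}h(b)$ is a prefix of $g(a)$. This is obtained by comparing the two sides of $g(\sigma) = h(\sigma)z_h$ (where $\sigma = b^\ell a\sigma''$): the bound $|h(b)| > \ell|g(b)|$ from Lemma \ref{g(b^l)<h(b)} delivers the first statement, and an additional round of the same comparison, combined with $|g(b)^\ell g(a)| \geq |g(ba)| > |h(b)|$ from Lemma \ref{existujeu}, delivers the second. In particular the length-$|h(b)|$ prefix of $g(b)^\ell g(a)\cdots$ equals $g(b)^\ell w = h(b)$.

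Two occurrences of $h(b)$ in $W$ are now visible. First, at position $0$, since $W = h(\mu_b\sigma)z_h$ begins with $h(b)$. Second, at position $d := |z_h| + (k-\ell)|g(b)|$, since reading $W$ from the left as $z_h\, g(b)^{k-\ell}\cdot g(b)^\ell g(a)\cdots$ places the length-$|h(b)|$ block identified above exactly at position $d$. The right-hand expression also shows that $W$ begins with $h(b)^k$. In the main case, in which the length-$(d+|h(b)|)$ prefix of $W$ fits inside $h(b)^k$, this prefix is simultaneously $z_h g(b)^{k-\ell}\cdot h(b)$ (from the left) and a prefix of $h(b)^+$ (from the right); Lemma \ref{factor}(\ref{factor3}) applied with $u := z_h g(b)^{k-\ell}$ and $w := h(b)$ then gives the desired commutation.

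The main technical obstacle is the boundary case in which $d + |h(b)| > k|h(b)|$, so the length-$(d+|h(b)|)$ prefix of $W$ overshoots the initial $h(b)^k$. In that situation I would switch to Lemma \ref{factor}(\ref{factor0}): the occurrences of $h(b)$ at offsets $0$ and $d$ in $W$ force, by considering the corresponding $h(b)^2$-window, a factorization $h(b)\cdot h(b) = u\cdot h(b)\cdot v$ with $|u| = d \bmod |h(b)|$. The lemma forces $u$ and $h(b)$ to share a primitive root $t$, so $h(b)\in t^+$, and then the length-$d$ prefix $z_h g(b)^{k-\ell}$ of $W$ is itself a power of $t$ and commutes with $h(b)$. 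Making this alternative argument rigorous across all configurations of $k$ and $\ell$ (in particular the small values forcing $d$ close to or beyond $|h(b)|$) is the technical heart of the proof.
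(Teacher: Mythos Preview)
Your overall approach is the same as the paper's: exhibit $z_h g(b)^{k-\ell}\cdot h(b)$ as a prefix of $W=z_h g(\mu_b\sigma)=h(\mu_b\sigma)z_h$, compare it with the $h(b)$-power prefix coming from the $h$-side, and invoke Lemma~\ref{factor}\eqref{factor3}. Your preliminary step (that $g(b)^{-\ell}h(b)$ is a prefix of $g(a)$, via Lemmas \ref{g(b^l)<h(b)} and \ref{existujeu}) is exactly what the paper does.

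The divergence is in how you bound the length. You compare $d+|h(b)|$ with $k|h(b)|$ and then worry about the overshoot. The paper instead compares with $k|h(b)|+|z_h|$: from $|h(b)|>|g(b)|$ and $\ell\geq 1$ one gets
\[
|z_h g(b)^{k-\ell}h(b)|\;\leq\;|h(b)^k z_h|,
\]
and since $W$ begins with $h(b)^k z_h$ (the block after $h(b)^k$ is $h(a\ldots)z_h$, whose $|z_h|$-prefix is $z_h$ by \eqref{vsudypritomnyprefix}), the word $z_hg(b)^{k-\ell}h(b)$ is a prefix of $h(b)^k z_h$. The point you are missing is that $z_h$ itself is always a prefix of $h(b)^+$: by \eqref{vsudypritomnyprefix} we have $z_h\leq h(b)z_h$, and iterating gives $z_h\leq h(b)^+$. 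Hence $h(b)^k z_h$ is a prefix of $h(b)^+$, and Lemma~\ref{factor}\eqref{factor3} applies directly with no boundary case at all.

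Your proposed fallback via Lemma~\ref{factor}\eqref{factor0} does not work as written. Your ``boundary case'' is precisely $d>(k-1)|h(b)|$, so the copy of $h(b)$ at offset $d$ starts inside the \emph{last} $h(b)$ of the known $h(b)^k$-prefix (or even beyond it, if $|z_h|$ is large enough that $d\geq k|h(b)|$). There is then no $h(b)^2$-window inside the guaranteed $h(b)^k$-prefix that contains this occurrence, so you cannot produce the factorization $h(b)\cdot h(b)=u\cdot h(b)\cdot v$ you need. The difficulty you flag as ``the technical heart'' is therefore not merely a matter of bookkeeping; the Lemma~\ref{factor}\eqref{factor0} route is a dead end here, and the clean fix is the $z_h\leq h(b)^+$ observation above.
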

\begin{proof}
Lemma \ref{dlouhe zh} and the definition of $k'$ implies that $g(b)^{k'}$ is a suffix of $z_h$. The assumption $k'\geq \ell$ guarantees that $z_hg(b)^{k-\ell}$ is a well defined prefix of $z_hg(b)^k$.

 Let $u$ be the word $g(b)^{-\ell}h(b)$, which is a prefix of $g(a)$ by Lemma \ref{existujeu}. 
Since $\abs{h(b)}>\abs{g(b)}$ and $\ell\geq 1$, we have
$$\abs{h(b)^kz_h}>\abs{z_hg(b)^{k-\ell}h(b)}.$$ The equality $z_hg(\mu_b)=h(\mu_b)$ now implies that the word
\[z_hg(b)^{k}u=z_hg(b)^{k-\ell}h(b)\] is a prefix of $h(b)^+$ and thus $z_hg(b)^{k-\ell}$ commutes with $h(b)$ by Lemma \ref{factor}\eqref{factor3}.
\end{proof}

As a consequence, we have the claim of this section.

\begin{claim}\label{bb}
If $(g,h)$ is a shortest counterexample such that $\p_1(\sigma)=\s_1(\sigma)=b$, then $\mu_b= b^{k-\ell}$.
\end{claim}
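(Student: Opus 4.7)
The plan is to argue by contradiction. Assume $\mu_b\ne b^{k-\ell}$. Since $\mu_b\in b^+$ would give $\mu_b=b^{k-\ell}$ by the definition of $k$ in Convention~\ref{convention}, the assumption forces $\mu_b\notin b^+$, so we write $\mu_b=b^ka\mu''$ with $k\ge 1$ and $\mu''\in A^*$.

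By Lemma~\ref{hb komutuje}, $h(b)$ and $z_hg(b)^{k-\ell}$ commute; let $r$ be their common primitive root and write $h(b)=r^p$, $z_hg(b)^{k-\ell}=r^q$ with positive integers $p,q$. Substituting $z_hg(b)^k=r^qg(b)^\ell$ and $h(b)^k=r^{pk}$ into $z_hg(\mu_b)=h(\mu_b)$ gives the key identity
\[
r^q g(b)^\ell g(a)g(\mu'') \;=\; r^{pk} h(a)h(\mu'').
\]

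First I would dispose of the sub-case $q=p(k-\ell)$. Here $z_hg(b)^{k-\ell}=h(b)^{k-\ell}$, so $g(\sigma b^{k-\ell})=h(\sigma)z_hg(b)^{k-\ell}=h(\sigma)h(b)^{k-\ell}=h(\sigma b^{k-\ell})$, placing $\sigma b^{k-\ell}\in \E(g,h)=\{\sigma\mu_a,\sigma\mu_b\}^*$. The first factor of any factorization over this bifix code is some $\sigma\mu_c$ with $c\in A$; since the letter of $\sigma b^{k-\ell}$ following $\sigma$ is $b$, we must take $c=b$, so $\mu_b$ is a prefix of $b^{k-\ell}$, forcing $\mu_b\in b^+$, contradiction. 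The edge cases $k\le\ell$ are ruled out separately: $k=\ell$ would give $z_h=\varepsilon$, while $k<\ell$ makes $\sigma b^{k-\ell}$ a nonempty proper prefix of $\sigma$, which cannot lie in $\E(g,h)$.

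The hard sub-case is $q\ne p(k-\ell)$. My plan is to work with the marked-marked pair $(g,\normh)$: a direct check $z_hg(\mu_c\sigma)=h(\mu_c)g(\sigma)=h(\mu_c\sigma)z_h$ places $\mu_a\sigma,\mu_b\sigma\in\E(g,\normh)$, so Lemma~\ref{coref} gives $\ii(g,\normh)$ of rank two, with blocks $(e,f),(e',f')$ satisfying $\p_1(e)=\p_1(f)=a$ and $\p_1(e')=\p_1(f')=b$. Rearranging the key identity above yields $z_2g(a\mu'')=h(a\mu'')$ with $z_2=r^{q-pk}g(b)^\ell$ (a well-defined word by the free-group convention of Lemma~\ref{hb komutuje}), and I recognise $(a\mu'',z_2)$ as an $a$-starting block-type resolution parallel to the given $(\mu_a,z_h)$. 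Using the uniqueness of minimal blocks in Lemma~\ref{ef} for the marked pair $(g,\normh)$, I would match the two resolutions; the explicit expressions $z_h=r^qg(b)^{\ell-k}$ and $z_2=r^{q-pk}g(b)^\ell$ then force $g(b)^k\in r^+$, hence $g(b)\in r^+$, contradicting Lemma~\ref{hb gb nekomutuji}. The main obstacle is this block-matching step: extracting an actual block of $(g,\normh)$ from the algebraic identity $z_2g(a\mu'')=h(a\mu'')$ requires ruling out any spurious intermediate overflow inside $\sigma$, for which the shortest-counterexample hypothesis (via Lemma~\ref{shortest2}) is essential.
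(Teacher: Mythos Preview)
Your sub-case $q=p(k-\ell)$ is correct and coincides with the closing step of the paper's argument: once $z_hg(b)^{k-\ell}=h(b)^{k-\ell}$, the word $\sigma b^{k-\ell}$ lies in $\E(g,h)$ and the bifix structure of $\ee(g,h)$ forces $\mu_b=b^{k-\ell}$. (The edge-case discussion is superfluous, since $q>0$ already gives $k>\ell$.)

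The genuine gap is your sub-case $q\ne p(k-\ell)$. The identity $z_2g(a\mu'')=h(a\mu'')$ is valid in the free group, but it is \emph{not} a block equation for $(g,\normh)$: Lemma~\ref{ef} concerns pairs $(e,f)$ with $z_hg(e)=h(f)z_h$, and neither $z_hg(\mu_a)=h(\mu_a)$ nor $z_2g(a\mu'')=h(a\mu'')$ has this shape. To ``match'' them you would have to append $\sigma$ on the right and compare the block factorisations of $(\mu_a\sigma,\mu_a\sigma)$ and $(\mu_b\sigma,\mu_b\sigma)$ in $\I(g,\normh)$. But the first component of the latter begins with $b^k$, and you have no information about the $b$-block $(e',f')$; there is no reason for $e'\in b^+$, hence no reason that after consuming $b^k$ on the $g$-side the overflow returns to $z_h$. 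Lemma~\ref{shortest2} only forbids repeated overflows among \emph{prefixes of $\sigma$}, not along $\mu_b$, so it does not rescue the step. I do not see how to complete this route without a new idea, and you yourself flag the step as the ``main obstacle''.

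The paper avoids block-matching altogether. With $t$ the common primitive root of $h(b)$ and $z_hg(b)^{k-\ell}$, it computes the maximal $t$-prefix of $g(\sigma)=h(\sigma)z_h$ to be $h(b)^\ell z_h$, and the maximal $t$-prefix of $g(b^{\ell-k}\mu_b\sigma)$ to be $h(b)^kg(b)^{\ell-k}$. If these differ, Lemma~\ref{gprefix} (for the marked morphism $g$) pins $g(v)$, where $v=\sigma\wedge b^{\ell-k}\mu_b\sigma$, to the shorter of the two; this yields either a proper prefix of $\sigma$ with overflow $z_h$ (contradicting Lemma~\ref{shortestcor}) or a proper prefix realising $g(\cdot)=h(\cdot)$ (contradicting Lemma~\ref{shortest1}). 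Equality of the two $t$-prefixes is exactly your sub-case $q=p(k-\ell)$.
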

\begin{proof} 
Let $t$ be the common primitive root of words $h(b)$ and $z_hg(b)^{k-\ell}$. Recall that, by \eqref{vsudypritomnyprefix}, the maximal $t$-prefix of any $h(au)z_h$ is $z_h$. We deduce the following.
\begin{itemize}
	\item The maximal $t$-prefix of $h(\sigma)z_h=g(\sigma)$ is $h(b)^\ell z_h$. 
	\item The maximal $t$-prefix of $h(\mu_b\sigma)z_h=z_h g(\mu_b\sigma)=z_h g(b)^{k-\ell}g(b^{\ell-k}\mu_b\sigma)$ is $h(b)^kz_h$, which implies that the maximal $t$-prefix of $g(b^{\ell-k}\mu_b\sigma)$ is \[(z_hg(b)^{k-\ell})^{-1}h(b)^kz_h=h(b)^kg(b)^{\ell-k}.\]
 
\end{itemize}
 Since $\sigma$ contains $a$, both maximal $t$-prefixes mentioned above are proper.

Let first $h(b)^kg(b)^{\ell-k}\neq h(b)^\ell z_h$, and put $v=\sigma\wedge b^{\ell-k}\mu_b\sigma$.

 If $|h(b)^kg(b)^{\ell-k}|> |h(b)^\ell z_h|$, then $g(v)=h(b)^\ell z_h$, by Lemma \ref{gprefix}, a contradiction with Lemma \ref{shortestcor}. 

On the other hand, $|h(b)^kg(b)^{\ell-k}|< |h(b)^\ell z_h|$ implies $k<\ell$, and Lemma \ref{gprefix} yields $g(v)=h(b)^kg(b)^{\ell-k}$ and $g(vb^{k-\ell})=h(b)^k$, a contradiction with Lemma \ref{shortest1}.

It remains that $h(b)^kg(b)^{\ell-k}= h(b)^\ell z_h$, which implies $h(b)^{k-\ell}=z_hg(b)^{k-\ell}$ and $\mu_b=b^{k-\ell}$.
\end{proof}

\subsection{The case: $\p_1(\sigma)=\s_1(\sigma)= b$ and $\mu_b=b^{k-\ell}$}\label{mub}

This last case is most difficult because it in a way compresses two places we use for the analysis into one, namely the beginning of $\sigma$ and the beginning of $\mu_b$. 
\[
\begin{tikzpicture}
\preghA{10}
\ramec{0}{250}
\stredA{80}{60} \stredA{190}{170}
\uzel{35}{$\sigma$} \uzel{95}{$b^{k-\ell}$} \uzel{125}{$b^{\ell}$} \uzel{150}{$( b^{-\ell}\sigma)$} \uzel{220}{$\mu_a$}
\draw[dashed](110pt,\radek)--(110pt,-\radek);
\end{tikzpicture}
\]

Therefore, we have to employ a more detailed analysis of $\mu_a$.
\begin{claim}
	There is no shortest counterexample with $\p_1(\sigma)=\s_1(\sigma)=b$.
\end{claim}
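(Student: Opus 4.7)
The proof proceeds by contradiction: assume $(g,h)$ is a shortest counterexample with $\p_1(\sigma)=\s_1(\sigma)=b$, so by Claim \ref{bb} we have $\mu_b=b^{k-\ell}$ and the rigid identity $h(b)^{k-\ell}=z_hg(b)^{k-\ell}$, equivalently $z_h=h(b)^{k-\ell}g(b)^{-(k-\ell)}$. My plan is to extract an analogous but incompatible rigid structure for $\mu_a$ and thereby obtain a contradiction.

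First, I peel off the leading $a$-block of $\mu_a$. Since $\p_1(\mu_a)=\s_1(\mu_a)=a$, write $\mu_a=a^pv$ with $p\geq 1$ and $v$ either empty or starting with $b$. The case $v=\varepsilon$ is immediately excluded: $z_hg(a)^p=h(a)^p$ would force $|z_h|=p(|h(a)|-|g(a)|)<0$. Hence $\mu_a=a^pb^qw$ for some $p,q\geq 1$ and nonempty $w$ ending in $a$. Next, I track the overflow $W_P$ defined by $g(P)=h(P)W_P$ along prefixes $P$ of $\sigma\mu_a$, with updating rule $W_{Px}=h(x)^{-1}W_Pg(x)$: we have $W_\sigma=z_h$, $W_{\sigma a^p}=h(a)^{-p}z_hg(a)^p$, and the subsequent $b^q$ must shrink the overflow further. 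Consequently $h(a)^p$ is a prefix of $z_hg(a)^p$, and $h(a)^ph(b)^q$ is a prefix of $z_hg(a)^pg(b)^q$.

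Second, I combine these prefix relations with the rigid identity $z_hg(b)^{k-\ell}=h(b)^{k-\ell}$ to force a commutation. The aim is to arrive at one of the following impossibilities: (a) $g(b)$ and $h(b)$ commute, contradicting Lemma \ref{hb gb nekomutuji}; (b) two distinct pairs of prefixes of $\sigma$ yield the same overflow $h(\sigma_2)^{-1}g(\sigma_1)$, contradicting Lemma \ref{shortest2}; or (c) the marked pair $(\normh\circ\pi,g\circ\pi)$ satisfies the hypotheses of Lemma \ref{hyper technicke lema} with unequal exponents. Route (c) looks most promising: the identity $z_hg(\mu_b)=h(\mu_b)z_h$ after applying $\pi$ yields an equation of the form required by that lemma for the first ``$a^sbu$'' slot, while a second equation with a \emph{different} $b$-exponent can be extracted from the way the initial $a^p$ of $\mu_a$ interacts with $z_h$ inside a word like $\sigma\mu_a\sigma\mu_b$.

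The main obstacle is the case analysis on length regimes --- whether $|h(a)|\leq|z_h|$ or not, and whether $q\leq k-\ell$, $q=k-\ell$, or $q>k-\ell$. In each regime the relevant prefix comparison takes a different form, and the eventual common-primitive-root argument --- identifying a primitive root $t$ governing both the $b$-side identity $z_hg(b)^{k-\ell}=h(b)^{k-\ell}$ and the $a$-side relation forced by the initial block of $\mu_a$ --- must be adapted accordingly. I expect the hardest regime to be when $|h(a)|>|z_h|$, so that $z_h$ is a proper prefix of $h(a)$ and the transition from $\sigma a^p$ to $\sigma a^pb$ mixes the $a$-structure and the $b$-structure in a nontrivial way.
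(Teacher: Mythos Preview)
Your proposal is an outline, not a proof, and the route you flag as ``most promising'' does not go through.

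The central gap is the intended application of Lemma~\ref{hyper technicke lema}. That lemma requires, for the marked pair $(g',h')=(\normh\circ\pi,\,g\circ\pi)$, an equality $g'(a^sbu)=h'(a^sbu)$ together with a \emph{coincidence} $g'(a^rbv)=h'(a^qbv)$ having the \emph{same tail} $bv$ on both sides but a different exponent. The first equation is available from $\mu_b\sigma\in\E(g,\normh)$ with $s=k$. But you never produce the second: the vague reference to ``the way the initial $a^p$ of $\mu_a$ interacts with $z_h$ inside $\sigma\mu_a\sigma\mu_b$'' does not yield a pair in $\I(g,\normh)$ of the required shape. Every element of $\E(g,\normh)$ beginning with $b$ begins with $\mu_b\sigma$ (by Lemma~\ref{ef revisited}), hence with exactly $b^k$, so no new exponent appears there; and extracting a genuine coincidence with identical tails from the $\mu_a$-side would require precisely the structural control you have not established. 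Smaller issues compound this: your claimed prefix relation $h(a)^ph(b)^q\in\p(z_hg(a)^pg(b)^q)$ is unjustified (the overflow after $a^pb^q$ may already be negative), and the identity you quote should read $z_hg(\mu_b)=h(\mu_b)$, not $h(\mu_b)z_h$.

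The paper's argument is different in two essential respects. First, it does not peel off the leading $a^p$ of $\mu_a$; instead it factors $\mu_a=v_1b^mv_2$ where $v_1$ is the \emph{longest} prefix ending in $a$ whose overflow $u_1=h(v_1)^{-1}z_hg(v_1)$ is still positive. This isolates the unique $b$-block across which the overflow sign flips and makes both border words $u_1,u_2$ suffix-comparable with $g(a)$. Second, the contradiction is obtained not through Lemma~\ref{hyper technicke lema} but through Lemma~\ref{factor}\eqref{factor4}: comparing the two prefixes $u_1g(b)^{m-1}y$ and $xg(b)^{\ell+\ell'-1}y$ of $h(b)^+$ (where $y$ is a conjugate of $h(b)$ and $xg(b)^{\ell'}=h(b)^{\ell'}$) forces $m=\ell+\ell'$ and then $u_1g(b)^{\ell'}$ to commute with $h(b)$. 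A maximal $t$-suffix comparison between $g(\sigma\mu_b)$ and $h(\sigma\mu_b)$ finally produces two nonempty prefixes of $\sigma$ with equal images, contradicting Lemma~\ref{shortest1}. None of these steps is reachable from your decomposition $\mu_a=a^pb^qw$.
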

\begin{proof}
Claim \ref{bb} implies $\p_1(\mu_b)=\s_1(\mu_b)=b$ whence \[\p_1(\mu_a)=\s_1(\mu_a) = a.\] Let $v_1$ be the longest prefix of $\mu_a$ ending with $a$ and satisfying $|z_hg(v_1)|>h(v_1)$. It follows that $\mu_a=v_1b^mv_2$ where $m>0$, $\s_1(v_1)=\p_1(v_2)=a$ and
\begin{align*}
	|z_hg(v_1)|&>|h(v_1)|, &  |g(v_2)|&>|h(v_2)|.
\end{align*}
Denote $u_1=h(v_1)^{-1}z_hg(v_1)$ and $u_2=g(v_2)h(v_2)^{-1}$.
\[
\begin{tikzpicture}
\horuhorbra{50}{90}{$u_1$}
\horuhorbra{140}{200}{$u_2$}
\horA{90}{140}{$g(b)^m$}
\dolA{50}{200}{$h(b)^m$}
\tecky{0}{50} \tecky{200}{250}
\draw (50pt,\radek)--(200pt,\radek);
\horuzel{50}{$g(v_1)$} \horuzel{200}{$g(v_2)$}
\doluzel{25}{$h(v_1)$} \doluzel{225}{$h(v_2)$}
\end{tikzpicture}
\]
From $|h(b)|>|g(b)|$ and $z_hg(b^{k-\ell})=h(b^{k-\ell})$ we obtain
\begin{align}\label{kratkehb}
	|h(b)|\leq|z_hg(b)|.
\end{align}

Let now $y$ be the prefix of $g(ba)$ of length $h(b)$ and let  $x$ be the word such that $xg(b)^{\ell'}=h(b)^{\ell'}$. From \eqref{kratkehb} we deduce that $u_1g(b)^{m-1}y$ is a prefix of $h(b)^+$. Also  $xg(b)^{\ell'+\ell-1}y$ is a prefix of $h(b)^+$. 
Lemma \ref{factor}\eqref{factor4} now implies that $u_1g(b)^{m-1}$ and $xg(b)^{\ell'+\ell-1}$ are suffix comparable.  Since $\iv g$ is marked and both $x$ and $u_1$ are suffix comparable with $g(a)$, we deduce $m=\ell+\ell'$. 


From Lemma \ref{existujeu}, we obtain that $h(b)$ is a prefix of $g(b)^\ell g(a)$ whence the word $u_1g(b)^{\ell'}h(b)$ is a prefix of $h(b)^mz_h$. Lemma \ref{factor}\eqref{factor3} now implies that $u_1g(b)^{\ell'}$ commutes with $h(b)$.

Note that $u_1g(b)^{\ell'}$ is the maximal $h(b)$-suffix of $g(\sigma v_1b^{\ell'})$ and $xg(b)^{\ell'}$ is the maximal $h(b)$-suffix of $g(\sigma \mu_b)$.
Minimality of $\sigma\mu_a$ implies that 
\[u_1g(b^{\ell'})\neq h(b^{\ell'})=xg(b^{\ell'}).\]
 Let $v$ be the longest common suffix of $\sigma\mu_b$ and $\sigma v_1b^{\ell'}$. We apply Lemma \ref{gprefix} to $\iv g$ and obtain that $g(v)$, which is the longest common suffix  of $g(\sigma\mu_b)$ and $g(\sigma v_1b^{\ell'})$, is equal either to $u_1g(b)^{\ell'}$ or to $xg(b)^{\ell'}$. In both cases, $g(v)$ commutes with $h(b)$; let $t$ be their common primitive root. 

Since $\iv g$ is marked, the maximal $t$-suffix of $g(\sigma\mu_b)$ is $g(u)$ where $u$ is the maximal $v$-suffix of $\sigma\mu_b$. Since $\iv h$ is marked, the maximal $t$-suffix of $h(\sigma\mu_b)$ is $h(b^{\ell'})$. Therefore $g(\sigma\mu_b u^{-1} )=h(\sigma \mu_b b^{-\ell'})=h(\sigma b^{-k'})$, where $\sigma\mu_bu^{-1}$ is a prefix of $\sigma$ since $h(\sigma b^{-k'})$ is a prefix of $h(\sigma)$. Hence $(g,h)$ is not a shortest counterexample by Lemma \ref{shortest1}.
\end{proof}

This concludes the proof that there is no counterexample. By Lemma \ref{uprava}, two minimal elements $\alpha$ and $\beta$ of $\E(g,h)$ cannot start with the same letter if $g$ and $h$ are both non-periodic. Clearly, also $\iv g$ and $\iv h$ are non-periodic and $\iv \alpha$, $\iv \beta$ are minimal elements of $\E(\iv g,\iv h)$. Theorem \ref{hlavni veta} is proved.

\section{Test set}\label{test}

In this section we show that each binary language has a test set of cardinality at most two. The result is a consequence of Theorem \ref{char} and Theorem \ref{hlavni veta}.

{\em Test set} of a language $L\subset \Sigma^*$ is defined as a subset $T$ of $L$ such that the agreement of two morphisms on the
language $T$ guarantees their agreement on $L$. Formally, for any two morphisms $g$ and $h$ defined on $\Sigma^*$
$$ (\forall\, u \in T)\ (\,g(u)=h(u)\,)\ \Rightarrow  (\forall\, v \in L)\ (\,g(v)=h(v)\,)\,.$$

The {\em ratio} of a word $u\in A^+$ is denoted by $r(u)$ and defined
by
$$r(u)=\frac{\ \abs u_a}{\ \abs u_b}\ . $$
If $\abs u_b=0$, then $r(u)=\infty.$ A word $u$ is said to be {\em ratio-primitive} if no proper prefix of $u$ has the
same ratio as $u$. 

It is not difficult to see that each nonempty word $u$ has a unique factorization 
$u=u_1\dots u_k$
where each $u_i$ is a nonempty ratio-primitive word such that $r(u_i)=r(u)$. We call it the {\em ratio-primitive factorization} of $u$. Let $\RP(L)$ denote the set of all ratio-primitive words $u$ such that $u$ occurs in the ratio-primitive factorization of at least one word in $L$.

\begin{lem}\label{ratio}
If $\abs{g(a)}\neq \abs{h(a)}$, then $\abs{g(u)}=\abs{h(u)}$, $u\in A^+$, if and only if 
\[r(u)=\frac{\abs{h(b)}-\abs{g(b)}}{\abs{g(a)}-\abs{h(a)}}.\]
If  $\abs{g(a)}= \abs{h(a)}$ and $\abs{g(b)}\neq \abs{h(b)}$, then $\abs{g(u)}=\abs{h(u)}$, $u\in A^+$, if and only if  
$r(u)=\infty$.
\end{lem}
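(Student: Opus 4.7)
The plan is purely arithmetic: expand $|g(u)|$ and $|h(u)|$ as linear functions of $|u|_a$ and $|u|_b$ and rearrange the resulting equation.

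First I would write $|g(u)| = |u|_a\,|g(a)| + |u|_b\,|g(b)|$ and $|h(u)| = |u|_a\,|h(a)| + |u|_b\,|h(b)|$, so that $|g(u)| = |h(u)|$ is equivalent to
\[
|u|_a\bigl(|g(a)| - |h(a)|\bigr) = |u|_b\bigl(|h(b)| - |g(b)|\bigr).
\]
For the first case, suppose $|g(a)| \neq |h(a)|$. I would first observe that $|u|_b = 0$ is impossible when $|g(u)| = |h(u)|$: it would force $|u|_a\bigl(|g(a)| - |h(a)|\bigr) = 0$ and hence $|u|_a = 0$, contradicting $u \in A^+$. Therefore $|u|_b \geq 1$, and dividing the displayed equation by $|u|_b\bigl(|g(a)| - |h(a)|\bigr)$ yields the claimed formula for $r(u)$. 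The converse direction is immediate by multiplying through.

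For the second case, with $|g(a)| = |h(a)|$ and $|g(b)| \neq |h(b)|$, the displayed equation collapses to $|u|_b\bigl(|h(b)| - |g(b)|\bigr) = 0$, which by the assumption $|g(b)| \neq |h(b)|$ is equivalent to $|u|_b = 0$, that is, $r(u) = \infty$.

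There is no real obstacle here; the only minor subtlety is handling the boundary behaviour of $r(u)$ (namely, ruling out $r(u) = \infty$ in the first case and ensuring $r(u)$ is well-defined in the second). Both are dispatched by the one-line observation that $u \in A^+$ forbids $|u|_a = |u|_b = 0$.
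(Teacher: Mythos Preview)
Your proof is correct and follows exactly the same approach as the paper: the paper's proof consists of a single line stating that the claim follows directly from $\abs{g(u)}=|u|_a\cdot\abs{g(a)}+|u|_b\cdot \abs{g(b)}$ and $\abs{h(u)}=|u|_a\cdot\abs{h(a)}+|u|_b\cdot\abs{h(b)}$. You have simply spelled out the arithmetic and the boundary cases in more detail than the paper bothers to.
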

\begin{proof}
Follows directly  from \[\abs{g(u)}=|u|_a\cdot\abs{g(a)}+|u|_b\cdot \abs{g(b)}\quad \text{and}\quad \abs{h(u)}=|u|_a\cdot\abs{h(a)}+|u|_b\cdot\abs{h(b)}.\] 
\end{proof}
An immediate corollary is the following fact.
\begin{lem}\label{staci rozklad}
Binary morphisms $g$ and $h$ agree on $L$ if and only if they agree on $\RP(L)$.
\end{lem}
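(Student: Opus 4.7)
The plan is to treat the two implications separately, leveraging the ratio-primitive factorization together with Lemma \ref{ratio}.

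For the ``if'' direction, the argument is purely structural and does not even need Lemma \ref{ratio}. Any $w\in L$ admits its unique ratio-primitive factorization $w=u_1u_2\cdots u_k$, and by the very definition of $\RP(L)$ each $u_i$ lies in $\RP(L)$. The hypothesis then gives $g(u_i)=h(u_i)$ for every $i$, and the morphism property immediately yields $g(w)=g(u_1)\cdots g(u_k)=h(u_1)\cdots h(u_k)=h(w)$.

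For the ``only if'' direction, fix an arbitrary $u\in\RP(L)$ and choose $w\in L$ whose ratio-primitive factorization $w=u_1\cdots u_k$ contains $u$ as some factor $u_j$. By construction all $u_i$ share the common ratio $r(u_i)=r(w)$. From $g(w)=h(w)$ we get $|g(w)|=|h(w)|$. If $|g(a)|=|h(a)|$ and $|g(b)|=|h(b)|$, then $|g(v)|=|h(v)|$ holds for \emph{every} $v\in A^*$, so certainly $|g(u_i)|=|h(u_i)|$ for each $i$. Otherwise Lemma \ref{ratio} says that $|g(w)|=|h(w)|$ forces $r(w)$ to equal the single distinguished ratio it specifies, and since each $u_i$ shares this ratio we obtain $|g(u_i)|=|h(u_i)|$ for every $i$ by applying Lemma \ref{ratio} once more to each $u_i$.

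Finally, the equality $g(u_1)\cdots g(u_k)=h(u_1)\cdots h(u_k)$ combined with the matching lengths $|g(u_i)|=|h(u_i)|$ allows a left-to-right comparison that forces $g(u_i)=h(u_i)$ for every $i$; in particular $g(u)=h(u)$, and since $u\in\RP(L)$ was arbitrary we are done. There is no real obstacle here: the only thing to watch is the degenerate case $r(w)=\infty$ (when $w\in a^+$) and the equal-length edge case, both of which are absorbed painlessly by Lemma \ref{ratio}.
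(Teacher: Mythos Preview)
Your proof is correct and is precisely the argument the paper has in mind: the paper states this lemma as ``an immediate corollary'' of Lemma~\ref{ratio} with no written proof, and your two directions (reassembling $w$ from its ratio-primitive factors for one implication, and using Lemma~\ref{ratio} to match lengths and then peel off factors for the other) are exactly how that corollary is meant to unfold.
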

Here is one more observation.
\begin{lem}\label{difratio}
If $g(u)=h(u)$ and $g(v)=h(v)$, with $u,v\in A^+$ and $r(u)\neq r(v)$, then $g=h$. 
\end{lem}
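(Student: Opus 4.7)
The plan is to first use Lemma \ref{ratio} (or rather the length equations underlying it) to show that $|g(a)|=|h(a)|$ and $|g(b)|=|h(b)|$, and then to exploit these length equalities together with $g(u)=h(u)$ and $g(v)=h(v)$ to pin down the images of the individual letters.

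For the length step, I would argue as follows. From $g(u)=h(u)$ and $g(v)=h(v)$ we have $|g(u)|=|h(u)|$ and $|g(v)|=|h(v)|$, which rewrite as
\[
|u|_a(|g(a)|-|h(a)|)=|u|_b(|h(b)|-|g(b)|),\qquad
|v|_a(|g(a)|-|h(a)|)=|v|_b(|h(b)|-|g(b)|).
\]
If $|g(a)|\neq |h(a)|$, both equations would force the common value $r(u)=r(v)=(|h(b)|-|g(b)|)/(|g(a)|-|h(a)|)$, contradicting $r(u)\neq r(v)$. Hence $|g(a)|=|h(a)|$. Plugging this back in, at least one of $u,v$ has a $b$ (since $r(u)\neq r(v)$ prevents both from lying in $a^+$), and that equation forces $|g(b)|=|h(b)|$.

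For the letter step, since $|g(a)|=|h(a)|$ and $|g(b)|=|h(b)|$, the factorizations of $g(w)$ and $h(w)$ according to the letters of $w$ align position by position. So whenever $g(w)=h(w)$, we get $g(x)=h(x)$ for every letter $x$ actually occurring in $w$. If either $u$ or $v$ contains both letters, this at once gives $g(a)=h(a)$ and $g(b)=h(b)$, hence $g=h$. Otherwise, as noted, the ratio condition forces one of $u,v$ to lie in $a^+$ and the other in $b^+$, and from $g(a)^n=h(a)^n$ and $g(b)^m=h(b)^m$ (in the free monoid $\Sigma^*$, with $n,m\geq 1$) we again conclude $g(a)=h(a)$ and $g(b)=h(b)$. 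There is no real obstacle here; the argument is straightforward casework once the length equalities are in hand.
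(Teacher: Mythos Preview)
Your proof is correct and follows essentially the same approach as the paper: first derive $|g(a)|=|h(a)|$ and $|g(b)|=|h(b)|$ from Lemma~\ref{ratio} (via the length equations), then use the resulting letter-by-letter alignment to conclude $g=h$. The paper is terser, compressing your final case split into the single observation that $uv$ contains both letters, but the substance is identical.
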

\begin{proof}
Since $r(u)\neq r(v)$, the word $uv$ contains both letters $a$ and $b$. Lemma \ref{ratio} implies $\abs{g(a)}= \abs{h(a)}$ and $\abs{g(b)}=\abs{h(b)}$ whence $g=h$.
\end{proof}
We can now proof the main claim.
\begin{thm}
Let $L\subset A^*$ be a language. Then $L$ possesses a test set of cardinality at most two.
\end{thm}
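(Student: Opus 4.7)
The argument splits into cases by the ratios of elements of $L$. First, if $L$ contains words $u,v$ with $r(u)\neq r(v)$, then $T=\{u,v\}$ suffices: Lemma \ref{difratio} forces $g=h$ whenever $g|_T=h|_T$. Otherwise all nonempty elements of $L$ share a common ratio $r$. When $r\in\{0,\infty\}$ the language $L$ lies in $a^*$ or in $b^*$, and any nonempty $u=x^n\in L$ works as $T=\{u\}$, since $g(x)^n=h(x)^n$ determines $g(x)=h(x)$ in the free monoid.

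The interesting case is $r\in(0,\infty)$. Then $\RP(L)$ is a finite set of ratio-primitive words (all of length equal to numerator plus denominator of $r$ in lowest terms), and by Lemma \ref{staci rozklad} it suffices to choose $T\subseteq L$ forcing agreement on $\RP(L)$. If all elements of $L$ share a single primitive root $\tau$, i.e.\ $L\subseteq\tau^*$, I take $T=\{u\}$ for any nonempty $u=\tau^k\in L$: the equation $g(\tau)^k=h(\tau)^k$ yields $g(\tau)=h(\tau)$, hence agreement on $\tau^*\supseteq L$.

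In the remaining sub-case, $L$ has elements with at least two distinct primitive roots. Let $F=\{a^n b,\,ba^n\}$ if $r=n$ is a positive integer, $F=\{ab^n,\,b^n a\}$ if $r=1/n$, and $F=\emptyset$ otherwise; then $F^*$ is exactly the rank-two shape from the Remark following Theorem \ref{hlavni veta B}. I pick $T=\{u_1,u_2\}\subseteq L$ with (i)~$u_1,u_2$ having distinct primitive roots, and (ii)~if $\RP(L)\not\subseteq F$, then the ratio-primitive factorization of $u_1$ contains a block outside $F$. Both conditions are simultaneously realizable: any non-$F$ block of $\RP(L)$ appears by definition in the factorization of some $u_1\in L$, and the multiple-primitive-roots hypothesis then supplies a $u_2\in L$ whose primitive root differs from that of $u_1$. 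To verify, suppose $g,h$ agree on $T$ and $g\neq h$; applying Lemma \ref{ratio} to some nonempty $u\in T$ of ratio $r$, one obtains $|g(a)|\neq|h(a)|$ and $|g(b)|\neq|h(b)|$. By Theorem \ref{char}, Theorem \ref{hlavni veta} and the Remark, $\E(g,h)$ takes one of three shapes: (a)~$\{\varepsilon\}\cup\{w:r(w)=r\}$; (b)~$\alpha^*$ for some $\alpha$; or (c)~$F'^*$ with $F'$ of the form $\{a^ib,ba^i\}$ or its mirror. Case (a) immediately gives $L\subseteq\E(g,h)$. In case (b), $u_1,u_2\in\alpha^*$ would share the primitive root of $\alpha$, contradicting (i). In case (c), since elements of $\E(g,h)$ have ratio $i$ or $1/i$ while elements of $T$ have ratio $r$, the matching of ratios forces $F'=F$; if $\RP(L)\subseteq F$, then $L\subseteq F^*=\E(g,h)$, and otherwise (ii) places $u_1$ outside $F^*$, contradicting $u_1\in\E(g,h)$.

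The main obstacle lies in case (c) of the verification: without the Remark's precise form $\{a^ib,ba^i\}^*$ for rank-two equality languages, a ``non-standard'' ratio-primitive block occurring in $L$ could potentially be compatible with some other rank-two equality language, preventing the clean contradiction. Theorem \ref{hlavni veta} together with the Remark reduces rank-two possibilities to this single family up to letter renaming, and combined with Lemma \ref{ratio}'s length-matching this pins down $F'=F$ uniquely by the common ratio $r$.
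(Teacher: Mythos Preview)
Your argument is essentially correct, but it contains a minor factual slip and takes a substantially heavier route than the paper.

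The slip: your parenthetical claim that $\RP(L)$ is finite, with all elements of length equal to the numerator plus denominator of $r$ in lowest terms, is false. For instance $L=\{a^nb^n:n\geq 1\}$ has $r=1$ and $\RP(L)=L$ is infinite, since each $a^nb^n$ is ratio-primitive. Fortunately you never actually use this claim, so the logic survives.

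The more important point is the comparison with the paper's proof, which is much shorter and avoids the Remark (the $\{a^ib,ba^i\}^*$ classification from \cite{unique}) entirely. The paper's idea is this: when all words in $L$ share a ratio and $|\RP(L)|\geq 3$, by pigeonhole there exist $u,v\in\RP(L)$ with $\p_1(u)=\p_1(v)$; set $T_\RP=\{u,v\}$. If $g\neq h$ agree on $u$ and $v$, then (via Lemma~\ref{ratio}) $u$ and $v$ are distinct \emph{minimal} elements of $\E(g,h)$ sharing a first letter. Theorem~\ref{hlavni veta} rules out both morphisms non-periodic, and Theorem~\ref{char}\eqref{char2} rules out exactly one periodic; hence both are periodic and Theorem~\ref{char}\eqref{char1} gives $\RP(L)\subseteq\E(g,h)$ immediately. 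One then lifts $T_\RP$ back to a two-element $T\subseteq L$ via Lemma~\ref{staci rozklad}.

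So where you split on primitive roots and invoke the full structure theorem for rank-two equality sets, the paper needs only the ``different first letter'' statement of Theorem~\ref{hlavni veta}. Your detour through the Remark works, but it obscures the point that Theorem~\ref{hlavni veta} by itself---without the later refinement---already suffices for the test-set bound. The paper's choice of $u,v$ with a common first letter is exactly what makes Theorem~\ref{hlavni veta} bite directly.
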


\begin{proof}
If $L$ contains words $u$ and $v$ with different ratios, then $T=\{u,v\}$ is a test set of $L$ by Lemma \ref{difratio}. 

Suppose that all words in $L$ have the same ratio.  
We first find a test set $T_\RP$ of cardinality at most two for $\RP(L)$. If $\RP(L)$ has cardinality at least three, let $T_\RP=\{u,v\}$ where $u,v\in \RP(L)$ and $\p_1(u)= \p_1(v)$. 

Let $g$ and $h$ be morphisms such that $g\neq h$, $g(u)=h(u)$ and $g(v)=h(v)$. Since $u$ and $v$ are ratio-primitive, Lemma \ref{ratio} implies that $u$ and $v$ are minimal elements of $\E(g,h)$. Therefore both morphisms are periodic by Theorem \ref{char}\eqref{char2} and Theorem \ref{hlavni veta}. By Theorem \ref{char}\eqref{char1}, we have $\RP(L)\subseteq \E(g,h)$.

Let now $T$ be a subset of $L$ such that $\RP(T)=T_\RP$. Clearly, $T$ can be chosen such that its cardinality is at most two. Lemma \ref{staci rozklad} concludes the proof.
\end{proof}


\begin{thebibliography}{10}

\bibitem{defaut}
J.~Berstel, D.~Perrin, J.-F. Perrot, and A.~Restivo.
\newblock Sur le th\'eor\`eme du d\'efaut.
\newblock {\em J. Algebra}, 60(1):169--180, 1979.

\bibitem{handbookcombinatorics}
C.~Choffrut and J.~Karhum{\"a}ki.
\newblock Combinatorics of words.
\newblock In G.~Rozenberg and A.~Salomaa, editors, {\em Handbook of formal
  languages, Vol.\ 1}, pages 329--438. Springer, Berlin, 1997.

\bibitem{CuKabinary}
K.~{\v{C}}ul{\'{\i}}k, II and J.~Karhum{\"a}ki.
\newblock On the equality sets for homomorphisms on free monoids with two
  generators.
\newblock {\em RAIRO Inform. Th\'eor.}, 14(4):349--369, 1980.

\bibitem{EKRequality}
A.~Ehrenfeucht, J.~Karhum{\"a}ki, and G.~Rozenberg.
\newblock On binary equality sets and a solution to the test set conjecture in
  the binary case.
\newblock {\em J. Algebra}, 85(1):76--85, 1983.


\bibitem{EKRPCP}
A.~Ehrenfeucht, J.~Karhum{\"a}ki, and G.~Rozenberg.
\newblock The (generalized) {P}ost correspondence problem with lists consisting
  of two words is decidable.
\newblock {\em Theoret. Comput. Sci.}, 21(2):119--144, 1982.

\bibitem{handbookmorphisms}
T.~Harju and J.~Karhum{\"a}ki.
\newblock Morphisms.
\newblock In G.~Rozenberg and A.~Salomaa, editors, {\em Handbook of formal
  languages, Vol.\ 1}, pages 439--510. Springer, Berlin, 1997.

\bibitem{unique}
{\v{S}}.~Holub.
\newblock A unique structure of two-generated binary equality sets.
\newblock In {\em Developments in language theory}, volume 2450 of {\em Lecture
  Notes in Comput. Sci.}, pages 245--257. Springer, Berlin, 2003.

\bibitem{oldlot}
M.~Lothaire.
\newblock {\em Combinatorics on words}, volume~17 of {\em Encyclopedia of
  Mathematics and its Applications}.
\newblock Addison-Wesley Publishing Co., Reading, Mass., 1983.

\bibitem{ArtoEquality}
A.~Salomaa.
\newblock Equality set for homomorphisms of free monoids.
\newblock {\em Acta Cybernet.}, 4(1):127--139, 1978/79.

\bibitem{myphd}
\v{S}t\v{e}p\'{a}n Holub.
\newblock {\em Equations in free monoids}.
\newblock {PhD} thesis, Charles University, Prague, 2000.

\end{thebibliography}
\end{document}